\definecolor{shadecolor}{rgb}{0.85,0.90,0.95}
\newtheorem{definition}{Definition}
\newtheorem{proposition}{Proposition}
\newtheorem{lemma}[proposition]{Lemma}
\newtheorem{theorem}[proposition]{Theorem}
\def\squareforqed{\hbox{\rlap{$\sqcap$}$\sqcup$}}
\def\qed{\ifmmode\squareforqed\else{\unskip\nobreak\hfil
\penalty50\hskip1em\null\nobreak\hfil\squareforqed
\parfillskip=0pt\finalhyphendemerits=0\endgraf}\fi}
\def\endenv{\ifmmode\;\else{\unskip\nobreak\hfil
\penalty50\hskip1em\null\nobreak\hfil\;
\parfillskip=0pt\finalhyphendemerits=0\endgraf}\fi}
\newenvironment{proof}{\noindent \textbf{{Proof~} }}{\hfill $\blacksquare$}
\newcounter{remark}
\newenvironment{remark}[1][]{\refstepcounter{remark}\par\medskip\noindent%
\textbf{Remark~\theremark #1} }{\medskip}
\newcounter{example}
\mathchardef\ordinarycolon\mathcode`\:
\def\vcentcolon{\mathrel{\mathop\ordinarycolon}}
\newmdenv[skipabove=7pt,
skipbelow=7pt,
backgroundcolor=darkblue!15,
innerleftmargin=5pt,
innerrightmargin=5pt,
innertopmargin=5pt,
leftmargin=0cm,
rightmargin=0cm,
innerbottommargin=5pt,
linewidth=1pt]{tBox}
\newmdenv[skipabove=7pt,
skipbelow=7pt,
backgroundcolor=darkred!15,
innerleftmargin=5pt,
innerrightmargin=5pt,
innertopmargin=5pt,
leftmargin=0cm,
rightmargin=0cm,
innerbottommargin=5pt,
linewidth=1pt]{rBox}
\newmdenv[skipabove=7pt,
skipbelow=7pt,
backgroundcolor=blue2!25,
innerleftmargin=5pt,
innerrightmargin=5pt,
innertopmargin=5pt,
leftmargin=0cm,
rightmargin=0cm,
innerbottommargin=5pt,
linewidth=1pt]{dBox}
\newmdenv[skipabove=7pt,
skipbelow=7pt,
backgroundcolor=darkkblue!15,
innerleftmargin=5pt,
innerrightmargin=5pt,
innertopmargin=5pt,
leftmargin=0cm,
rightmargin=0cm,
innerbottommargin=5pt,
linewidth=1pt]{sBox}
\definecolor{darkblue}{RGB}{0,76,156}
\definecolor{darkkblue}{RGB}{0,0,153}
\definecolor{blue2}{RGB}{102,178,255}
\definecolor{darkred}{RGB}{195,0,0}
\newcommand{\nc}{\newcommand}
\nc{\rnc}{\renewcommand}
\nc{\lbar}[1]{\overline{#1}}
\nc{\bra}[1]{\langle#1|}
\nc{\ket}[1]{|#1\rangle}
\nc{\ketbra}[2]{|#1\rangle\!\langle#2|}
\nc{\braket}[2]{\langle#1|#2\rangle}
\nc{\proj}[1]{| #1\rangle\!\langle #1 |}
\nc{\avg}[1]{\langle#1\rangle}
\nc{\rank}{\operatorname{Rank}}
\nc{\smfrac}[2]{\mbox{$\frac{#1}{#2}$}}
\nc{\tr}{\operatorname{Tr}}
\nc{\ox}{\otimes}
\nc{\dg}{\dagger}
\nc{\dn}{\downarrow}
\nc{\cA}{{\cal A}}
\nc{\cB}{{\cal B}}
\nc{\cC}{{\cal C}}
\nc{\cD}{{\cal D}}
\nc{\cE}{{\cal E}}
\nc{\cF}{{\cal F}}
\nc{\cG}{{\cal G}}
\nc{\cH}{{\cal H}}
\nc{\cI}{{\cal I}}
\nc{\cJ}{{\cal J}}
\nc{\cK}{{\cal K}}
\nc{\cL}{{\cal L}}
\nc{\cM}{{\cal M}}
\nc{\cN}{{\cal N}}
\nc{\cO}{{\cal O}}
\nc{\cP}{{\cal P}}
\nc{\cQ}{{\cal Q}}
\nc{\cR}{{\cal R}}
\nc{\cS}{{\cal S}}
\nc{\cT}{{\cal T}}
\nc{\cU}{{\cal U}}
\nc{\cV}{{\cal V}}
\nc{\cX}{{\cal X}}
\nc{\cY}{{\cal Y}}
\nc{\cZ}{{\cal Z}}
\nc{\cW}{{\cal W}}
\nc{\csupp}{{\operatorname{csupp}}}
\nc{\qsupp}{{\operatorname{qsupp}}}
\nc{\var}{{\operatorname{var}}}
\nc{\rar}{\rightarrow}
\nc{\lrar}{\longrightarrow}
\nc{\polylog}{{\operatorname{polylog}}}
\nc{\wt}{{\operatorname{wt}}}
\nc{\av}[1]{{\left\langle {#1} \right\rangle}}
\nc{\supp}{{\operatorname{supp}}}
\nc{\argmin}{{\operatorname{argmin}}}
\def\x{\xi}
\nc{\RR}{{{\mathbb R}}}
\nc{\CC}{{{\mathbb C}}}
\nc{\FF}{{{\mathbb F}}}
\nc{\NN}{{{\mathbb N}}}
\nc{\ZZ}{{{\mathbb Z}}}
\nc{\PP}{{{\mathbb P}}}
\nc{\QQ}{{{\mathbb Q}}}
\nc{\UU}{{{\mathbb U}}}
\nc{\EE}{{{\mathbb E}}}
\nc{\id}{{\operatorname{id}}}
\nc{\CHSH}{{\operatorname{CHSH}}}
\newcommand{\Op}{\operatorname}
\nc{\be}{\begin{equation}}
\nc{\ee}{{\end{equation}}}
\nc{\bea}{\begin{eqnarray}}
\nc{\eea}{\end{eqnarray}}
\nc{\rU}{\mbox{U}}
\nc{\ob}[1]{#1}
\nc{\OLOCC}{{\text{1-LOCC}}}
\nc{\SEP}{{\text{SEP}}}
\nc{\NS}{{\text{NS}}}
\nc{\LOCC}{{\text{LOCC}}}
\nc{\PPT}{{\text{PPT}}}
\nc{\EXT}{{\text{EXT}}}
\nc{\Sym}{{\operatorname{Sym}}}
\nc{\ERLO}{{E_{{ R,LO}}}}
\nc{\ERLOCC}{{E_{{R,\text{PPT}}}}}
\nc{\ERPPT}{{E_{{R,\text{PPT}}}}}
\nc{\ERPPTinf}{{E^{\infty}_{{R,\text{PPT}}}}}
\nc{\ER}{E_{\rm R}}
\nc{\ERLOCCinfty}{{E^{\infty}_{{r,LOCC}}}}
\nc{\Aram}{{\operatorname{\sf A}}}
\nc{\ECPPT}{{E_{{C,\text{PPT}}}}}
\nc{\EDPPT}{{E_{{D,\text{PPT}}}}}
\nc{\Freek}{{\text{PPT$_k$}}}
\nc{\Freesec}{{\text{PPT$_2$}}}
\nc{\NB}{N}
\nc{\LB}{LN}
\nc{\NPT}{{\text{NPT}}}
\def\grd@save@target#1{%
  \def\grd@target{#1}}
\def\grd@save@start#1{%
  \def\grd@start{#1}}
\tikzset{
  grid with coordinates/.style={
    to path={%
      \pgfextra{%
        \edef\grd@@target{(\tikztotarget)}%
        \tikz@scan@one@point\grd@save@target\grd@@target\relax
        \edef\grd@@start{(\tikztostart)}%
        \tikz@scan@one@point\grd@save@start\grd@@start\relax
        \draw[minor help lines,magenta] (\tikztostart) grid (\tikztotarget);
        \draw[major help lines] (\tikztostart) grid (\tikztotarget);
        \grd@start
        \pgfmathsetmacro{\grd@xa}{\the\pgf@x/1cm}
        \pgfmathsetmacro{\grd@ya}{\the\pgf@y/1cm}
        \grd@target
        \pgfmathsetmacro{\grd@xb}{\the\pgf@x/1cm}
        \pgfmathsetmacro{\grd@yb}{\the\pgf@y/1cm}
        \pgfmathsetmacro{\grd@xc}{\grd@xa + \pgfkeysvalueof{/tikz/grid with coordinates/major step}}
        \pgfmathsetmacro{\grd@yc}{\grd@ya + \pgfkeysvalueof{/tikz/grid with coordinates/major step}}
        \foreach \x in {\grd@xa,\grd@xc,...,\grd@xb}
        \node[anchor=north] at (\x,\grd@ya) {\pgfmathprintnumber{\x}};
        \foreach \y in {\grd@ya,\grd@yc,...,\grd@yb}
        \node[anchor=east] at (\grd@xa,\y) {\pgfmathprintnumber{\y}};
      }
    }
  },
  minor help lines/.style={
    help lines,
    step=\pgfkeysvalueof{/tikz/grid with coordinates/minor step}
  },
  major help lines/.style={
    help lines,
    line width=\pgfkeysvalueof{/tikz/grid with coordinates/major line width},
    step=\pgfkeysvalueof{/tikz/grid with coordinates/major step}
  },
  grid with coordinates/.cd,
  minor step/.initial=.2,
  major step/.initial=1,
  major line width/.initial=2pt,
}
\def\problem@s{}
\newcounter{problems@cnt}
\newcommand{\allproblems}{\problem@s}
\pgfplotsset{compat=1.18} 
\begin{document}
\title{Computable and Faithful Lower Bound on Entanglement Cost}

\author{Xin Wang}
\email{felixxinwang@hkust-gz.edu.cn}

\author{Mingrui Jing}
\email{mjing638@connect.hkust-gz.edu.cn}

\author{Chengkai Zhu}
\email{czhu696@connect.hkust-gz.edu.cn}
\affiliation{Thrust of Artificial Intelligence, Information Hub,\\
The Hong Kong University of Science and Technology (Guangzhou), Guangzhou 511453, China}

\begin{abstract}
Quantifying the minimum entanglement needed to prepare quantum states and implement quantum processes is a key challenge in quantum information theory. In this work, we develop computable and faithful lower bounds on the entanglement cost under quantum operations that completely preserve the positivity of partial transpose (PPT operations), by introducing the generalized divergence of $k$-negativity, a generalization of logarithmic negativity. Our bounds are efficiently computable via semidefinite programming and provide non-trivial values for all states that are non-PPT (NPT), establishing their faithfulness for the resource theory of NPT entanglement. Notably, we find and affirm the irreversibility of asymptotic entanglement manipulation under PPT operations for full-rank entangled states. Furthermore, we extend our methodology to derive lower bounds on the entanglement cost of both point-to-point and bipartite quantum channels. Our bound demonstrates improvements over previously known computable bounds for a wide range of quantum states and channels. These findings push the boundaries of understanding the structure of entanglement and the fundamental limits of entanglement manipulation.
\end{abstract}

\maketitle
\emph{Introduction.}---
Quantum entanglement, a phenomenon at the heart of quantum mechanics, has been widely recognized as a valuable resource for performing tasks impossible with classical systems alone~\cite{Horodecki2009}. The correlation between entangled quantum systems has led to various applications in quantum computation~\cite{Brus2011,Jozsa2003a}, communication~\cite{Bennett1999}, sensing~\cite{Degen2017}, and cryptography~\cite{Ekert1991}. Developing a comprehensive theory of entanglement quantification is a priority in quantum information~\cite{Wilde2017book,Watrous2011b,Hayashi2017b} and quantum resource theory~\cite{Chitambar2018}.

To harness the full potential of quantum entanglement in various applications, one crucial aspect is quantifying the minimum amount of entanglement needed to prepare a given target state~\cite{Bennett1996b,Hayden2001} or quantum operation~\cite{Berta2015e}. This plays a critical role in optimizing the efficiency of quantum computation and communication. For example, the quantum teleportation protocol ~\cite{Bennett1993} uses one ebit (entanglement bit), measured by the maximally entangled state (MES)~\cite{Gisin1998bell}, to simulate an identity channel that perfectly transfers any quantum state between two parties. Further, efficiently estimating entanglement cost is extremely important for understanding the quantumness of implementing quantum algorithms~\cite{Chen2022much}, performing quantum error correction~\cite{Brun2006correcting}, enhancing quantum sensing~\cite{Xia2023entanglement}, and simulating quantum information processing~\cite{Wang2020c}. Therefore, a computable and faithful estimation method is a commonly pursued endeavor in the field of entanglement resource theory.

\begin{figure}[t]
    \centering
    \includegraphics[width=1.0\linewidth]{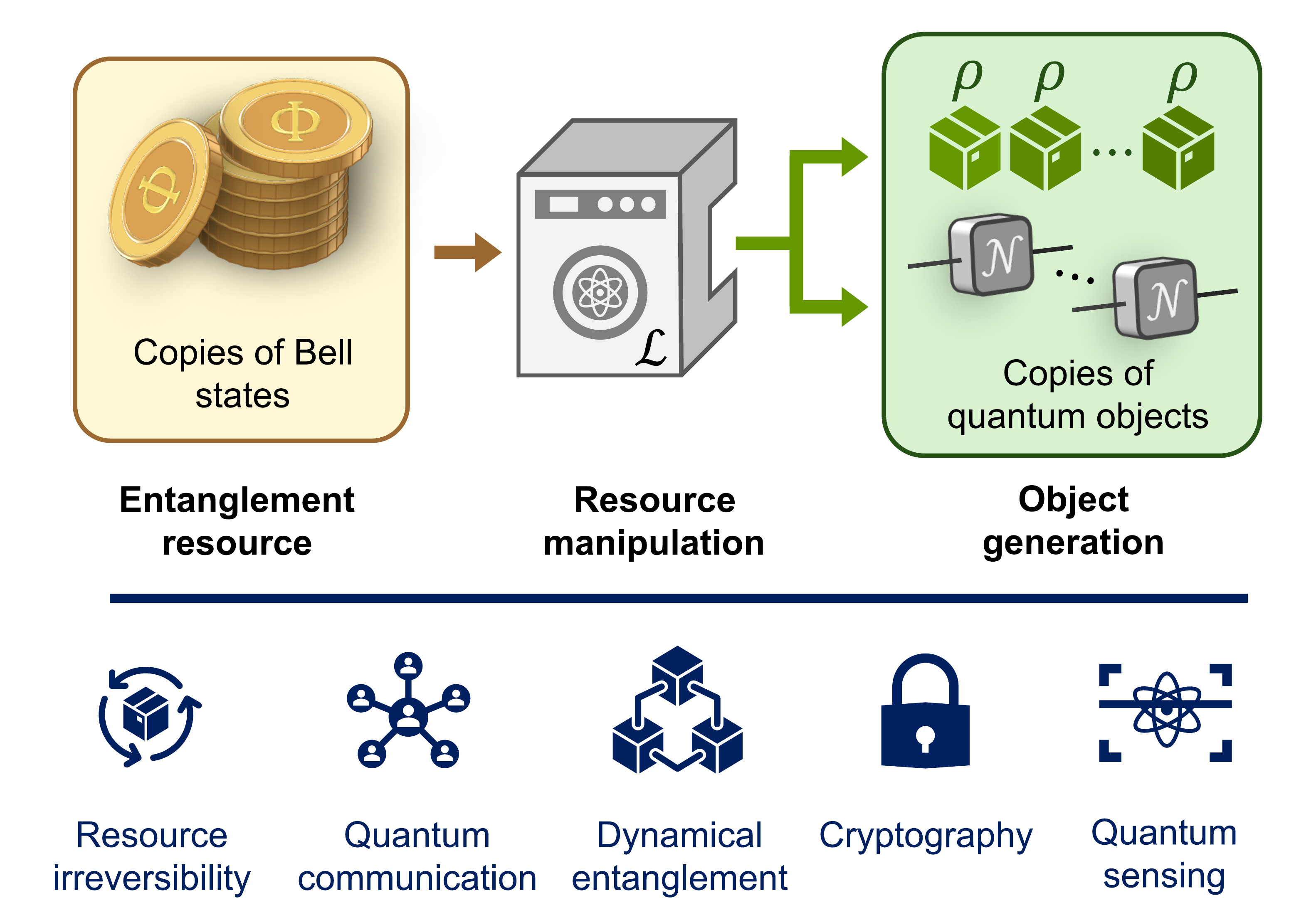}
    \caption{General framework of entanglement cost. Given access to copies of ebits, these golden `coins' are utilized to realize a target quantum state $\rho$ or quantum channel $\cN$ through a process that cannot generate any maximally entangled states (MES) itself. The minimal rate of MES required to achieve this process perfectly is called the entanglement cost of the specific object. This concept plays a crucial role in quantifying entanglement in the studies of resource irreversibility, quantum communication, dynamical entanglement, cryptography, and quantum sensing.
    }
    \label{fig:fig_ent_dist_dilu}
\end{figure}

For the case of state preparation, despite the known challenges in evaluating the entanglement cost via regularizing the entanglement of formation~\cite{Wootters1998entanglement,Winter2016tight,Hastings2009}, numerous attempts have yielded to get valuable bounds. The squashed entanglement~\cite{Christandl2004} is a faithful lower bound, yet it is hardly computable. Two other semidefinite programming (SDP) lower bounds~\cite{Wang2016d,Lami2023a} are not faithful, thus can not estimate the entanglement cost of general quantum states precisely. Only very few instances of states were nearly solved due to their special symmetries~\cite{Audenaert2002}. Beyond that, little was known about estimating the entanglement cost of dynamic quantum processes, which requires both efficient and accurate solutions to the static entanglement cost. Very few quantum channels were investigated with their dynamical cost estimations~\cite{Wilde2018,Gour2020}. The faithful bound of the cost to implement general quantum channels or even bipartite channels can be extremely difficult~\cite{Berta2015e,Wilde2018}.

In this paper, we develop the first efficiently computable and faithful lower bound on the entanglement cost under quantum operations that completely preserve the positivity of partial transpose (PPT operations)~\cite{Audenaert2003,Moor2008,Wang2016d}. Technically, we introduce a sequential classification of quantum sub-states called the $k$-hierarchy of positive partial transpose ($\PPT_k$) and combine it with the quantum R{\'{e}}nyi divergence to arrive at an entanglement quantifier for any bipartite state. This is the first SDP lower bound on the entanglement cost that is also faithful for the NPT entanglement theory. The advantages of our bounds are shown by comparing them with previously known ones on random states. Notably, we demonstrate that entanglement manipulation could be irreversible for fully non-degenerate states asymptotically, a phenomenon not previously observed. We further apply our methodology to the dynamical entanglement cost of both point-to-point and bipartite quantum channels, improving the previous lower bound on the entanglement cost of the Werner-Holevo channel for specific dimensions and providing computable lower bounds on the entanglement cost for more general bipartite quantum channels such as noisy Hamiltonian evolutions.

\vspace{2mm}
\emph{Static entanglement cost.}---
Let $\cH$ be a finite-dimensional Hilbert space. Consider two parties, Alice and Bob, with associated Hilbert spaces $\cH_A$ and $\cH_B$, respectively, where the dimensions of $\cH_A$ and $\cH_B$ are denoted as $d_A$ and $d_B$. We denote the set of all linear operators on system $A$ as $\mathscr{L}(\cH_A)$ and the set of all density operators as $\mathscr{D}(\cH_A)$. The maximally entangled state of dimension $d\otimes d$ is denoted as $\Phi_{AB}^+(d)=1/d\sum_{i,j=0}^{d-1} \ketbra{ii}{jj}$. The trace norm of $\rho$ is denoted as $\|\rho\|_1 = \tr(\sqrt{\rho^\dagger \rho})$. 
A bipartite quantum state $\rho_{AB}\in \mathscr{D}(\cH_A\ox\cH_B)$ is called a PPT state if it admits positive partial transpose, i.e., $\rho_{AB}^{T_B}\geq 0$ where $T_B$ denotes taking partial transpose over the system $B$. Note that any state that is not PPT is called NPT, which will be a resource in the resource theory of NPT entanglement~\cite{Gour2020}. A quantum channel $\cN_{A\rightarrow B}$ is a linear transformation from $\mathscr{L}(\cH_A)$ to $\mathscr{L}(\cH_B)$ that is completely positive and trace preserving (CPTP).

In the task of entanglement dilution, Alice and Bob share a large supply of Bell pairs and try to convert $rn$ Bell pairs to $n$ high fidelity copies of the desired state $\rho_{AB}^{\ox n}$ using suitable operations. The \emph{entanglement cost} $E_{C,\Omega}$ of a given bipartite state $\rho_{AB}$ quantifies the optimal rate $r$ of converting ${rn}$  Bell pairs to $\rho_{AB}^{\ox n}$ with arbitrarily high fidelity in the limit of large $n$. The concise definition of the entanglement cost using $\Omega$ operations is given as follows:
\begin{equation*}
E_{C,\Omega}(\rho_{AB})=\inf\{r: \lim_{n \to \infty} \inf_{\Lambda\in \Omega}  \|\rho_{AB}^{\ox n}-\Lambda (\Phi^+_{AB}(2^{rn}))\|_1=0\},
\end{equation*}
where $\Omega\in\{\OLOCC,\LOCC,\SEP,\PPT\}$ and we write $E_{C,\LOCC}(\cdot)=E_C(\cdot)$ for simplification. For the case with an exact transformation rather than the vanishing error, the rate refers to the exact entanglement cost~\cite{Audenaert2003,Wang2020c}.

To advance our understanding of static entanglement, we draw inspiration from the Rains set $\PPT' = \{\sigma_{AB}: \sigma_{AB} \succeq 0, \|\sigma_{AB}^{T_B}\|_1 \leq 1 \}$, initially proposed to estimate the distillable entanglement~\cite{Rains1999a} (see, e.g.,~\cite{Christandl2004,Leditzky2017,Wang2016m,Wang2016c,Fang2017,Kaur2018} for other methods). We generalize the set to a sequence of sub-state sets $\Freek (k\geq 2)$ as
\begin{equation*}
\begin{aligned}
    \Freek \coloneqq& \Big\{\omega_1 \in\mathscr{L}(\cH_{AB}): \omega_1\succeq 0,~\exists \{\omega_i\}_{i=2}^{k},~\mathrm{s.t.}\\
    &|\omega_i^{T_B}|_* \preceq \omega_{i+1}, \forall i\in [1:k-1],~\big\|\omega_{k}^{T_B}\big\|_1\le 1\Big\},
\end{aligned}
\end{equation*}
where $|X|_{*} \preceq Y$ denotes $-Y \preceq X\preceq Y$.
In particular, $\PPT_1$ reduces to the Rains set $\PPT'$. {Moreover, $\PPT_k$ is related to the quantity $\chi_p$ developed in~\cite[Eq.~(6)]{Lami2025} for the computable zero-error PPT entanglement cost, i.e., $\PPT_k = \{\omega_{AB} \succeq 0: \chi_{k-1}(\omega_{AB}) \leq 1\}$.} By construction, any sub-state in $\PPT_{k+1}$ can be shown as an element in $\PPT_{k}$, and the Rains set $\PPT'$ serves as the largest boundary set within this hierarchical structure,
{i.e., $ \PPT\subseteq \PPT_k \subseteq \cdots \subseteq \PPT_2 \subseteq \PPT'$.}
Based on the hierarchy, we introduce the \textit{logarithmic fidelity of binegativity} of any bipartite quantum state $\rho_{AB}$ as follows.
\begin{equation}
   E_{\NB,2}^{1/2}(\rho_{AB}) = - \log \max_{\sigma_{AB}\in\Freesec} F(\rho_{AB},\sigma_{AB}),
\end{equation}
where $F(\rho, \sigma)$ is the fidelity between $\rho$ and $\sigma$. We call it logarithmic fidelity of binegativity because $\PPT_2$ has a similar spirit of binegativity by recursively taking the partial transpose and finally utilizing the negativity (see Appendix~\ref{appendix:properties_renyi_divergence_k_negativity} for the generalized divergence of $k$-negativity and our alternative SDP bound). Our first main result is that the logarithmic fidelity of binegativity actually gives a faithful and efficiently computable lower bound on the entanglement cost under PPT operations.
\begin{theorem}[Lower bound on the static entanglement cost]\label{thm:min_thm}
For any bipartite state $\rho_{AB}\in \mathscr{D}(\cH_{AB})$, its entanglement cost is lower bounded by
\begin{equation}
    E_{C}(\rho_{AB}) \geq E_{C,\PPT}(\rho_{AB}) \geq E_{\NB, 2}^{1/2}(\rho_{AB}).
\end{equation}
\end{theorem}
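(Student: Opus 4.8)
\emph{Proof proposal.} The first inequality is immediate from $\LOCC\subseteq\PPT$: enlarging the class of operations allowed in the dilution protocol can only decrease the achievable rate. For the second inequality the plan is to route through the regularized relative entropy of PPT entanglement, proving
\[
E_{C,\PPT}(\rho_{AB})\ \ge\ \ERPPTinf(\rho_{AB})\ \ge\ E_{\NB,2}^{1/2}(\rho_{AB}),
\]
where $\ERPPT(\rho)=\min_{\sigma\in\PPT}D(\rho\|\sigma)$ and $\ERPPTinf(\rho)=\lim_n\tfrac1n\ERPPT(\rho^{\ox n})$. The first bound is the standard ``entanglement monotone $\Rightarrow$ dilution-cost lower bound'' argument; the second is a single-copy comparison of quantum R\'enyi divergences together with superadditivity of $E_{\NB,2}^{1/2}$, which carries the real content.

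For $E_{C,\PPT}\ge\ERPPTinf$ I would verify that $\ERPPT$ is: (i) \emph{monotone under PPT operations}, because a PPT channel $\Lambda$ maps PPT states to PPT states---$(\Lambda(\sigma))^{T_B}=(T_B\circ\Lambda\circ T_B)(\sigma^{T_B})$ with $T_B\circ\Lambda\circ T_B$ completely positive---and $D$ obeys data processing; (ii) \emph{normalized}, $\ERPPT(\Phi^+_{AB}(2^m))\le m$, witnessed by the classically-correlated separable state $\tfrac1d\sum_i\proj{ii}$, whose relative entropy with $\Phi^+_{AB}$ equals $\log d=m$; and (iii) \emph{asymptotically continuous}, $|\ERPPT(\rho)-\ERPPT(\tau)|\le\eps\log(d_Ad_B)+g(\eps)$ when $\tfrac12\|\rho-\tau\|_1\le\eps$, with $g(\eps)\to0$ (the free set contains the full-rank maximally mixed state, so a Donald--Horodecki/Winter-type continuity bound applies). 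Then, for any achievable PPT dilution rate $r$---realized by maps $\Lambda_n$ with $\|\Lambda_n(\Phi^+_{AB}(2^{rn}))-\rho^{\ox n}\|_1=\eps_n\to0$---items (i)+(ii) give $\ERPPT(\Lambda_n(\Phi^+(2^{rn})))\le rn$, and (iii) gives $\ERPPT(\rho^{\ox n})\le rn+\eps_n\,n\log(d_Ad_B)+g(\eps_n)$; dividing by $n$, letting $n\to\infty$, and taking the infimum over $r$ yields $E_{C,\PPT}(\rho)\ge\ERPPTinf(\rho)$.

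For $\ERPPTinf\ge E_{\NB,2}^{1/2}$, fix $n$. The quantum R\'enyi divergence $D_\alpha$ is monotone nondecreasing in $\alpha$, with $D_1=D$ (Umegaki) and $D_{1/2}(\rho\|\sigma)=-\log F(\rho,\sigma)$ where $F(\rho,\sigma)=\|\sqrt\rho\sqrt\sigma\|_1^2$; hence $D(\rho^{\ox n}\|\sigma)\ge-\log F(\rho^{\ox n},\sigma)$ for every $\sigma$, and since $\PPT\subseteq\Freesec$ the minimum of $-\log F(\rho^{\ox n},\cdot)$ over the larger set $\Freesec$ is no larger. Therefore
\[
\ERPPT(\rho^{\ox n})\ \ge\ -\log\!\max_{\sigma\in\Freesec}F(\rho^{\ox n},\sigma)\ =\ E_{\NB,2}^{1/2}(\rho^{\ox n})\ \ge\ n\,E_{\NB,2}^{1/2}(\rho),
\]
the last inequality being superadditivity of the logarithmic fidelity of binegativity; dividing by $n$ and letting $n\to\infty$ closes the chain. (With the chosen fidelity convention, $D_{1/2}(\rho\|\sigma)=-\log F(\rho,\sigma)$, and this is also the normalization making $E_{\NB,2}^{1/2}(\Phi^+(2^m))\le m$, via the same $\tfrac1d\sum_i\proj{ii}$.)

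The step I expect to be the main obstacle is this superadditivity, $E_{\NB,2}^{1/2}(\xi\ox\rho)\ge E_{\NB,2}^{1/2}(\xi)+E_{\NB,2}^{1/2}(\rho)$---equivalently submultiplicativity of $G(\rho)\coloneqq\max_{\sigma\in\Freesec}\|\sqrt\rho\sqrt\sigma\|_1$, since $E_{\NB,2}^{1/2}=-2\log G$. The reverse, $G(\xi\ox\rho)\ge G(\xi)G(\rho)$, is the easy half: $\Freesec$ is closed under tensor products---using $|X\ox Y|=|X|\ox|Y|$ for Hermitian $X,Y$ together with multiplicativity of $\|\cdot\|_1$, so that $\omega_2\ox\omega_2'$ is a valid witness for $\omega_1\ox\omega_1'$---and the trace norm is multiplicative, so tensoring the optimizers is feasible. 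For $G(\xi\ox\rho)\le G(\xi)G(\rho)$ I would write $G$ as a semidefinite program via Watrous's SDP for root fidelity (the $\Freesec$ constraints $\sigma\succeq0$, $-\omega_2\preceq\sigma^{T_B}\preceq\omega_2$, $\|\omega_2^{T_B}\|_1\le1$ being SDP-representable), pass to the dual, and check that the tensor product of dual-feasible points for $\xi$ and for $\rho$ is dual-feasible for $\xi\ox\rho$ with multiplicative objective value---this goes through precisely because positivity, partial transposition, and the trace norm all tensorize. The remaining pieces ($\PPT\subseteq\Freesec$, $\alpha$-monotonicity of $D_\alpha$, SDP-representability of $\Freesec$) are routine; some care is needed only in phrasing monotonicity and tensor-closedness for the sub-normalized operators appearing in $\Freesec$.
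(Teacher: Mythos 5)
Your proposal is correct and follows essentially the same route as the paper: the chain $E_{C,\PPT}\ge \ERPPTinf \ge E_{\NB,2}^{1/2}$ via monotonicity of the sandwiched R\'enyi divergence in $\alpha$, the inclusion $\PPT\subseteq\Freesec$, and tensorization of $E_{\NB,2}^{1/2}$ established through the primal/dual SDP for the fidelity (the paper proves full additivity in its appendix by exactly the dual-tensoring argument you outline, though only the superadditive direction is needed here). The only cosmetic caveat is that the tensor-closedness of $\Freesec$ should be argued for the operator-interval constraint $-\omega\preceq\sigma^{T_B}\preceq\omega$ directly (as the paper does, by multiplying positive operator intervals) rather than via $|X\ox Y|=|X|\ox|Y|$, but this does not affect the validity of the approach.
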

\textit{Sketch of proof.} Since $\alpha \mapsto D_\alpha (\rho_{AB}||\sigma_{AB})$ for given $\rho_{AB}$ and $\sigma_{AB}$ is a monotonically non-decreasing function for {the sandwiched R{\'{e}}nyi relative entropy} in $\alpha$~\cite{Muller_Lennert2013,Wilde2014a}, 
we have
\begin{equation*}
    D(\rho_{AB}||\sigma_{AB}) \geq D_{1/2}(\rho_{AB}||\sigma_{AB}) = -\log F(\rho_{AB}, \sigma_{AB}).
\end{equation*}
Notice that $\PPT\subsetneq \Freesec$, we have 
\begin{equation*}
    E_{R,\PPT}(\rho_{AB}) \geq - \max_{\sigma_{AB}\in\PPT} \log F(\rho_{AB},\sigma_{AB}) \geq E_{\NB,2}^{1/2}(\rho_{AB}),
\end{equation*}
where $E_{R,\PPT}(\rho_{AB})$ is the PPT relative entropy of entanglement~\cite{Audenaert2002,Plenio2007}. Consequently, we have
\begin{equation}\label{Eq:EFPPT_onecopy}
\begin{aligned}
    E_{C,\PPT}(\rho_{AB}) &\geq \lim_{n\rightarrow \infty} \frac{1}{n}E_{R}(\rho_{AB}^{\ox n})\\
    &\geq \lim_{n\rightarrow \infty} \frac{1}{n} E_{\NB,2}^{1/2}(\rho_{AB}^{\ox n}) = E_{\NB,2}^{1/2}(\rho_{AB}),
\end{aligned}
\end{equation}
where the first inequality follows from Ref.~\cite[p. 421]{Hayashi2006a} and the equality is a consequence of the additivity of $E_{\NB,2}^{1/2}(\cdot)$ proved in Appendix~\ref{appendix:properties_renyi_divergence_k_negativity}.

To the best of our knowledge, the entanglement cost lower bound given in Theorem~\ref{thm:min_thm} is the first faithful lower bound in the frame of NPT entanglement theory, and can also be efficiently computed via SDP. Notably, it establishes a vital connection between the entanglement cost of a bipartite state $\rho_{AB}$ and its distance to the sub-state set $\PPT_2$. In essence, states that exhibit a considerable separation from any sub-state with negative logarithmic binegativity are associated with a correspondingly higher entanglement cost. More details and proofs can be found in Appendix~\ref{appendix:properties_renyi_divergence_k_negativity}.

We compared our bound with previous known ones such as the bound $E_{\eta}$ introduced in Ref.~\cite{Wang2016d}, and the \textit{tempered negativity} $E_{N}^\tau$~\cite{Lami2023a}. We generate random bipartite states of fixed rank with respect to the Hilbert-Schmidt measure~\cite{_yczkowski_2011} and compare the corresponding values of $E_{\eta}$ and $E_{N}^\tau$ with $E_{\NB,2}^{1/2}$. In Fig.~\ref{fig:fixrank_sampling}, it can be observed that higher-rank states tend to have diminishing values of $E_{\eta}$ and $E_{N}^\tau$, and $E_{\NB,2}^{1/2}$ could provider tighter estimation in most of the cases. More comparisons among our bound and existing bounds have been illustrated in Appendix~\ref{appendix:comp_to_other_computable_bounds}.

\begin{figure}
    \centering
    \includegraphics[width=1\linewidth]{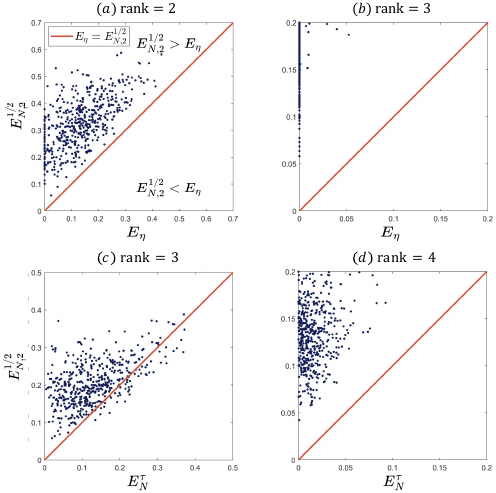}
    \caption{(a-b) Compare $E_{\NB,2}^{1/2}$ with $E_{\eta}$ and (c-d) Compare $E_{\NB,2}^{1/2}$ with $E_{N}^\tau$. Each dot corresponds to one of $500$ randomly generated states according to the Hilbert-Schmidt measure. The red line indicates states $\rho_{AB}$ for which two measures give the same bound value. The dots above the red line (resp. below) indicate the states for which $E_{\NB,2}^{1/2}$ is tighter (resp. $E_{\NB,2}^{1/2}$ is looser).}
    \label{fig:fixrank_sampling}
\end{figure}

Our analysis particularly highlights the effectiveness of the bound $E_{\NB,2}^{1/2}$ in evaluating the entanglement cost of general noisy states, which are not adequately addressed by $E_{\eta}$. For any full-rank, NPT bipartite state $\rho_{AB}$, our findings show that $E_{\NB,2}^{1/2}(\rho_{AB})$ is always greater than $E_{\eta}(\rho_{AB})$, primarily because a full-rank state make $E_{\eta}(\rho_{AB})$ exactly zero. This also demonstrates the faithfulness of our new bounds $E_{\NB, k}$, as they remain positive for all full-rank NPT states.

As a notable application, in the following, we apply our bounds to affirm the irreversibility of asymptotic entanglement manipulation under PPT operations for full-rank quantum states. The reversibility of entanglement manipulation has been an essential topic in quantum information theory. The asymptotic entanglement manipulation is reversible for pure states~\cite{Bennett1996b}, which makes the entanglement entropy a unique entanglement measure of pure states~\cite{Plenio2007,Horodecki2000limits}. However, the manipulation is inherently irreversible for general mixed states under LOCC or PPT operations~\cite{Vidal2001,Vidal2002b,Vollbrecht2004,Cornelio2011,Yang2005,Wang2016d,Lami2023a}, unlike the conservation of resources in thermodynamics. This irreversibility highlights the impossibility of establishing a single measure that governs all entanglement transformations, and establishing reversibility requires a deeper understanding of entanglement manipulation. Whether there is a reversible entanglement theory under some larger class of operations, e.g., asymptotically entanglement non-generating operations, has been intensively studied recently~\cite{Fang2021,Berta2023,hayashi2024,lami2024a}.

To investigate the irreversibility of entanglement manipulation, we introduce the PPT-undistillable entanglement $E_{\text{U,PPT}}(\rho_{AB})$ of a bipartite quantum state as the difference between the entanglement cost and the distillable entanglement of the state regarding the PPT operations. A positive PPT-undistillable entanglement indicates the irreversibility of the entanglement manipulation under PPT operations. This irreversibility has been observed in rank-two states supporting supporting the $3\ox 3$ antisymmetric subspace~\cite{Wang2016d}, exemplified by states like $\rho_v = \frac{1}{2}(\ketbra{v_1}{v_1} + \ketbra{v_2}{v_2})$ where
\begin{equation}
    \ket{v_1} = 1/\sqrt{2}(\ket{01} - \ket{10}), \ \ket{v_2} = 1/\sqrt{2}(\ket{02} - \ket{20}).
\end{equation}
These states are pivotal in quantum information theory, aiding in the exploration of differences between LOCC and separable operations, such as in quantum state discrimination~\cite{Cohen2007} and entanglement transformation~\cite{Chitambar2009}.

The extent to which state makes $E_{\text{U,PPT}}(\rho_{AB})$ positive remains questioned. Such states are crucial to understanding the equilibrium states of quantum many-body systems, which are typically non-degenerate. We consider the case where  $\rho_{v}$ is affected by the depolarizing noise, resulting in a state $\widehat{\rho}_v = (1-p)\rho_v + p I/3$, where $p$ is the depolarizing rate. Our comparative analysis of the logarithmic fidelity of binegativity and the Rains bound~\cite{Rains2001,Audenaert2002} in low-noise scenarios reveals a discernible gap between these metrics (see Fig.~\ref{fig:irreversible_qudit}). This study corroborates previous findings and marks the irreversibility of entanglement for the first time in full-rank entangled states under PPT operations.

\begin{figure}[t]
    \centering
    \includegraphics[width=0.8\linewidth]{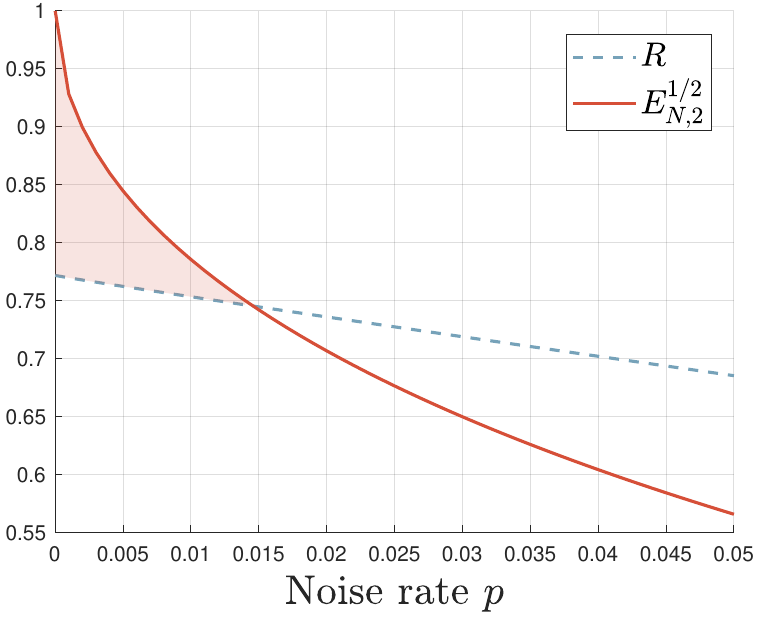}
    \caption{Comparison between $E_{\NB,2}^{1/2}$ and the Rains bound for a two-qutrit full-rank state $\widehat{\rho}_v$. The $x$-axis represents the depolarizing noise parameter $p$. The red line represents $E_{\NB,2}^{1/2}(\widehat{\rho}_v)$, serving as a lower bound on the entanglement cost, while the dashed blue line represents the Rains bound, providing an upper bound on the distillable entanglement. The gap between the two bounds for noise parameter $p \in [0, 0.015]$ indicates the irreversibility of asymptotic entanglement manipulation of $\widehat{\rho}_v$.}
    \label{fig:irreversible_qudit}
\end{figure}

\vspace{2mm}
\emph{Dynamical entanglement cost.}---
We further extend our theory of static entanglement to the dynamic realms of quantum channels, recognizing their critical role in quantum communication and error correction~\cite{Wilde2018}. The concept of distillable entanglement for point-to-point channels (quantum channels with single input and output systems) was initially explored in Ref.~\cite{Bennett1996c}, and has evolved through continued research~\cite{Devetak2003a,Takeoka2014b}, laying the groundwork for understanding entanglement's operational utility. Furthermore, the entanglement cost of quantum channels~\cite{Berta2015e}, quantifies the minimum ebits required to simulate multiple channel uses in the presence of free classical communication. Followed by Theorem~\ref{thm:min_thm} and the result in~\cite[Corollary 17]{Berta2015e}, we have the following lower bound on the entanglement cost of a point-to-point quantum channel.
\begin{proposition}\label{thm:EntCost_from_choi_E_new}
For a point-to-point quantum channel $\cN_{A\rightarrow B}$, its entanglement cost is lower bounded by
\begin{equation}
     E_C\left(\cN_{A \rightarrow B}\right) \geq \widehat{E}_{\NB,2}^{1/2}(\cN_{A\rightarrow B}) \geq E_{\NB,2}^{1/2}(J_{\cN}),
\end{equation}
where $\widehat{E}_{\NB,2}^{1/2}(\cN_{A\rightarrow B}) \coloneqq \max\limits_{\psi_{A A'}} E_{\NB,2}^{1/2}((\cN_{A \rightarrow B} \otimes \cI_{A^{\prime}})(\psi_{A A'}))$ and $J_{\cN}$ is the Choi state of $\cN_{A\rightarrow B}$.
\end{proposition}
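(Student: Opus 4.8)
The plan is to chain two ingredients: the reduction of channel simulation to state dilution from Ref.~\cite[Corollary~17]{Berta2015e}, and the single-letter state bound of Theorem~\ref{thm:min_thm}. First I would recall that Corollary~17 of~\cite{Berta2015e} lower bounds the (LOCC-assisted) entanglement cost of a point-to-point channel $\cN_{A\to B}$ by the entanglement cost of the bipartite states it can generate from a freely prepared pure input, $E_C(\cN_{A\to B}) \geq \max_{\psi_{AA'}} E_C\big((\cN_{A\to B}\ox\cI_{A'})(\psi_{AA'})\big)$. The underlying reason is a ``fixed-input'' argument: any protocol that simulates $\cN^{\ox n}$ using $rn$ ebits and free classical communication can be run on $n$ copies of a locally prepared $\psi_{AA'}$, after which Alice and Bob share $\big((\cN_{A\to B}\ox\cI_{A'})(\psi_{AA'})\big)^{\ox n}$ up to an error that vanishes with $n$ (the diamond-norm simulation error controls the trace-norm error on any input); hence $r$ is no smaller than the dilution rate to that state, and letting $n\to\infty$ and then optimizing over $\psi$ gives the bound. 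I would also note the standard reduction that it suffices to take $A'\cong A$, since a pure $\psi_{AA'}$ has Schmidt rank at most $d_A$ and any larger reference can be discarded by free local processing.

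Next I would apply Theorem~\ref{thm:min_thm} state by state: for every pure $\psi_{AA'}$ the state $\rho \coloneqq (\cN_{A\to B}\ox\cI_{A'})(\psi_{AA'})$ obeys $E_C(\rho) \geq E_{\NB,2}^{1/2}(\rho)$. Substituting into the previous inequality and taking the maximum over inputs yields
\begin{equation*}
E_C(\cN_{A\to B}) \;\geq\; \max_{\psi_{AA'}} E_{\NB,2}^{1/2}\big((\cN_{A\to B}\ox\cI_{A'})(\psi_{AA'})\big) \;=\; \widehat{E}_{\NB,2}^{1/2}(\cN_{A\to B}),
\end{equation*}
the last step being the definition of $\widehat{E}_{\NB,2}^{1/2}$ (the maximum is attained by compactness of the set of pure inputs on $A\ox A'$ with $A'\cong A$ together with continuity of $E_{\NB,2}^{1/2}$). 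For the second inequality, the normalized Choi state $J_{\cN} = (\cN_{A\to B}\ox\cI_{A'})(\Phi^+_{AA'}(d_A))$ arises from the particular pure input $\psi_{AA'}=\Phi^+_{AA'}(d_A)$, so it is a feasible point of that maximization and therefore $\widehat{E}_{\NB,2}^{1/2}(\cN_{A\to B}) \geq E_{\NB,2}^{1/2}(J_{\cN})$, completing the chain.

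The one step that is more than bookkeeping is importing the right form of~\cite[Corollary~17]{Berta2015e}: one must check that the operational definition of the channel entanglement cost there (asymptotic LOCC-assisted simulation, error in diamond or trace norm) is the one in force here, and that it genuinely permits optimizing over all pure inputs and not merely over the maximally entangled one. Should only the Choi-state version of that corollary be directly citable, I would instead prove the all-inputs statement from scratch via the fixed-input argument sketched above, invoking asymptotic continuity of $E_C$ and its monotonicity under LOCC; that is where the (still routine) work would concentrate. Given the all-inputs reduction, everything else is immediate from Theorem~\ref{thm:min_thm} and the definitions of $\widehat{E}_{\NB,2}^{1/2}$ and $J_{\cN}$.
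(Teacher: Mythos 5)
Your proposal matches the paper's own argument: the paper derives this proposition exactly by combining \cite[Corollary~17]{Berta2015e} (which gives $E_C(\cN_{A\to B})\geq \max_{\psi_{AA'}}E_C((\cN_{A\to B}\ox\cI_{A'})(\psi_{AA'}))$) with the state-level bound of Theorem~\ref{thm:min_thm}, and then observes that the Choi state is one admissible input. Your additional remarks on compactness, continuity, and the fixed-input argument are sound elaborations but do not change the route.
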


Proposition~\ref{thm:EntCost_from_choi_E_new} provides an efficiently computable lower bound $E_{\NB,2}^{1/2}(J_{\cN})$ on the entanglement cost of a point-to-point quantum channel $\cN_{A\rightarrow B}$. Note that a channel is PPT if and only if its Choi state is a PPT state. Thus, $E_{\NB,2}^{1/2}(J_{\cN})$ is also a faithful bound for $\cN_{A \rightarrow B}$, i.e., $E_{\NB,2}^{1/2}(J_{\cN}) =0$ if and only if $\cN_{A \rightarrow B}$ is a PPT channel. This lower bound indicates the relationship between the asymptotic cost of Bell pairs to simulate a quantum channel $\cN_{A\rightarrow B}$ and the distance of its Choi state $J_{\cN}$ to any sub-state with non-positive logarithmic binegativity. If $J_{\cN}$ is far from any of such sub-states, the asymptotic cost of Bell pairs for the simulation of $\cN_{A\rightarrow B}$ increases.


Recent studies have expanded the entanglement theory to bipartite channels where a variety of measures for dynamic entanglement have significantly enriched the scope of entanglement theory in quantum dynamics~\cite{Wang2018semidefinite,Gour2019,Gour2020,Gour2021}. Consider a bipartite quantum channel $\cN_{AB\rightarrow A'B'}$. There are two main strategies to define the dynamical entanglement cost of such channels, i.e., the parallel and adaptive simulation~\cite{Berta2015e,bauml2019resource,Wilde2018,Gour2021}. In the parallel approach, it is defined as the minimal rate $r$ to simulate $n\gg1$ simultaneous uses of $\cN$ with the LOCC channel by consuming $rn$ ebits, up to an error of $\varepsilon$ based on the diamond norm~\cite{Kitaev1997QuantumCA}. Formally, the entanglement cost of simulating one use of $\cN$ with error $\varepsilon$ can be expressed as,
\begin{equation}\label{Eq:channel_entcost_def}
\begin{aligned}
    E_{C,\varepsilon}^{(1)}(\cN) :=& \min\Big\{\log k \, :\, \Big\|\cN_{AB\rightarrow A'B'} - \\
    &\cL_{A B \Bar{A}\Bar{B} \rightarrow A' B'}\big(\cdot \ox \Phi_{\Bar{A}\Bar{B}}^+(2^k)\big)\Big\|_{\diamond}\leq \varepsilon\Big\},
\end{aligned}
\end{equation}
where $k\in \mathbb{N}$ and the minimization ranges over all LOCC operations $\cL_{A B \Bar{A}\Bar{B}}$ between Alice and Bob. The asymptotic entanglement cost of $\cN_{AB\rightarrow A'B'}$ is then defined as~\cite{Gour2020}, 
\begin{equation}
    E_C(\cN) \coloneqq \lim_{\varepsilon \rightarrow 0} \liminf_{n\rightarrow \infty} \frac{1}{n}E_{C, \varepsilon}^{(1)}(\cN^{\ox n}).
\end{equation}
This definition builds upon the entanglement cost for point-to-point channels~\cite{Berta2015e} by taking subsystems $B$ and $A'$ to be trivially one-dimension. 

In our study, we primarily explore the parallel framework for assessing entanglement costs (see adaptive version in~\cite{bauml2019resource,Gour2021}). Our methodology on static entanglement can be naturally generalized to bipartite quantum channels by establishing the relationship between the dynamical entanglement cost of $\cN_{AB\rightarrow A'B'}$ and the static entanglement cost of its Choi state relying on the  NPT entanglement theory.

\begin{proposition}\label{thm:EntCost_from_bipartite_choi_E_new}
For a bipartite quantum channel $\cN_{AB\rightarrow A'B'}$, its entanglement cost is lower bounded by 
\begin{equation}
     E_C\left(\cN_{AB \rightarrow A'B'}\right) \geq E_{\NB,2}^{1/2}(J^{\cN}_{AA'BB'}),
\end{equation}
where $J^{\cN}_{AA'BB'}$ is the Choi state of $\cN_{AB\rightarrow A'B'}$ with the bipartite partition $AA':BB'$.
\end{proposition}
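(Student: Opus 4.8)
\emph{Proof proposal.} I would reduce this dynamical bound to the static Theorem~\ref{thm:min_thm}, following the same route by which Proposition~\ref{thm:EntCost_from_choi_E_new} is obtained from \cite[Corollary~17]{Berta2015e}, but now for the bipartite cut $AA':BB'$. The key step is to show that any parallel simulation of $\cN^{\ox n}$ by LOCC assisted with ebits can be converted, at no extra ebit cost, into a dilution protocol for $n$ copies of the Choi state $J^{\cN}_{AA'BB'}$. Concretely: start from a protocol achieving $E_{C,\varepsilon}^{(1)}(\cN^{\ox n})$, i.e.\ an LOCC channel $\cL$ and an integer $k$ with $\big\|\cN^{\ox n}_{AB\rar A'B'}-\cL\big(\,\cdot\,\ox\Phi^+_{\bar A\bar B}(2^k)\big)\big\|_\diamond\le\varepsilon$; then feed into the channel inputs the locally preparable state $\big(\Phi^+_{R_A A}\ox\Phi^+_{R_B B}\big)^{\ox n}$, with Alice keeping the references $R_A^{\ox n}$ and Bob keeping $R_B^{\ox n}$. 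Since the diamond norm controls the trace-norm deviation of the outputs on any fixed input, the two resulting states differ by at most $\varepsilon$ in trace norm: the ideal branch produces $(J^{\cN}_{AA'BB'})^{\ox n}$ — because the Choi state of $\cN^{\ox n}$ for the Alice:Bob partition is the $n$-fold tensor power of the single-copy Choi state after regrouping each party's systems — while the simulation branch is $\cL$ applied to $\big(\Phi^+_{R_A A}\ox\Phi^+_{R_B B}\big)^{\ox n}\ox\Phi^+_{\bar A\bar B}(2^k)$.

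Next I would observe that ``prepare the local states $\Phi^+_{R_A A},\Phi^+_{R_B B}$, then run $\cL$'' is itself an LOCC channel between Alice (holding $R_A,A,A',\bar A$) and Bob (holding $R_B,B,B',\bar B$), hence a PPT operation, converting $k$ ebits into a state within trace distance $\varepsilon$ of $(J^{\cN}_{AA'BB'})^{\ox n}$. Thus $E_{C,\varepsilon}^{(1)}(\cN^{\ox n})$ is no smaller than the one-shot PPT entanglement cost of $(J^{\cN}_{AA'BB'})^{\ox n}$ at error $\varepsilon$; dividing by $n$, taking $\liminf_{n\rar\infty}$ and then $\varepsilon\rar 0$ gives $E_C(\cN_{AB\rar A'B'})\ge E_{C,\PPT}(J^{\cN}_{AA'BB'})$. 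Chaining with Theorem~\ref{thm:min_thm}, which yields $E_{C,\PPT}(J^{\cN}_{AA'BB'})\ge E_{\NB,2}^{1/2}(J^{\cN}_{AA'BB'})$, proves the proposition; faithfulness for NPT bipartite channels then follows from that of $E_{\NB,2}^{1/2}$ on states, since $\cN_{AB\rar A'B'}$ is PPT precisely when $J^{\cN}_{AA'BB'}$ is a PPT state.

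The diamond-to-trace-norm estimate and the identity ``Choi state of a tensor power $=$ tensor power of Choi states, compatibly with the $AA':BB'$ cut'' are routine. The one point I expect needs genuine care is that the reduction must preserve the \emph{bipartite} structure of the free operations: the reference systems have to be allocated so that $R_A$ stays with Alice and $R_B$ with Bob, the simulation's ancillary ebits $\bar A\bar B$ must be split across the same cut, and the locally prepared $\Phi^+_{R_A A},\Phi^+_{R_B B}$ must not straddle it — so that the composed channel is LOCC (a fortiori PPT) rather than merely separable or some larger class. Once this bookkeeping is settled, no further estimates are required.
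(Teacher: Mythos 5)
Your proposal is correct and follows essentially the same route as the paper's proof (Lemma~\ref{lem:bchannel_cost_lb_choi_state} in the appendix): feed locally prepared maximally entangled states into the simulation protocol to turn an $\varepsilon$-accurate channel simulation into an $\varepsilon$-accurate dilution protocol for $(J^{\cN}_{AA'BB'})^{\ox n}$ across the $AA':BB'$ cut, conclude $E_C(\cN)\ge E_C(J^{\cN}_{AA'BB'})$, and then invoke Theorem~\ref{thm:min_thm}. The bookkeeping point you flag (references and ancillary ebits allocated consistently with the cut so that the composed map remains LOCC) is exactly the content of the paper's argument.
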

The proof is based on the fact that $E_{C}(\cN_{AB\rightarrow A'B'}) \geq E_C(J_{AA'BB'}^{\cN})$, where $J^{\cN}_{AA'BB'}$ is the Choi state of $\cN_{AB\rightarrow A'B'}$ and the bipartite cut is settled between systems $AA'$ and $BB'$ for calculating $E_C(J_{AA'BB'}^{\cN})$. More details can be found in Appendix~\ref{appendix:bound_bi_channel}.

\vspace{2mm}
\emph{Examples.}---
We apply our bounds to different channels of interest. First, we consider the Werner-Holevo channel $\cW_{A\rightarrow B}^{(d)}(\rho)\coloneqq \frac{1}{d-1}(I - \rho^T)$, which provides a counterexample to the multiplicativity of maximal output purity~\cite{Werner2002a} and is the dual of the state that disproves the additivity of the relative entropy of entanglement~\cite{Vollbrecht2001}. The Choi state of $\cW_{A\rightarrow B}^{(d)}$ is $J_{\cW} = \frac{1}{d(d-1)}\big(I_{AB} - d(\Phi^+_{AB})^{T_B}\big)$. Previous lower bounds on the entanglement cost of Werner-Holevo channels rely on the antisymmetric subspace, yielding $E_C(\cW^{(d)})\geq E_C(\alpha_d) \geq \log(4/3)$~\cite{Christandl2012,Wilde2018}. Our bound improves upon this for $d = 5$, with $E_C(J_\cW) \geq E_{\NB,2}^{1/2}(J_{\cW}) > 0.4854 > \log(4/3)$.

Second, we investigate the entanglement cost in a noisy Hamiltonian simulation scenario, which is crucial for understanding the dynamics of quantum systems~\cite{Leimkuhler2004simulating,Ding2024simulating,Clinton2021hamiltonian} and has applications in quantum chemistry and material science~\cite{Berry2015hamiltonian,Low2017optimal,Low2019hamiltonian}. We consider a two-body time-independent Heisenberg-XXZ model with Hamiltonian $H = J_x XX + J_y YY + J_z ZZ$, where $J_x = J_y = -1/2$ and $J_z = -1$, and single-qubit thermal damping in the state $\ket{1}$ at a rate of $0.1$. We estimate the entanglement cost of implementing $e^{-iHt}$ at each small time step using $E_{\NB,2}^{1/2}$ and compare it with other lower bounds. The results are depicted in Fig.~\ref{fig:ent_cost_hevolve}. Our bound provides more precise estimations throughout the evolution, exhibiting a wavy-shaped tendency with a maximum at $t\approx1.57$. This suggests that at least two ebits are required to simulate the evolution operator at any time in the presence of amplitude damping on one of the system qubits.

\begin{figure}[t!]
    \centering
    \includegraphics[width=0.85\linewidth]{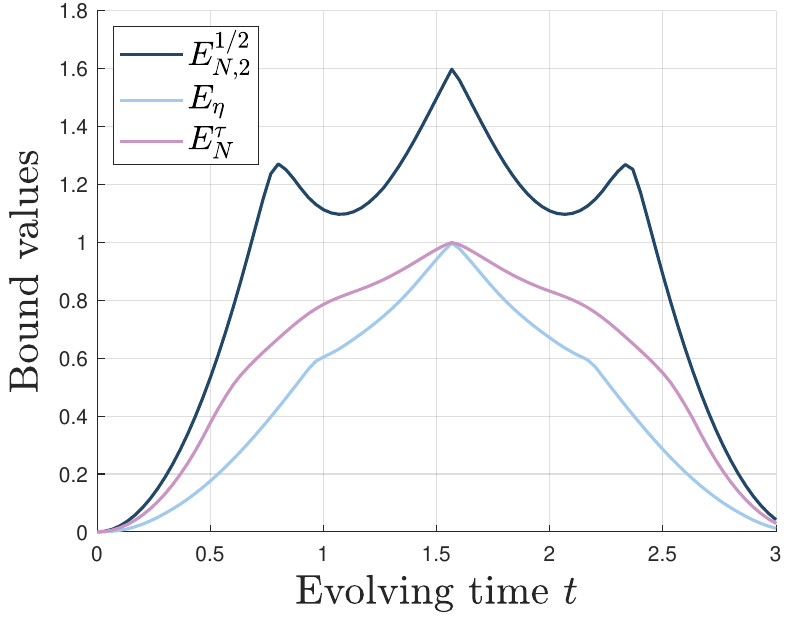}
    \caption{Investigation of the dynamical entanglement during time evolution of a time-independent Hamiltonian. Comparison of the entanglement cost lower bounds $E_{\eta}$~\cite{Wang2016d}, $E^{\tau}_{N}$~\cite{Lami2023a}, and $E^{1/2}_{N,2}$ at each time step of a noisy evolution operator governed by a Heisenberg XXZ Hamiltonian.}
    \label{fig:ent_cost_hevolve}
\end{figure}

\emph{Concluding remarks.}---
In this study, we have established efficiently computable lower bounds on the entanglement cost of quantum states and channels under PPT operations, which is faithful for the NPT entanglement theory. The key ingredient in our work is the {generalized divergence of $k$-negativity}, which generalizes the concept of Rains set and logarithmic negativity to a family of sub-state sets and entanglement quantifiers {that are also related to~\cite{Lami2025}}.
Our work settles an important question in quantum information theory by demonstrating the irreversibility of asymptotic entanglement manipulation for full-rank entangled states. Furthermore, our bounds on the entanglement cost of quantum channels pave the way for quantifying the entanglement needed for tasks such as quantum communication and channel simulation. As future directions, it would be interesting to generalize our results beyond the PPT setting, and to explore the applications of logarithmic $k$-negativity in other resource-theoretic frameworks. The proposed bounds on the entanglement cost may also be useful for the recent studies on computational entanglement theory~\cite{Rotem2023,leone2025} and the complexity of distributed quantum computing~\cite{Brenner2023optimal,Piveteau_2024,Harrow2025optimal,Jing2025circuit}.

\emph{Acknowledgments}---
The authors would like to thank Ludovico Lami and Chenghong Zhu for their valuable comments. We sincerely thank the program committee of TQC2025 for their very insightful comments that helped us improve the paper.
This work was partially supported by the National Key R\&D Program of China (Grant No.~2024YFE0102500), the National Natural Science Foundation of China (Grant. No.~12447107), the Guangdong Provincial Quantum Science Strategic Initiative (Grant No.~GDZX2403008, GDZX2403001), the Guangdong Provincial Key Lab of Integrated Communication, Sensing and Computation for Ubiquitous Internet of Things (Grant No. 2023B1212010007), the Quantum Science Center of Guangdong-Hong Kong-Macao Greater Bay Area, and the Education Bureau of Guangzhou Municipality.

\emph{Note added}---{The set $\PPT_k$ has been updated since arxiv v3 of this work. The previous $\PPT_k$ in arXiv v2 is now denoted as $\PPT^{\#}_k$ in Appendix~\ref{sec:alternative_lb}, which is also useful for establishing SDP lower bounds for entanglement cost via SDP relaxation. The authors thank TQC2025 PC's helpful comments on the SDP representation of $\PPT^{\#}_k$ and the interesting and explicit connection between the current $\PPT_k$ and $\chi_k$ in \cite{Lami2025}.}

\emph{Data availability}---The data that support the findings of this Letter are openly available~\cite{data_def}.

%

\newpage
\numberwithin{equation}{section}
\renewcommand{\theequation}{S\arabic{equation}}
\renewcommand{\theproposition}{S\arabic{proposition}}
\renewcommand{\thedefinition}{S\arabic{definition}}
\renewcommand{\thefigure}{S\arabic{figure}}

\vspace{2cm}
\onecolumngrid
\vspace{2cm}

\begin{center}
\large{\textbf{Supplemental Material for} \\ \textbf{
Computable and Faithful Lower Bound on Entanglement Cost}}
\end{center}

In this Supplemental Material, we offer detailed proofs of the theorems and propositions in the manuscript `Computable and Faithful Lower Bound on Entanglement Cost'. In Section~\ref{appendix:Notations and preliminaries}, we deliver notations and preliminaries on the entanglement theory used in this work. In Section~\ref{appendix:properties_renyi_divergence_k_negativity}, we derive explicit expressions and properties of the generalized $k$-negativity based on the $k$-hierarchy. In Section~\ref{appendix:comp_to_other_computable_bounds}, we compare our bound with previous ones on noisy states. In Section~\ref{appendix:dynamical} and Section~\ref{appendix:bound_bi_channel}, we provide details on the lower bounds on the entanglement cost of point-to-point quantum channels and bipartite quantum channels. In Section~\ref{sec:alternative_lb}, we provide an alternative SDP lower bound on the entanglement cost.

\section{Notations and preliminaries}\label{appendix:Notations and preliminaries}

Let $\cH$ be a finite-dimensional Hilbert space. Consider two parties, Alice and Bob, with associated Hilbert spaces $\cH_A$ and $\cH_B$, respectively, where the dimensions of $\cH_A$ and $\cH_B$ are denoted as $d_A$ and $d_B$. We use $\mathscr{L}(\cH_A)$ to represent the set of linear operators on system $A$, and $\mathscr{P}(\cH_A)$ to represent the set of Hermitian and positive semidefinite operators. A linear operator $\rho_A \in \mathscr{P}(\cH_A)$ is considered a density operator of a quantum state if its trace is equal to one. We denote the set of all density operators on system $A$ as $\mathscr{D}(\cH_A)$. The maximally entangled state of dimension $d\otimes d$ is denoted as $\Phi_{AB}^+(d)=1/d\sum_{i,j=0}^{d-1} \ketbra{ii}{jj}$. The trace norm of $\rho$ is denoted as $\|\rho\|_1 = \tr(\sqrt{\rho^\dagger \rho})$. 
A bipartite quantum state $\rho_{AB}\in \mathscr{D}(\cH_A\ox\cH_B)$ is called a PPT state if it admits positive partial transpose, i.e., $\rho_{AB}^{T_B}\geq 0$ where $T_B$ denotes taking partial transpose over the system $B$.
Note that quantum states with non-positive partial transpose are called NPT states, considered resourceful in the NPT entanglement theory~\cite{Gour2021}. A quantum channel $\cN_{A\rightarrow B}$ is a linear transformation from $\mathscr{L}(\cH_A)$ to $\mathscr{L}(\cH_B)$ that is \textit{completely positive and trace preserving} (CPTP).

In the task of entanglement dilution, Alice and Bob share a large supply of Bell pairs and try to convert $rn$ Bell pairs to $n$ high-fidelity copies of the desired state $\rho_{AB}^{\ox n}$ using suitable operations. The \emph{entanglement cost} $E_{C,\Omega}$ of a given bipartite state $\rho_{AB}$ quantifies the optimal rate $r$ of converting ${rn}$  Bell pairs to $\rho_{AB}^{\ox n}$ with arbitrarily high fidelity in the limit of large $n$. The concise definition of the entanglement cost using $\Omega$ operations is given as follows:
\begin{align}
E_{C,\Omega}(\rho_{AB})=\inf\Big\{r: \lim_{n \to \infty}\big[ \inf_{\Lambda\in \Omega}  \big\|\rho_{AB}^{\ox n}-\Lambda (\Phi^+_{AB}(2^{rn}))\big\|_1 \big] = 0\Big\},
\end{align}
where $\Omega \in \{\OLOCC,\LOCC,\SEP,\PPT\}$ and we write $E_{C,\LOCC}(\cdot)=E_C(\cdot)$ for simplification. For the case with an exact transformation rather than the vanishing error, the rate refers to the exact entanglement cost~\cite{Audenaert2003,Wang2020c}.

For the entanglement cost under LOCC, Hayden, Horodecki, and Terhal~\cite{Hayden2001} proved that
\begin{align}
E_{C}(\rho_{AB})=\lim_{n\to \infty}\frac{1}{n}E_F(\rho_{AB}^{\ox n}).
\end{align}
Here $E_F(\rho_{AB}):=\inf \big\{\sum_i p_i S(\tr_A \proj {\psi_i}): \rho_{AB}=\sum_i p_i \proj {\psi_{i}} \big\}$ is the entanglement of formation~\cite{Bennett1996c} of $\rho_{AB}$ where the minimization ranges over all pure state decompositions.
In particular, for any bipartite pure state ${\psi}_{AB}$, it is known that~\cite{Bennett1996b}
\begin{align}\label{Eq:pure_rev}
E_C(\psi_{AB})=E_D(\psi_{AB})=S(\tr_A \psi_{AB}),
\end{align}
where $S(\rho):= - \tr(\rho \log \rho)$ is the \textit{von Neumann entropy} of a state $\rho$. From Eq.~\eqref{Eq:pure_rev} we can see the reversibility in the asymptotic transformation between any pure states.
However, little is known about the entanglement cost of general quantum states due to the computational challenge of mixed states' entanglement measures. One important lower bound on the entanglement cost of $\rho_{AB}$ is the \textit{asymptotic PPT relative entropy of entanglement}~\cite{Hayashi2006a}, 
\begin{equation}
    E_C(\rho_{AB}) \ge \ECPPT(\rho_{AB}) \ge \ERPPTinf(\rho_{AB}):=\inf_{n\ge 1}  \frac{1}{n}\ERPPT(\rho_{AB}^{\ox n}),
\end{equation}
where $\ERPPT(\rho_{AB})$ is  the \textit{PPT relative entropy of entanglement}~\cite{Audenaert2002,Plenio2007} given as follows. 
\begin{align}
    \ERPPT(\rho_{AB}):= \min_{ \sigma_{AB}\in \PPT(A:B)} D(\rho_{AB} \| \sigma_{AB}),
\end{align}
where $D(\rho\|\sigma) = \tr\rho(\log \rho-\log \sigma)$ is the quantum relative entropy.

Entanglement distillation, on the other hand, captures the highest rate at which one can obtain maximally entangled states from less entangled states using $\Omega$ operations. It is defined by
\begin{equation}
E_{D}(\rho_{AB})=\sup\Big\{r:\lim_{n \to \infty} \big[\inf_{\Lambda\in\Omega}  \big\|\Lambda(\rho_{AB}^{\ox n})- \Phi_{AB}^+(2^{rn})\big\|_1\big]=0\Big\},
\end{equation}
where $\Omega\in\{\OLOCC,\LOCC,\SEP,\PPT\}$. There are several upper bounds on distillable entanglement~\cite{Rains2001,Christandl2004,Leditzky2017,Wang2016m,Wang2016c,Fang2017,Kaur2018}, and the Rains bound~\cite{Rains2001} is arguably the best-known computable upper bound for the distillable entanglement under two-way LOCC. Its regularization could provide a tighter bound due to its non-additivity~\cite{Wang2016c}. 
Specifically, Rains bound is given by~\cite{Rains2001,Audenaert2002}
\begin{equation}\label{Eq:rains}
    R(\rho_{AB})= \min_{ \tau_{AB}\in \PPT'(A:B)} D(\rho_{AB} \| \tau_{AB}),
\end{equation}
where $\PPT'(A:B)$ is the Rains set defined as
\begin{equation}\label{Eq:rains_sub_state_set}
    \PPT'(A:B) = \big\{\sigma_{AB}: \sigma_{AB} \succeq 0, \big\|\sigma_{AB}^{T_B}\big\|_1 \leq 1 \big\}.
\end{equation}

\section{Properties of R\'enyi-$\alpha$ divergence of $k$-negativity}\label{appendix:properties_renyi_divergence_k_negativity}
In the same spirit of the relative entropy of entanglement and the Rains bound, we consider the minimum `distance' between a target state and the sub-state set
\begin{equation}
    \Freek(A:B) \coloneqq \Big\{\omega_1 \in\mathscr{P}(\cH_{AB}):~\exists \{\omega_i\}_{i=2}^{k}, ~\mathrm{s.t.}~ |\omega_i^{T_B}|_* \preceq \omega_{i+1}, \forall i\in [1:k-1],~\big\|\omega_{k}^{T_B}\big\|_1\le 1\Big\},
\end{equation}
where $|A|_{*} \preceq B$ denotes $-B \preceq A\preceq B$. Hence, the sub-state set $\PPT_{k}(A:B)$ can be rewritten as
\begin{equation}
    \Freek(A:B) \coloneqq \Big\{\omega_1 \in\mathscr{P}(\cH_{AB}):~\exists \{\omega_i\}_{i=2}^{k}, ~\mathrm{s.t.}~ -\omega_{i+1} \preceq \omega_i^{T_B} \preceq \omega_{i+1}, \forall i\in [1:k-1],~\big\|\omega_{k}^{T_B}\big\|_1\le 1\Big\}.
\end{equation}
Based on the above definition, we can have a hierarchical structure of sub-state sets encompassing the Rains set as the largest boundary set.
\begin{proposition}\label{prop:hirearchy}
For a bipartite quantum system $AB$ and any positive integer $k\geq 2$ it holds that
\begin{equation}
    \PPT(A:B) \subseteq \PPT_k(A:B) \subseteq \cdots \subseteq \PPT_2(A:B) \subseteq \PPT'(A:B).
\end{equation}
\end{proposition}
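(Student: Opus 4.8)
I would split the statement into two essentially independent parts: the single ``innermost'' inclusion $\PPT(A:B)\subseteq\PPT_k(A:B)$, and the descending chain $\PPT_k(A:B)\subseteq\cdots\subseteq\PPT_2(A:B)\subseteq\PPT'(A:B)$. Adopting the convention $\PPT_1:=\PPT'$ (consistent with the definition in the main text), the chain is just the family of one-step inclusions $\PPT_{j+1}(A:B)\subseteq\PPT_j(A:B)$ for $1\le j\le k-1$; here it helps to notice the recursive description $\PPT_{j+1}=\{\omega\succeq0:\ \exists\,\omega'\in\PPT_j\text{ with }|\omega^{T_B}|_*\preceq\omega'\}$. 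So the entire chain follows once I establish the one-step fact: \emph{if $\omega,\omega'$ are Hermitian with $|\omega^{T_B}|_*\preceq\omega'$ and $\|(\omega')^{T_B}\|_1\le1$, then $\|\omega^{T_B}\|_1\le1$} (and automatically $\omega'\succeq0$). Granting this, an element $\omega_1\in\PPT_{j+1}$ with witnessing operators $\omega_2,\dots,\omega_{j+1}$ has its terminal constraint upgraded, by applying the fact to the pair $(\omega_j,\omega_{j+1})$, to $\|\omega_j^{T_B}\|_1\le1$, so $\omega_1,\dots,\omega_j$ now witness $\omega_1\in\PPT_j$; iterating down from $j=k-1$ yields $\PPT_k\subseteq\PPT_{k-1}\subseteq\cdots\subseteq\PPT'$.

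\textbf{The one-step fact} is where the content lies, and the only delicate point is getting the \emph{sharp} constant $1$ rather than $2$. Summing the two halves of $-\omega'\preceq\omega^{T_B}\preceq\omega'$ already forces $\omega'\succeq0$. I would then take the Jordan decomposition $\omega^{T_B}=P-N$ into positive and negative parts, with orthogonal support projectors $\Pi_P,\Pi_N$ (so $\Pi_P+\Pi_N\preceq I$). Compressing $\omega^{T_B}\preceq\omega'$ by $\Pi_P$ gives $P\preceq\Pi_P\omega'\Pi_P$, hence $\tr P\le\tr(\omega'\Pi_P)$; compressing $-\omega'\preceq\omega^{T_B}$ by $\Pi_N$ gives $N\preceq\Pi_N\omega'\Pi_N$, hence $\tr N\le\tr(\omega'\Pi_N)$. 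Adding the two estimates,
\begin{equation*}
\|\omega^{T_B}\|_1=\tr P+\tr N\le\tr\!\big(\omega'(\Pi_P+\Pi_N)\big)\le\tr\omega'=\tr\!\big((\omega')^{T_B}\big)\le\|(\omega')^{T_B}\|_1\le1,
\end{equation*}
where the second inequality uses $\Pi_P+\Pi_N\preceq I$ together with $\omega'\succeq0$, and the third uses that the partial transpose preserves the trace. The hard part here is not any single manipulation but the bookkeeping that keeps the constant at $1$: bounding $\tr P$ and $\tr N$ separately by $\tr\omega'$ loses a factor of $2$, and routing through the support projectors with the single inequality $\Pi_P+\Pi_N\preceq I$ is what avoids that loss.

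\textbf{The innermost inclusion} is then straightforward. Given $\rho_{AB}\in\PPT(A:B)$, so $\rho_{AB}\succeq0$ and $\rho_{AB}^{T_B}\succeq0$, I would exhibit the alternating witnesses $\omega_i=\rho_{AB}$ for odd $i$ and $\omega_i=\rho_{AB}^{T_B}$ for even $i$. For odd $i$ we get $\omega_i^{T_B}=\rho_{AB}^{T_B}=\omega_{i+1}$, so the constraint $-\omega_{i+1}\preceq\omega_i^{T_B}\preceq\omega_{i+1}$ reduces to $\rho_{AB}^{T_B}\succeq0$; for even $i$ it reduces symmetrically to $\rho_{AB}\succeq0$. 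The final operator $\omega_k$ is either $\rho_{AB}$ or $\rho_{AB}^{T_B}$, and in both cases $\omega_k^{T_B}$ is positive semidefinite with trace $1$, so $\|\omega_k^{T_B}\|_1=1$. Hence $\rho_{AB}\in\PPT_k(A:B)$, which finishes the proof. I expect the sharp-constant step in the one-step fact to be the main obstacle; the rest is routine.
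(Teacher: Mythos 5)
Your proof is correct and follows essentially the same route as the paper: the chain of inclusions is reduced to the one-step fact that $-\omega'\preceq\omega^{T_B}\preceq\omega'$ together with $\|(\omega')^{T_B}\|_1\le 1$ forces $\|\omega^{T_B}\|_1\le\tr\omega'=\tr\big((\omega')^{T_B}\big)\le\|(\omega')^{T_B}\|_1\le 1$, and the innermost inclusion $\PPT\subseteq\PPT_k$ is handled by the same alternating witnesses $\rho,\rho^{T_B},\rho,\dots$. The only cosmetic difference is in the proof of the key trace-norm estimate ($-B\preceq A\preceq B\Rightarrow\|A\|_1\le\tr B$, the paper's Lemma on Hermitian matrices): the paper sums $|\braandket{\psi_i}{A}{\psi_i}|\le\braandket{\psi_i}{B}{\psi_i}$ over an eigenbasis of $A$, while you compress with the support projectors of the Jordan decomposition — both arguments give the sharp constant $1$.
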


\begin{proof}
First, for any fixed $j\in[2:k]$ and any sub-state $\sigma_{AB}\in \PPT_{j}$, we have $-\omega_2\preceq \sigma_{AB}^{T_B} \preceq \omega_2,~-\omega_3\preceq \omega_2^{T_B} \preceq \omega_3, \cdots, -\omega_j \preceq \omega_{j-1}^{T_B} \preceq \omega_{j},\|\omega_{j}^{T_B}\|_1\le 1$. Since $-\omega_j \preceq \omega_{j-1}^{T_B} \preceq \omega_{j}$, by Lemma~\ref{lem:trnormAtotrB}, we have 
\begin{equation}
\big\|\omega_{j-1}^{T_B}\big\|_1 \leq \tr \omega_j.
\end{equation}
Since $\omega_j\succeq 0$, we have
\begin{equation}
    \big\|\omega_{j-1}^{T_B}\big\|_1 \leq \tr\omega_j = \tr\omega_j^{T_B} \leq \big\|\omega_{j}^{T_B}\big\|_1 \leq 1,
\end{equation}
which indicates that $\sigma_{AB}\in \PPT_{j-1}$. Hence, we have shown that $\PPT_j\subseteq\PPT_{j-1}$. Second, if $\sigma_{AB}\in \PPT$, we have $\sigma_{AB}^{T_B}=\omega_1\geq 0$. Then we can construct a sequence of $\omega_{j}$ as $\omega_k = \rho_{AB}$ if $k$ is odd; $\omega_{k} = \omega_1$ if $k$ is even. Therefore, it follows that $\|\omega_{k}^{T_B}\|_1 \leq 1$ and $\sigma_{AB}\in\PPT_k$. Hence, we complete the proof.
\end{proof}

\begin{lemma}\label{lem:trnormAtotrB}
For hermitian matrices $A$ and $B$, if $-B\preceq A\preceq B$, then it holds that $\|A\|_1\leq \tr B$.
\end{lemma}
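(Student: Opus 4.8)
\textit{Sketch of proof.} The plan is to exploit the Jordan decomposition of the Hermitian matrix $A$ together with the two halves of the sandwich inequality, used separately. First I would observe that the hypothesis $-B \preceq A \preceq B$ forces $2B = (B-A) + (B+A) \succeq 0$, so $B \succeq 0$; in particular $\tr B = \|B\|_1 \ge 0$ and the claimed bound is meaningful.

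Next, write $A = A_+ - A_-$, where $A_+, A_- \succeq 0$ are the positive and negative parts of $A$, and let $P_+, P_-$ be the mutually orthogonal spectral projectors of $A$ onto its strictly positive and strictly negative eigenspaces, so that $A_+ = P_+ A P_+$, $A_- = -P_- A P_-$, and $\|A\|_1 = \tr A_+ + \tr A_-$. From $A \preceq B$ and $P_+ \succeq 0$ one gets $\tr A_+ = \tr(P_+ A P_+) \le \tr(P_+ B P_+) = \tr(P_+ B)$, using that $\tr(P X) \ge 0$ whenever $P, X \succeq 0$, applied to $X = B - A$. Symmetrically, from $-A \preceq B$ and $P_- \succeq 0$ one gets $\tr A_- = \tr(P_-(-A)P_-) \le \tr(P_- B)$.

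Finally I would add the two bounds: $\|A\|_1 = \tr A_+ + \tr A_- \le \tr\big((P_+ + P_-)B\big)$. Since $P_+$ and $P_-$ are orthogonal projectors, $P_+ + P_- \preceq I$, hence $I - P_+ - P_- \succeq 0$; combined with $B \succeq 0$ this yields $\tr\big((I - P_+ - P_-)B\big) \ge 0$, i.e.\ $\tr\big((P_+ + P_-)B\big) \le \tr B$. Chaining these inequalities gives $\|A\|_1 \le \tr B$, as claimed.

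There is no substantial obstacle here; the only point requiring care is the bookkeeping — using the upper half $A \preceq B$ of the sandwich for the positive part and the lower half $-B \preceq A$ for the negative part — and invoking $B \succeq 0$ at the very end to discard the complementary projector $I - P_+ - P_-$. If a more compact presentation is preferred, the same computation can be packaged through the dual formula $\|A\|_1 = \max\{\tr(MA) : M = M^\dagger,\ -I \preceq M \preceq I\}$, whose maximizer is $M = P_+ - P_-$, after which $\tr(MA) = \tr(P_+ A) - \tr(P_- A) \le \tr(P_+ B) + \tr(P_- B) \le \tr B$ by the same two steps.
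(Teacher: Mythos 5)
Your argument is correct and is essentially the paper's proof in grouped form: the paper diagonalizes $A$, writes $\|A\|_1 = \sum_i |\braandket{\psi_i}{A}{\psi_i}| \leq \sum_i \braandket{\psi_i}{B}{\psi_i} = \tr B$ over a full eigenbasis, which is exactly your positive/negative-projector bookkeeping with the kernel contribution absorbed by $B \succeq 0$. Both routes rest on the same observation that $-B \preceq A \preceq B$ controls the quadratic form of $A$ by that of $B$ on each eigenvector, so there is nothing to change.
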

\begin{proof}
Denote the spectral decomposition of $A$ as $A = \sum_{i=1}^{r} \lambda_{i}\ketbra{\psi_i}{\psi_i}$ and $\{\ket{\psi_i}\}_{i=1}^{d}$ as an orthonormal basis where $r$ and $d$ are the rank and dimension of $A$, respectively. Since $-B \preceq A \preceq B$, we have 
\begin{equation}
    -\bra{\psi} B\ket{\psi} \leq \bra{\psi} A \ket{\psi}\leq \bra{\psi} B\ket{\psi},~\forall \ket{\psi} \implies |\bra{\psi} A \ket{\psi}| \leq \bra{\psi} B\ket{\psi},~\forall \ket{\psi}.
\end{equation}
It follows that 
\begin{equation}
\|A\|_1 = \sum_{i=1}^{r} |\lambda_{i}| = \sum_{i=1}^{d} \big|\bra{\psi_i}A\ket{\psi_i}\big| \leq \sum_{i=1}^{d} \bra{\psi_i}B\ket{\psi_i} = \tr B,
\end{equation}
which completes the proof.
\end{proof}

\begin{lemma}\label{lem:LB_leq0_PPT}
For any positive integer $k\geq 2$, a bipartite quantum state $\sigma_{AB}\in\PPT_k$ if and only if $\sigma_{AB}\in \PPT$.
\end{lemma}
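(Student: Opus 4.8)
The plan is to split the biconditional into its two implications, the forward one being essentially already contained in the proof of Proposition~\ref{prop:hirearchy} and the reverse one being the substantive part, where the hypothesis that $\sigma_{AB}$ is a \emph{normalized} state is crucial.

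For $\sigma_{AB}\in\PPT\Rightarrow\sigma_{AB}\in\PPT_k$: this is exactly the last paragraph of the proof of Proposition~\ref{prop:hirearchy}. Given $\sigma_{AB}^{T_B}\succeq 0$, take $\omega_1=\sigma_{AB}$ and define the witnesses by alternating, $\omega_i=\sigma_{AB}^{T_B}$ for even $i$ and $\omega_i=\sigma_{AB}$ for odd $i$ ($2\le i\le k$). Then each constraint $-\omega_{i+1}\preceq\omega_i^{T_B}\preceq\omega_{i+1}$ holds trivially (with equality $\omega_i^{T_B}=\omega_{i+1}$), and $\|\omega_k^{T_B}\|_1$ equals either $\|\sigma_{AB}\|_1=1$ or $\|\sigma_{AB}^{T_B}\|_1=\tr\sigma_{AB}^{T_B}=1$ (the latter because $\sigma_{AB}^{T_B}\succeq 0$), so in both cases $\|\omega_k^{T_B}\|_1\le 1$ and $\sigma_{AB}\in\PPT_k$.

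For the converse, suppose $\sigma_{AB}\in\PPT_k$ is a state with witnesses $\{\omega_i\}_{i=2}^k$. I would chain together the inequalities obtained from Lemma~\ref{lem:trnormAtotrB}: starting from $\tr\sigma_{AB}=1$ and using that partial transpose preserves trace, $|\tr X|\le\|X\|_1$, and Lemma~\ref{lem:trnormAtotrB} applied to each sandwiching relation, one gets
\begin{equation}
1=\tr\sigma_{AB}^{T_B}\le\big\|\sigma_{AB}^{T_B}\big\|_1\le\tr\omega_2=\tr\omega_2^{T_B}\le\big\|\omega_2^{T_B}\big\|_1\le\tr\omega_3\le\cdots\le\tr\omega_k=\tr\omega_k^{T_B}\le\big\|\omega_k^{T_B}\big\|_1\le 1 .
\end{equation}
Since the two ends both equal $1$, every inequality in the chain is an equality; in particular $\|\sigma_{AB}^{T_B}\|_1=\tr\sigma_{AB}^{T_B}$. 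For a Hermitian operator $X$ one has $\|X\|_1=\tr X$ if and only if $X\succeq 0$ (compare $\sum_i|\lambda_i|$ with $\sum_i\lambda_i$), hence $\sigma_{AB}^{T_B}\succeq 0$, i.e.\ $\sigma_{AB}\in\PPT$. (Equivalently, one can first invoke $\PPT_k\subseteq\PPT_2$ from Proposition~\ref{prop:hirearchy} and run the same argument for $k=2$ only.)

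There is no deep obstacle here; the one point requiring attention is recognizing that it is precisely the normalization $\tr\sigma_{AB}=1$ that forces the chain of inequalities to collapse — for unnormalized sub-states the inclusion $\PPT_k\subseteq\PPT$ genuinely fails (indeed $\PPT\subsetneq\PPT_2$, as used in the main text), so the "quantum state" hypothesis must be used, and it enters exactly at the endpoints of the displayed chain.
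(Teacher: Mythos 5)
Your proof is correct and follows essentially the same route as the paper's: both directions rest on the witness construction from Proposition~\ref{prop:hirearchy} and on chaining Lemma~\ref{lem:trnormAtotrB} with trace preservation under partial transpose; the paper phrases the converse as a contradiction (assuming $\sigma_{AB}\notin\PPT$ forces $\|\sigma_{AB}^{T_B}\|_1>1$ against the chain bounding it by $1$), whereas you extract equality throughout the chain and conclude $\|\sigma_{AB}^{T_B}\|_1=\tr\sigma_{AB}^{T_B}$ directly, which is the same argument in contrapositive form. Your observation that the normalization $\tr\sigma_{AB}=1$ is exactly what distinguishes this lemma from the strict inclusion $\PPT\subsetneq\PPT_2$ for sub-states is accurate and worth keeping.
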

\begin{proof}
The `if' part directly follows from Proposition~\ref{prop:hirearchy}. For the `only if' part, suppose there is a quantum state $\sigma_{AB}\in\PPT_2$ and $\sigma_{AB}\notin \PPT$. Then there exists some $\omega_{AB}\in \mathscr{P}(\cH_{AB})$ such that $-\omega_{AB} \preceq \sigma_{AB}^{T_B} \preceq \omega_{AB}$ and $\|\omega_k^{T_B}\|_1 \leq 1$. Using the same argument in Proposition~\ref{prop:hirearchy}, we have
\begin{equation}
    1 < \|\sigma_{AB}^{T_B}\|_1 \leq \tr \omega_{AB} \leq \|\omega_{AB}^{T_B}\|_1 \leq 1,
\end{equation}
a contradiction. Hence, we have shown that there is no such a state $\sigma_{AB}\in\PPT_2$ and $\sigma_{AB}\notin \PPT$. As $\PPT_k \subseteq \PPT_2$ for any $k\geq 2$, we complete the proof.
\end{proof}

Notably, $\Freek(A:B)$ has a semidefinite programming (SDP) representation, e.g.,
\begin{equation}
    \Freesec(A:B)\coloneqq \Big\{\sigma_{AB}\succeq 0: \exists \omega_{AB}\in\mathscr{L}(\cH_{AB}),~\mathrm{s.t.}~ -\omega_{AB} \preceq \sigma_{AB}^{T_B} \preceq \omega_{AB}, ~\big\|\omega_{AB}^{T_B}\big\|_1\le 1\Big\}.
\end{equation}
We remark that the entanglement cost lower bound $E_{\eta}(\rho_{AB})$ in Ref.~\cite{Wang2016d} can be interpreted as the minimum min-relative entropy between a bipartite state $\rho_{AB}$ and the set $\Freesec(A:B)$.


Consider a generalized divergence~\cite{Tomamichel2015b,Khatri2020} $\mathbf{D}(\cdot\|\cdot)$ of a state $\rho$ and a positive semidefinite operator $\sigma$ as a function that obeys: 
\begin{enumerate}
    \item $\mathbf{D}(\rho\|\rho) = 0$ for any state $\rho$;
    \item Logarithmic scaling relation $\mathbf{D}(\rho\|c\sigma) = \mathbf{D}(\rho\|\sigma) - \log c$, for all $c>0$; 
    \item Data processing inequality $\mathbf{D}(\rho\|\sigma) \geq \mathbf{D}(\cN(\rho)\|\cN(\sigma))$ where $\cN$ is an arbitrary quantum channel
\end{enumerate}
We then define an entanglement quantifier called the \textit{generalized divergence of $k$-negativity} of a given bipartite quantum state $\rho_{AB}$ acting on the composite Hilbert space $\cH_{AB}$ as,
\begin{equation}
    E_{\NB, k}(\rho_{AB}) = \min_{\sigma_{AB}\in \Freek(A:B)} \mathbf{D}(\rho_{AB}||\sigma_{AB}).
\end{equation}
One can immediately show that the quantifier is faithful regarding the $\PPT$ states since it is never negative and equals to zero if and only if $\rho_{AB}$ is $\PPT$. 

\begin{lemma}[Faithfulness of $E_{\NB,k}$]\label{lem:faithfulness}
    For a bipartite quantum state $\rho_{AB}\in\mathscr{D}(\cH_A\ox\cH_B)$, $E_{\NB,k}(\rho_{AB})\geq 0$ and $E_{\NB,k}(\rho_{AB}) = 0$ if and only if $\rho_{AB}\in \PPT(A:B)$.
\end{lemma}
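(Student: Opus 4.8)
The plan is to derive everything from the three listed properties of a generalized divergence together with the structural facts already in hand, namely Lemma~\ref{lem:trnormAtotrB}, Proposition~\ref{prop:hirearchy}, and Lemma~\ref{lem:LB_leq0_PPT}. The first thing I would record is that every $\sigma_{AB}\in\PPT_k(A:B)$ is subnormalized. Given a witnessing chain $\omega_1=\sigma_{AB}\succeq 0,\omega_2,\dots,\omega_k$, the relations $-\omega_{i+1}\preceq\omega_i^{T_B}\preceq\omega_{i+1}$ force each further $\omega_i\succeq 0$, and Lemma~\ref{lem:trnormAtotrB} gives $\|\omega_i^{T_B}\|_1\le\tr\omega_{i+1}$; since $\tr X\le\|X\|_1$ for Hermitian $X$ and partial transposition preserves the trace, chaining these (exactly the computation in the proof of Proposition~\ref{prop:hirearchy}) yields $\tr\sigma_{AB}=\tr\omega_1\le 1$. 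Feeding this into the data-processing inequality for the trace-out channel, then the logarithmic scaling property and $\mathbf{D}(1\|1)=0$, gives for every feasible $\sigma_{AB}$
\begin{equation}
\mathbf{D}(\rho_{AB}\|\sigma_{AB}) \ge \mathbf{D}(1\|\tr\sigma_{AB}) = -\log\tr\sigma_{AB} \ge 0,
\end{equation}
so $E_{\NB,k}(\rho_{AB})\ge 0$.

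For the ``if'' direction: if $\rho_{AB}\in\PPT(A:B)$ then Proposition~\ref{prop:hirearchy} places it inside $\PPT_k(A:B)$, so it is itself a feasible $\sigma_{AB}$ and $E_{\NB,k}(\rho_{AB})\le\mathbf{D}(\rho_{AB}\|\rho_{AB})=0$; combined with nonnegativity this forces equality to zero.

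For the ``only if'' direction: suppose $E_{\NB,k}(\rho_{AB})=0$. Since $\PPT_k(A:B)$ is closed and bounded (the set of witnessing tuples $(\omega_1,\dots,\omega_k)$ is compact, and $\omega_1$ is a continuous function of it) and $\mathbf{D}(\rho_{AB}\|\cdot)$ is lower semicontinuous, the minimum is attained at some $\sigma^\star\in\PPT_k(A:B)$ with $\mathbf{D}(\rho_{AB}\|\sigma^\star)=0$. Combining this with the displayed bound $\mathbf{D}(\rho_{AB}\|\sigma^\star)\ge-\log\tr\sigma^\star\ge 0$ forces $\tr\sigma^\star=1$, so $\sigma^\star$ is a genuine density operator lying in $\PPT_k(A:B)$; by Lemma~\ref{lem:LB_leq0_PPT} it is then PPT. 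Finally $\mathbf{D}(\rho_{AB}\|\sigma^\star)=0$ between two density operators yields $\rho_{AB}=\sigma^\star$, whence $\rho_{AB}\in\PPT(A:B)$.

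The main obstacle, and the only step needing care, is the last part of the ``only if'' direction: one must (i) exclude a strictly subnormalized optimizer, which the nonnegativity bound handles, and (ii) use that $\mathbf{D}$ is faithful on the set of genuine states, i.e. $\mathbf{D}(\rho\|\tau)=0\Rightarrow\rho=\tau$ for states $\rho,\tau$. This last property is not literally among the three listed axioms but is standard for every divergence used here (for $\mathbf{D}=D$ it is the usual faithfulness of relative entropy; for $\mathbf{D}=D_{1/2}$ it is $F(\rho,\tau)=1\Leftrightarrow\rho=\tau$); alternatively one can replace it by any Pinsker-type estimate $\mathbf{D}(\rho\|\tau)\ge c\|\rho-\tau\|_1^2$, which converts a minimizing sequence $\sigma^{(n)}$ into one converging to $\rho_{AB}$ in trace norm and then invokes closedness of $\PPT_k(A:B)$ directly. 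Everything else is bookkeeping with the three axioms and the already-proven hierarchy and trace estimates.
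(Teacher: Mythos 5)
Your proposal is correct and follows essentially the same route as the paper's own proof: nonnegativity from the subnormalization of elements of $\PPT_k(A:B)$ together with $\mathbf{D}(\rho\|\sigma)\ge 0$ for $\tr\sigma\le 1$, the ``if'' direction from the inclusion $\PPT\subseteq\PPT_k$, and the ``only if'' direction from faithfulness of $\mathbf{D}$ on states combined with Lemma~\ref{lem:LB_leq0_PPT}. You are in fact more careful than the paper on two points it leaves implicit --- the attainment of the minimum over $\PPT_k(A:B)$ and the observation that $\mathbf{D}(\rho\|\tau)=0\Rightarrow\rho=\tau$ is an additional hypothesis beyond the three listed axioms (it fails, e.g., for the min-relative entropy), though it does hold for the divergences actually used in the paper.
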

\begin{proof}
    $E_{\NB,k}(\rho_{AB})\geq 0$ can be directly obtained by $\mathbf{D}(\rho\|\sigma) \geq 0$ when $\tr \sigma \leq 1$. For the ‘if’ part, since $\PPT\subsetneq\Freesec$, we have $E_{\NB,k}(\rho_{AB}) \le \mathbf{D}(\rho_{AB}||\rho_{AB}) = 0$ which gives $E_{\NB,k}(\rho_{AB}) = 0$. For the ‘only if’ part, since $\mathbf{D}(\rho_{AB}\|\sigma_{AB}) = 0$ if and only if $\rho_{AB} = \sigma_{AB}$, we conclude $\rho_{AB}\in \PPT$ by Lemma~\ref{lem:LB_leq0_PPT}.
\end{proof}

A representative generalized divergence is the \textit{sandwiched R{\'{e}}nyi relative entropy}, given by~\cite{Muller_Lennert2013,Wilde2014a},
\begin{equation}
    \widetilde{D}_\alpha(\rho \| \sigma)\coloneqq\frac{1}{\alpha-1} \log \operatorname{Tr}\left[\left(\sigma^{\frac{1-\alpha}{2 \alpha}} \rho \sigma^{\frac{1-\alpha}{2 \alpha}}\right)^\alpha\right]
\end{equation}
where $\alpha \in(0,1) \cup(1, \infty)$. Then we obtain a family of \textit{R{\'{e}}nyi-$\alpha$ divergence of $k$-negativity} as 
\begin{equation}
    E_{\NB, k}^{\alpha}(\rho_{AB}) = \min_{\sigma_{AB}\in \Freek(A:B)} \widetilde{D}_{\alpha}(\rho_{AB}||\sigma_{AB}).
\end{equation}
We note that $E_{\NB, k}^{\alpha}(\rho_{AB})$ can be efficiently computed via SDP for certain $\alpha$. In particular, when $\alpha = 1/2$, we have the \textit{logarithmic fidelity of binegativity} of any bipartite quantum state $\rho_{AB}$ as follows.
\begin{equation}
   E_{\NB,2}^{1/2}(\rho_{AB}) = - \log \max_{\sigma_{AB}\in\Freesec(A:B)} F(\rho_{AB},\sigma_{AB}),
\end{equation}
where $F(\rho, \sigma)=\left(\tr\left[(\sqrt{\sigma}\rho\sqrt{\sigma})^\frac{1}{2}\right]\right)^2$ is the Uhlmann’s fidelity between $\rho$ and $\sigma$. It has the following properties.

\begin{proposition}[Normalization]\label{prop:normalization}
For a $d\ox d$ maximally entangle state $\Phi^+_{AB}$, $E_{\NB,2}^{1/2}(\Phi^+_{AB}) = \log d$.
\end{proposition}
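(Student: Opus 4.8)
The plan is to compute $E_{\NB,2}^{1/2}(\Phi^+_{AB}) = -\log \max_{\sigma_{AB}\in\Freesec} F(\Phi^+_{AB},\sigma_{AB})$ by establishing matching upper and lower bounds of $1/d$ on the maximal fidelity. Since $\Phi^+_{AB}$ is a pure state, the fidelity simplifies to $F(\Phi^+_{AB},\sigma_{AB}) = \bra{\Phi^+}\sigma_{AB}\ket{\Phi^+} = \tr(\Phi^+_{AB}\sigma_{AB})$, so the optimization becomes the linear (indeed semidefinite) program $\max\{\tr(\Phi^+_{AB}\sigma_{AB}) : \sigma_{AB}\in\Freesec(A:B)\}$.

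For the upper bound, I would start from the defining constraints of $\Freesec$: there exists $\omega_{AB}$ with $-\omega_{AB}\preceq \sigma_{AB}^{T_B}\preceq\omega_{AB}$ and $\|\omega_{AB}^{T_B}\|_1\le 1$. Using the well-known identity $\tr(\Phi^+_{AB}\sigma_{AB}) = \tr((\Phi^+_{AB})^{T_B}\sigma_{AB}^{T_B})$ and the fact that $(\Phi^+_{AB})^{T_B} = \tfrac1d F_{AB}$ where $F_{AB}$ is the swap operator (with $\|F_{AB}\|_\infty = 1$), I would bound
\begin{equation*}
\tr(\Phi^+_{AB}\sigma_{AB}) = \tfrac1d\,\tr(F_{AB}\,\sigma_{AB}^{T_B}) \le \tfrac1d\,\|\sigma_{AB}^{T_B}\|_1 \le \tfrac1d\,\tr\omega_{AB} = \tfrac1d\,\tr\omega_{AB}^{T_B} \le \tfrac1d\,\|\omega_{AB}^{T_B}\|_1 \le \tfrac1d,
\end{equation*}
where the step $\|\sigma_{AB}^{T_B}\|_1\le\tr\omega_{AB}$ is exactly Lemma~\ref{lem:trnormAtotrB} applied to the constraint $-\omega_{AB}\preceq\sigma_{AB}^{T_B}\preceq\omega_{AB}$, and $\tr\omega_{AB}=\tr\omega_{AB}^{T_B}$ since partial transpose preserves trace. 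For the lower bound (achievability), I would exhibit an explicit feasible point: take $\sigma_{AB} = \tfrac1d \Phi^+_{AB}$, so $\sigma_{AB}^{T_B} = \tfrac1{d^2}F_{AB}$, and set $\omega_{AB} = \tfrac1{d^2}I_{AB}$ (or $\tfrac1{d^2}|F_{AB}| = \tfrac1{d^2}I_{AB}$ since $F_{AB}^2=I$). Then $-\omega_{AB}\preceq\sigma_{AB}^{T_B}\preceq\omega_{AB}$ holds because $\|F_{AB}\|_\infty\le 1$, and $\|\omega_{AB}^{T_B}\|_1 = \tfrac1{d^2}\|I_{AB}\|_1 = \tfrac1{d^2}\cdot d^2 = 1$, so $\sigma_{AB}\in\Freesec$; this gives $\tr(\Phi^+_{AB}\sigma_{AB}) = \tfrac1d$. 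Hence $\max_{\sigma\in\Freesec}F(\Phi^+_{AB},\sigma) = 1/d$, and taking $-\log$ yields $E_{\NB,2}^{1/2}(\Phi^+_{AB}) = \log d$.

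The only mildly delicate point — the "hard part," though it is still routine — is carrying through the partial-transpose manipulations correctly, in particular verifying $(\Phi^+_{AB})^{T_B} = \tfrac1d F_{AB}$ and using the duality bound $|\tr(XY)|\le\|X\|_\infty\|Y\|_1$ with $\|F_{AB}\|_\infty=1$. Everything else is a direct assembly of Lemma~\ref{lem:trnormAtotrB}, trace-invariance of the partial transpose, and the explicit feasible construction; no regularization or asymptotics is needed since the statement is single-copy.
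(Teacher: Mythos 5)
Your proof is correct, and it takes a genuinely different (and more self-contained) route than the paper. The paper obtains the direction $E_{\NB,2}^{1/2}(\Phi^+_{AB})\le \log d$ by invoking Theorem~\ref{thm:min_thm} together with $E_C(\Phi^+_{AB})=\log d$ for pure states, and for the other direction it exhibits the feasible point $\Phi^+_{AB}/d\in\Freesec$; note, however, that feasibility of $\Phi^+_{AB}/d$ only shows $\max_{\sigma\in\Freesec}F(\Phi^+_{AB},\sigma)\ge 1/d$, i.e.\ again $E_{\NB,2}^{1/2}(\Phi^+_{AB})\le\log d$, so the bound $E_{\NB,2}^{1/2}(\Phi^+_{AB})\ge\log d$ is not actually argued in the paper's proof as written. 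Your argument supplies precisely that missing half: the chain $\tr(\Phi^+_{AB}\sigma_{AB})=\tfrac1d\tr\bigl(F_{AB}\,\sigma_{AB}^{T_B}\bigr)\le\tfrac1d\|\sigma_{AB}^{T_B}\|_1\le\tfrac1d\tr\omega_{AB}\le\tfrac1d\|\omega_{AB}^{T_B}\|_1\le\tfrac1d$ is a correct application of H\"older's inequality, Lemma~\ref{lem:trnormAtotrB}, and trace-invariance of the partial transpose (using $\omega_{AB}\succeq0$, which follows from $-\omega_{AB}\preceq\sigma_{AB}^{T_B}\preceq\omega_{AB}$), and your achievability construction coincides with the paper's. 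The net effect is that your proof avoids any appeal to the operational meaning of $E_{\NB,2}^{1/2}$ and closes both directions purely at the level of the SDP, which is both more elementary and tighter in logical dependencies.
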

\begin{proof}
Firstly, we note that the substate $\Phi^+_{AB} / d\in \Freesec(A:B)$, as $\left|(\Phi^+_{AB})^{T_B}\right|^{T_B} = |S_d|^{T_B}/d = I_{AB}/d$,
where we denote $S_{d} = (\Phi_{AB}^+)^{T_B}$. This yields that $1/d = F(\Phi^+_{AB}, \Phi^+_{AB}/d)\leq \widehat{F}_{\NB,2}(\Phi^+_{AB})$. Thus, we have $E_{\NB,2}^{1/2}(\Phi^+) \geq \log d$. Secondly, considering the entanglement cost of pure states, we directly have  $E_{\NB,2}^{1/2}(\Phi^+_{AB}) \leq S(\tr_B \Phi^+_{AB}) = \log d$. In conclusion, we have $E_{\NB,2}^{1/2}(\Phi^+_{AB}) = \log d$.
\end{proof}

\begin{proposition}[Faithfulness of $E_{\NB,2}^{1/2}$]\label{prop:faithful}
For any bipartite state $\rho_{AB}\in\mathscr{D}(\cH_A\ox\cH_B)$, $E_{\NB,2}^{1/2}(\rho_{AB}) \ge 0$. $E_{\NB,2}^{1/2}(\rho_{AB}) = 0$ if and only if $\rho_{AB} \in \PPT(A:B)$. This means the logarithmic fidelity of binegativity is faithful considering the resource theory of NPT entanglement.
\end{proposition}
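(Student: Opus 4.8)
\textit{Proof proposal.} The plan is to obtain this as a direct specialization of the general faithfulness result, Lemma~\ref{lem:faithfulness}. The pivotal observation is that setting $\alpha = 1/2$ in the sandwiched R\'enyi relative entropy gives the identity $\widetilde{D}_{1/2}(\rho\|\sigma) = -\log F(\rho,\sigma)$, so that $E_{\NB,2}^{1/2}(\rho_{AB}) = \min_{\sigma_{AB}\in\Freesec(A:B)}\widetilde{D}_{1/2}(\rho_{AB}\|\sigma_{AB})$ is exactly the generalized divergence of $2$-negativity $E_{\NB,2}(\rho_{AB})$ instantiated with the choice $\mathbf{D} = \widetilde{D}_{1/2}$. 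Thus the first step is to verify that $\widetilde{D}_{1/2}$ is a bona fide generalized divergence in the sense of properties (1)--(3) of the main text: the normalization $\widetilde{D}_{1/2}(\rho\|\rho)=0$ and the logarithmic scaling $\widetilde{D}_{1/2}(\rho\|c\sigma) = \widetilde{D}_{1/2}(\rho\|\sigma) - \log c$ are immediate from the definition, while the data-processing inequality for $\widetilde{D}_\alpha$ is known to hold for all $\alpha \ge 1/2$~\cite{Muller_Lennert2013,Wilde2014a}. Granting this, Lemma~\ref{lem:faithfulness} with $k=2$ delivers both assertions of the proposition simultaneously.

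Unwinding the argument concretely, I would organize the three claims as follows. For nonnegativity, note that every $\sigma_{AB}\in\Freesec(A:B)$ satisfies $\tr\sigma_{AB} = \tr\sigma_{AB}^{T_B} \le \|\sigma_{AB}^{T_B}\|_1 \le 1$ --- the same chain of inequalities used in the proof of Proposition~\ref{prop:hirearchy} --- and then the Cauchy--Schwarz (H\"older) inequality for Schatten norms gives $F(\rho_{AB},\sigma_{AB}) = \|\sqrt{\rho_{AB}}\sqrt{\sigma_{AB}}\|_1^2 \le \tr\rho_{AB}\cdot\tr\sigma_{AB} \le 1$, so $E_{\NB,2}^{1/2}(\rho_{AB}) = -\log\max_{\sigma_{AB}} F(\rho_{AB},\sigma_{AB}) \ge 0$. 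For the ``if'' direction, if $\rho_{AB}\in\PPT(A:B)$ then $\rho_{AB}\in\Freesec(A:B)$ by Proposition~\ref{prop:hirearchy}, so the value $F(\rho_{AB},\rho_{AB})=1$ is attained and $E_{\NB,2}^{1/2}(\rho_{AB})=0$. For the ``only if'' direction, compactness of $\Freesec(A:B)$ (it is bounded since $\tr\sigma_{AB}\le 1$, and closed since partial transpose and the trace norm are continuous) guarantees the maximum is attained at some $\sigma_{AB}^\star$; from $1 = F(\rho_{AB},\sigma_{AB}^\star) \le \tr\rho_{AB}\cdot\tr\sigma_{AB}^\star \le \tr\sigma_{AB}^\star \le 1$ we learn $\sigma_{AB}^\star$ is a genuine density operator, whence fidelity equal to one forces $\rho_{AB}=\sigma_{AB}^\star\in\Freesec(A:B)$, and Lemma~\ref{lem:LB_leq0_PPT} then yields $\rho_{AB}\in\PPT(A:B)$.

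I expect no genuine obstacle here: the statement is essentially a corollary once the data-processing inequality for $\widetilde{D}_{1/2}$ is invoked. The only points needing a little care are (i) the attainability of the optimum, i.e., compactness of $\Freesec(A:B)$, which is what lets us pass from the formal equality $\max_\sigma F = 1$ to an explicit optimizer, and (ii) the interplay of subnormalization with fidelity, where one must use the bound $F(\rho,\sigma)\le \tr\rho\,\tr\sigma$ (rather than just $F\le 1$) to conclude that the optimizer is in fact a state and hence equals $\rho_{AB}$. Both are routine, so the bulk of the write-up will simply be checking that $\widetilde{D}_{1/2}$ slots into the hypotheses of Lemma~\ref{lem:faithfulness}.
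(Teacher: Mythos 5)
Your proposal is correct and follows essentially the same route as the paper, which likewise proves this proposition by directly specializing the general faithfulness result (Lemma~\ref{lem:faithfulness}) to the divergence $\widetilde{D}_{1/2}=-\log F$ and then appealing to Lemma~\ref{lem:LB_leq0_PPT}. Your concrete unwinding (the bound $F(\rho,\sigma)\le\tr\rho\,\tr\sigma$ for subnormalized $\sigma$, and compactness of $\Freesec$ to extract an optimizer) merely fills in details the paper leaves implicit.
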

The faithfulness of $E_{\NB,2}^{1/2}$ can be directly obtained by Lemma~\ref{lem:faithfulness}.

\begin{lemma}
For a bipartite quantum state $\rho_{AB}\in\mathscr{D}(\cH_A\ox\cH_B)$, the logarithmic fidelity of binegativity of $\rho_{AB}$ can be calculated by $E_{\NB,2}^{1/2}(\rho_{AB}) = - 2\log \widehat{F}_{\NB,2}(\rho_{AB})$ where $\widehat{F}_{\NB,2}(\rho_{AB})$ can be computed via the following SDPs. The primal SDP reads
\begin{subequations}
\begin{align}
\widehat{F}_{\NB,2}(\rho_{AB}) = \max & \;\; \frac{1}{2}\tr(X_{AB} + X_{AB}^\dagger)\\
{\rm s.t.}   &\;\; M_{AB},N_{AB}\succeq 0,\\
&\; -M_{AB}^{T_B} + N_{AB}^{T_B} \preceq \sigma^{T_B}_{AB}\preceq M_{AB}^{T_B} - N_{AB}^{T_B}\label{Eq:sig_TB_omega}\\
&\; \tr(M_{AB} + N_{AB}) \leq 1,\label{Eq:omega_TB_leq1}\\
&\; \left(\begin{array}{cc}
\rho_{AB} & X_{AB} \\
X_{AB}^\dagger & \sigma_{AB}
\end{array}\right) \succeq 0,
\end{align}
\end{subequations}
and the dual SDP reads
\begin{equation}\label{sdp:dualSDP}
\begin{aligned}
\widehat{F}_{\NB,2}(\rho_{AB}) = \min & \;\; \tr(Q_{AB}\rho_{AB})\\
{\rm s.t.}  &\;\; U_{AB}\succeq 0,~V_{AB}\succeq 0,~ U_{AB}^{T_B} - V_{AB}^{T_B}\succeq R_{AB},\\
&\;\; -\tr(Q_{AB}\rho_{AB}) I_{AB} \preceq U_{AB}^{T_B} + V_{AB}^{T_B} \preceq \tr(Q_{AB}\rho_{AB}) I_{AB},\\
&\;\; \left(\begin{array}{cc}
Q_{AB} & -I_{AB} \\
-I_{AB} & R_{AB}
\end{array}\right) \succeq 0.
\end{aligned}
\end{equation}
\end{lemma}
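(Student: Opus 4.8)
\emph{Plan.} I would establish the lemma in three parts.

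\emph{Part 1 (reduction to a root fidelity).} At $\alpha=\frac{1}{2}$ the sandwiched R\'enyi relative entropy reads $\widetilde{D}_{1/2}(\rho\|\sigma)=-2\log\tr\big[(\sigma^{1/2}\rho\,\sigma^{1/2})^{1/2}\big]=-\log F(\rho,\sigma)$ for any state $\rho$ and any subnormalised $\sigma\succeq0$, so $E_{\NB,2}^{1/2}(\rho_{AB})=\min_{\sigma\in\Freesec(A:B)}\widetilde{D}_{1/2}(\rho_{AB}\|\sigma_{AB})=-\log\max_{\sigma\in\Freesec(A:B)}F(\rho_{AB},\sigma_{AB})$. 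Writing $\sqrt{F(\rho,\sigma)}=\tr[(\sqrt{\sigma}\rho\sqrt{\sigma})^{1/2}]\ge0$ and using $\max_{\sigma}F=(\max_{\sigma}\sqrt{F})^{2}$, it is enough to show that the primal SDP equals $\widehat{F}_{\NB,2}(\rho_{AB})=\max_{\sigma\in\Freesec(A:B)}\sqrt{F(\rho_{AB},\sigma_{AB})}$; then $E_{\NB,2}^{1/2}=-2\log\widehat{F}_{\NB,2}$ as stated.

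\emph{Part 2 (the primal SDP).} I would combine two ingredients. First, Watrous' SDP for the root fidelity of positive semidefinite operators: $\sqrt{F(\rho,\sigma)}=\max\{\frac{1}{2}\tr(X+X^{\dagger}):\left(\begin{smallmatrix}\rho & X\\ X^{\dagger}&\sigma\end{smallmatrix}\right)\succeq0\}$, which automatically forces $\sigma\succeq0$ so that positivity of $\sigma$ need not be imposed separately. Second, an SDP reformulation of membership in $\Freesec(A:B)$: by the Jordan/Hahn decomposition of the Hermitian operator $\omega^{T_B}$, the constraint $\|\omega^{T_B}\|_1\le1$ is equivalent to the existence of $M,N\succeq0$ with $\omega^{T_B}=M-N$ and $\tr(M+N)\le1$ (the canonical decomposition saturates the trace norm for one direction; $\|M-N\|_1\le\tr M+\tr N$ gives the other). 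Putting $\omega=M^{T_B}-N^{T_B}$ turns $-\omega\preceq\sigma^{T_B}\preceq\omega$ into exactly \eqref{Eq:sig_TB_omega} (with $\omega\succeq0$ automatic from the two-sided inequality) and $\tr(M+N)\le1$ into \eqref{Eq:omega_TB_leq1}. Maximising over $X$ inside the maximisation over $(\sigma,M,N)$ then yields the stated primal SDP; its feasible set is nonempty and compact --- the bound on $\tr(M+N)$ bounds $M,N$, hence $\sigma$ by \eqref{Eq:sig_TB_omega}, hence $X$ by the block constraint --- so the optimum is attained.

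\emph{Part 3 (the dual SDP and strong duality).} I would obtain \eqref{sdp:dualSDP} by Lagrangian SDP duality: attach a positive semidefinite multiplier $\left(\begin{smallmatrix}Q_{AB}&-I_{AB}\\ -I_{AB}&R_{AB}\end{smallmatrix}\right)$ to the block constraint, operators $U_{AB},V_{AB}\succeq0$ to the two halves of \eqref{Eq:sig_TB_omega}, and a nonnegative scalar to \eqref{Eq:omega_TB_leq1}, then eliminate the primal variables $X$ (which pins the off-diagonal multiplier block to $-I_{AB}$), $\sigma$, $M$ and $N$, repeatedly using the self-adjoint involution identity $\tr(A^{T_B}B)=\tr(AB^{T_B})$ to carry the partial transposes onto $U_{AB},V_{AB}$ and rescaling to match the stated normalisation. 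Equivalently, one may dualise the fixed-$\sigma$ fidelity SDP (Watrous' dual), swap the resulting $\min$--$\max$ by Sion's theorem --- legitimate since $\Freesec(A:B)$ is convex and compact and the objective is affine in each argument --- and then dualise the remaining support-function SDP over $\Freesec(A:B)$. Strong duality, which makes both SDP values equal $\widehat{F}_{\NB,2}(\rho_{AB})$ and thereby (with Parts~1 and~2) finishes the proof, follows from Slater's condition, for which it suffices to exhibit a strictly feasible dual point: taking every dual variable a suitable positive multiple of the identity, e.g.\ $Q_{AB}=cI_{AB}$, $R_{AB}=\frac{2}{c}I_{AB}$, $U_{AB}=(\frac{2}{c}+2\epsilon)I_{AB}$, $V_{AB}=\epsilon I_{AB}$ with $c$ large and $\epsilon>0$ small, makes the block multiplier positive definite and every operator inequality in \eqref{sdp:dualSDP} strict. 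The main obstacle is the sign/scaling/partial-transpose bookkeeping in Part~3 --- making sure the eliminated primal variables reproduce \eqref{sdp:dualSDP} on the nose rather than a superficially different but equivalent program; Parts~1 and~2 and the Slater verification are routine once the set-up is fixed.
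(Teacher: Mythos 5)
Your proposal is correct and follows essentially the same route as the paper: Watrous' block-matrix SDP for the root fidelity, the Jordan decomposition $\omega_{AB}^{T_B}=M_{AB}-N_{AB}$ with $\tr(M_{AB}+N_{AB})\le 1$ to make membership in $\Freesec$ semidefinite-representable, and Lagrangian duality justified by Slater's condition. The only notable differences are that the paper certifies Slater with a strictly feasible \emph{primal} point ($M_{AB}=\delta_2 I$, $N_{AB}=\delta_1 I$, $\sigma_{AB}=\delta_3 I$) rather than a dual one, and that the normalisation ``bookkeeping'' you defer in Part~3 is resolved in the paper by an explicit AM--GM plus homogeneity argument (rescaling $(Q_{AB},t)\mapsto(\eta Q_{AB},t/\eta)$) to pass from the Lagrangian objective $\tfrac12\big(\tr(Q_{AB}\rho_{AB})+t\big)$ to $\tr(Q_{AB}\rho_{AB})$ with the constraint $t=\tr(Q_{AB}\rho_{AB})$ absorbed into the stated dual.
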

\begin{proof}
Recall that $\Freesec(A:B)$ is given by
\begin{equation*}
    \Freesec(A:B)\coloneqq \big\{\sigma_{AB}\succeq 0: \exists\omega_{AB}\in\mathscr{L}(\cH_{AB}),\ \text{s.t.} -\omega_{AB}\preceq \sigma_{AB}^{T_B} \preceq \omega_{AB}, \|\omega_{AB}^{T_B}\|_1 \le 1\big\}.
\end{equation*} 
The logarithmic fidelity of binegativity is given by $E_{\NB,2}^{1/2}(\rho_{AB}) = - 2\log \widehat{F}_{\NB,2}(\rho_{AB})$ where
\begin{equation}\label{eq:EN_1/2}
\begin{aligned}
    \widehat{F}_{\NB,2}(\rho_{AB}) = \max & \;\; \frac{1}{2}\tr(X_{AB} + X_{AB}^\dagger)\\
     {\rm s.t.}   &\;\; -\omega_{AB} \preceq \sigma_{AB}^{T_B} \preceq \omega_{AB},\\
     &\;\;\tr|\omega_{AB}^{T_B}| \leq 1,\\
     &\;\; \left(\begin{array}{cc}
         \rho_{AB} & X_{AB} \\
         X_{AB}^\dagger & \sigma_{AB}
     \end{array}\right) \succeq 0.
\end{aligned}
\end{equation}
It can be equivalently written as
\begin{subequations}\label{sdp:primalSDP}
\begin{align}
    \widehat{F}_{\NB,2}(\rho_{AB}) = \max & \;\; \frac{1}{2}\tr(X_{AB} + X_{AB}^\dagger)\\
     {\rm s.t.}   &\;\; M_{AB},N_{AB}\succeq 0,\\
     &\; -M_{AB}^{T_B} + N_{AB}^{T_B} \preceq \sigma^{T_B}_{AB}\preceq M_{AB}^{T_B} - N_{AB}^{T_B}\\
     &\; \tr(M_{AB} + N_{AB}) \leq 1,\\
     &\; \left(\begin{array}{cc}
         \rho_{AB} & X_{AB} \\
         X_{AB}^\dagger & \sigma_{AB}
     \end{array}\right) \succeq 0.
\end{align}
\end{subequations}
Introducing the Lagrange multipliers $U,V,\left(\begin{array}{cc}
         Q & Y \\
         Y^\dagger & R
    \end{array}\right)$ and $t \in \RR$, the Lagrangian of this primal SDP is given by
\begin{equation}
\begin{aligned}
    &L(Q,R,U,V,Y,t)\\
    = &\; \frac{1}{2}\tr(X + X^\dagger) + \tr\Big[U\big(M^{T_B}-N^{T_B} - \sigma_{AB}^{T_B}\big)\Big] + \tr\Big[V\big(M^{T_B}-N^{T_B} + \sigma_{AB}^{T_B}\big)\Big]\\
    &+ \tr\left(\left(
    \begin{array}{cc}
         Q & Y \\
         Y^\dagger & R
    \end{array}
    \right)\left(
    \begin{array}{cc}
         \rho & X \\
         X^\dagger & \sigma_{AB}
    \end{array}
    \right)\right) + t\Big[1-\tr(M+N)\Big]\\
    = &\; \tr(Q\rho) + t + \tr\Big[\sigma_{AB}(V^{T_B} - U^{T_B} + R)\Big] + \tr\Big[M \big(U^{T_B} + V^{T_B} - tI \big)\Big] + \tr\Big[N\big(-U^{T_B}-V^{T_B} - tI\big)\Big]\\
    &+ \tr\Big[X(Y^\dagger + I/2)\Big] + \tr\Big[X^\dagger(Y + I/2)\Big].
\end{aligned}
\end{equation}
Hence, the dual SDP is given by
\begin{equation}
\begin{aligned}
\min & \;\; \tr(Q_{AB}\rho_{AB}) + t\\
{\rm s.t.}   &\;\; U_{AB}\succeq 0,~V_{AB}\succeq 0,~ U_{AB}^{T_B} - V_{AB}^{T_B}\succeq R_{AB},\\
&\;\; -tI_{AB} \preceq U_{AB}^{T_B} + V_{AB}^{T_B} \preceq tI_{AB},\\
&\;\; \left(\begin{array}{cc}
Q_{AB} & -I_{AB}/2 \\
-I_{AB}/2 & R_{AB}
\end{array}\right) \succeq 0.
\end{aligned}
\end{equation}
Note that for any $\delta_1,\delta_2,\delta_3$ such that $0 \leq \delta_1 < \delta_2,~\delta_1+\delta_2 \leq \frac{1}{d_Ad_B}$ and $\delta_2-\delta_1 > \delta_3>0$, $\{M_{AB} = \delta_2 I_{AB}, N_{AB} = \delta_1 I_{AB},\sigma_{AB} = \delta_3 I_{AB}\}$ is strictly feasible for the SDP~\eqref{sdp:primalSDP}. The Slater condition for strong duality is satisfied. Notice that by rescaling the variables, the dual SDP becomes,
\begin{equation}
\begin{aligned}
\min & \;\; \frac{1}{2}(\tr(Q_{AB}\rho_{AB}) + t)\\
{\rm s.t.}   &\;\; U_{AB}\succeq 0,~V_{AB}\succeq 0,~ U_{AB}^{T_B} - V_{AB}^{T_B}\succeq R_{AB},\\
&\;\; -tI_{AB} \preceq U_{AB}^{T_B} + V_{AB}^{T_B} \preceq tI_{AB},\\
&\;\; \left(\begin{array}{cc}
Q_{AB} & -I_{AB} \\
-I_{AB} & R_{AB}
\end{array}\right) \succeq 0.\label{Eq.dual_problem_schur_constraint}
\end{aligned}
\end{equation}
Notice that the constraint in Eq.~\eqref{Eq.dual_problem_schur_constraint} will lead to $Q_{AB}, R_{AB} \succeq 0$ and $R_{AB}\succeq Q_{AB}^{-1}$. It follows $\tr[Q_{AB}\rho_{AB}] \geq 0$. We also have $t\geq 0$. By the arithmetic-geometric mean inequality, we have
\begin{equation}
\frac{1}{2}\tr[Q_{AB}\rho_{AB}] + \frac{1}{2}t \geq \sqrt{t\cdot \tr[Q_{AB}\rho_{AB}]},    
\end{equation}
where the equality holds iff $\tr[Q_{AB}\rho_{AB}] = t$. In fact, if $\{Q_{AB}, t\}$ is a feasible solution, we can always find a feasible solution $\{\eta Q_{AB}, t/\eta\}$ where $\eta\in\mathbb{R}$ is a scaling factor such that $t =\eta^2 \tr[Q_{AB}\rho_{AB}]$. Also, $\{Q_{AB}, t\}$ and $\{\eta Q_{AB}, t/\eta\}$ will lead to the same value of $\sqrt{t\cdot \tr[Q_{AB}\rho_{AB}]}$. Hence, after restricting the constraint to $\tr[Q_{AB}\rho_{AB}] = t$, we have the following equivalent dual problem.
\begin{equation}
\begin{aligned}
\widehat{F}_{\NB,2}(\rho_{AB}) = \min & \;\; \tr(Q_{AB}\rho_{AB})\\
{\rm s.t.}  &\;\; U_{AB}\succeq 0,~V_{AB}\succeq 0,~ U_{AB}^{T_B} - V_{AB}^{T_B}\succeq R_{AB},\\
&\;\; -\tr(Q_{AB}\rho_{AB}) I_{AB} \preceq U_{AB}^{T_B} + V_{AB}^{T_B} \preceq \tr(Q_{AB}\rho_{AB}) I_{AB},\\
&\;\; \left(\begin{array}{cc}
Q_{AB} & -I_{AB} \\
-I_{AB} & R_{AB}
\end{array}\right) \succeq 0.
\end{aligned}
\end{equation}
\end{proof}

One nice property of the logarithmic fidelity of binegativity is its additivity with respect to the tensor product of states.

\begin{lemma}[Additivity w.r.t. tensor product]\label{lem:additiv}
For any bipartite quantum states $\rho_{0},\rho_1 \in \mathscr{D}(\cH_A\ox \cH_B)$,
\begin{equation}
    E_{\NB,2}^{1/2}(\rho_0\ox \rho_1) = E_{\NB,2}^{1/2}(\rho_0) + E_{\NB,2}^{1/2}(\rho_1).
\end{equation}
\end{lemma}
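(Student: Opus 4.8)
The plan is to write $\widehat{F}_{\NB,2}(\rho)=\max_{\sigma\in\Freesec(A:B)}\sqrt{F(\rho,\sigma)}$ and, recalling from the preceding lemma that $E_{\NB,2}^{1/2}=-2\log\widehat{F}_{\NB,2}$, to prove the multiplicativity $\widehat{F}_{\NB,2}(\rho_0\ox\rho_1)=\widehat{F}_{\NB,2}(\rho_0)\,\widehat{F}_{\NB,2}(\rho_1)$ by establishing the two inequalities separately.

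\emph{Lower bound (superadditivity of $E_{\NB,2}^{1/2}$).} I would first record the elementary operator fact: if $A_0,A_1$ are Hermitian with $-\Omega_i\preceq A_i\preceq\Omega_i$ (which forces $\Omega_i\succeq0$), then $-\Omega_0\ox\Omega_1\preceq A_0\ox A_1\preceq\Omega_0\ox\Omega_1$; this follows by setting $B_i^{\pm}:=\Omega_i\pm A_i\succeq0$ and checking the identity $\Omega_0\ox\Omega_1\pm A_0\ox A_1=\tfrac12\big(B_0^{+}\ox B_1^{\pm}+B_0^{-}\ox B_1^{\mp}\big)\succeq0$. Applying this with $A_i=\sigma_i^{T_{B_i}}$ and $\Omega_i=\omega_i$ shows that if $\sigma_i\in\Freesec(A_i:B_i)$ with witness $\omega_i$, then (after the canonical identification $\cH_{A_0B_0}\ox\cH_{A_1B_1}\cong\cH_{(A_0A_1)(B_0B_1)}$) one has $\sigma_0\ox\sigma_1\in\Freesec(A_0A_1:B_0B_1)$ with witness $\omega_0\ox\omega_1$, using also $\|(\omega_0\ox\omega_1)^{T_B}\|_1=\|\omega_0^{T_{B_0}}\|_1\,\|\omega_1^{T_{B_1}}\|_1\le1$. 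Since the root fidelity is multiplicative under tensor products, maximizing over $\sigma_0,\sigma_1$ gives $\widehat{F}_{\NB,2}(\rho_0\ox\rho_1)\ge\widehat{F}_{\NB,2}(\rho_0)\widehat{F}_{\NB,2}(\rho_1)$.

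\emph{Upper bound (subadditivity of $E_{\NB,2}^{1/2}$).} For this direction I would invoke the dual SDP~\eqref{sdp:dualSDP}. Let $(Q_i,R_i,U_i,V_i)$ be optimal dual points for $\rho_i$, so that $\widehat{F}_{\NB,2}(\rho_i)=\tr(Q_i\rho_i)=:t_i$; the Schur-complement constraint forces $Q_i\succ0$ and $R_i\succeq Q_i^{-1}$. I would then propose the product point
\[
Q=Q_0\ox Q_1,\quad R=R_0\ox R_1,\quad U=U_0\ox U_1+V_0\ox V_1,\quad V=U_0\ox V_1+V_0\ox U_1,
\]
which is manifestly positive semidefinite in $U,V$ and satisfies $\tr\big(Q\,(\rho_0\ox\rho_1)\big)=t_0t_1$. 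Setting $P_i=U_i^{T_{B_i}}$ and $W_i=V_i^{T_{B_i}}$, one verifies that the partial transpose of the off-diagonal combinations telescopes, $U^{T_B}-V^{T_B}=(P_0-W_0)\ox(P_1-W_1)$ and $U^{T_B}+V^{T_B}=(P_0+W_0)\ox(P_1+W_1)$; the first is $\succeq R_0\ox R_1=R$ because $P_i-W_i\succeq R_i\succeq0$ (a product of positive-semidefinite orderings), the second has operator norm $\le t_0t_1$ because $\|P_i+W_i\|\le t_i$, and $R_0\ox R_1\succeq(Q_0\ox Q_1)^{-1}$ handles the remaining Schur constraint. Hence the product point is dual-feasible with objective $t_0t_1$, so $\widehat{F}_{\NB,2}(\rho_0\ox\rho_1)\le t_0t_1=\widehat{F}_{\NB,2}(\rho_0)\widehat{F}_{\NB,2}(\rho_1)$. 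Combining the two bounds and taking $-2\log$ yields the claimed additivity.

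\emph{Main obstacle.} The only nontrivial point is the dual construction: one needs a combination of $U_i,V_i$ that stays in the positive-semidefinite cone, keeps the objective multiplicative, and---crucially---collapses under $T_B$ into single pure tensor products $(P_0\mp W_0)\ox(P_1\mp W_1)$, so that the two linear-matrix-inequality constraints decouple into products of single-system inequalities. The ``swap'' term $V_0\ox U_1$ in $V$ is exactly what arranges this, whereas the naive choice $U=U_0\ox U_1,\ V=V_0\ox V_1$ leaves $U^{T_B}-V^{T_B}$ as a difference of tensor products and does not decouple. The superadditive half, by contrast, is routine once the $B_i^{\pm}$ operator lemma is in hand.
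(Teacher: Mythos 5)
Your proof is correct and follows essentially the same route as the paper's: the same primal tensor-product feasibility argument via the identity $\omega_0\ox\omega_1\pm\sigma_0^{T_B}\ox\sigma_1^{T_B}=\tfrac12\big((\omega_0+\sigma_0^{T_B})\ox(\omega_1\pm\sigma_1^{T_B})+(\omega_0-\sigma_0^{T_B})\ox(\omega_1\mp\sigma_1^{T_B})\big)$ for one inequality, and the same dual construction $U=U_0\ox U_1+V_0\ox V_1$, $V=U_0\ox V_1+V_0\ox U_1$ for the other. The only (cosmetic) slip is that your parenthetical labels are swapped: $\widehat{F}_{\NB,2}(\rho_0\ox\rho_1)\ge\widehat{F}_{\NB,2}(\rho_0)\,\widehat{F}_{\NB,2}(\rho_1)$ gives \emph{sub}additivity of $E_{\NB,2}^{1/2}=-2\log\widehat{F}_{\NB,2}$, while the dual bound gives \emph{super}additivity.
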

\begin{proof}
The additivity of $E_{\NB,2}^{1/2}(\cdot)$ can be obtained from the multiplicativity of $\widehat{F}_{\NB,2}(\cdot)$. First, we are going to prove 
\begin{equation}
    \widehat{F}_{\NB,2}(\rho_0\ox\rho_1) \leq \widehat{F}_{\NB,2}(\rho_0)  \widehat{F}_{\NB,2}(\rho_1).
\end{equation}
If we assume the optimal solution to the SDP~\eqref{sdp:dualSDP} for $\widehat{F}_{\NB,2}(\rho_0)$ and $\widehat{F}_{\NB,2}(\rho_1)$ are $\{Q_0,R_0,U_0,V_0\}$ and $\{Q_1,R_1,U_1,V_1\}$, respectively. Then, let
\begin{equation}
\begin{aligned}
    U &= U_0\ox U_1 + V_0 \ox V_1,\\
    V &= U_0\ox V_1 + V_0 \ox U_1.
\end{aligned}
\end{equation}
It follows
\begin{equation}
\begin{aligned}
    (U + V)^{T_{B}} &= (U_0 + V_0)^{T_{B}} \ox (U_1 + V_1)^{T_{B}},\\
    (U - V)^{T_{B}} &= (U_0 - V_0)^{T_B} \ox (U_1 - V_1)^{T_B}.
\end{aligned}
\end{equation}
Then we have
\begin{equation}
    -\tr[(Q_0\ox Q_1)(\rho_0\ox\rho_1)]I \preceq (U + V)^{T_B} \preceq \tr[(Q_0\ox Q_1)(\rho_0\ox\rho_1)]I,
\end{equation}
and $(U - V)^{T_{B}} \geq R_0\ox R_1$. 
Therefore, $\{Q_0\ox Q_1,R_0\ox R_1,U, V\}$ is a feasible solution to the SDP of $\widehat{F}_{\NB,2}(\rho_0\ox\rho_1)$ which implies
\begin{equation}\label{Eq:F_sup_add}
    \widehat{F}_{\NB,2}(\rho_0\ox\rho_1) \leq \tr(Q_0\rho_0)\tr(Q_1\rho_1) = \widehat{F}_{\NB,2}(\rho_0)  \widehat{F}_{\NB,2}(\rho_1).
\end{equation}
Second, we shall show that
\begin{equation*}
    \widehat{F}_{\NB,2}(\rho_0\ox\rho_1) \ge \widehat{F}_{\NB,2}(\rho_0)  \widehat{F}_{\NB,2}(\rho_1).
\end{equation*}
If $\sigma_0,\sigma_1\in \Freesec(A:B)$ are the optimal solutions in Eq.~\eqref{eq:EN_1/2} for $\rho_0$ and $\rho_1$, respectively, then there exists $\omega_0,\omega_1\in\mathscr{P}(\cH_{AB})$ such that
\begin{equation}
\begin{aligned}
    &-\omega_0 \leq \sigma_0^{T_B} \leq \omega_0,~\|\omega_0^{T_B}\|_1\leq 1,\\
    &-\omega_1 \leq \sigma_1^{T_B} \leq \omega_1,~\|\omega_1^{T_B}\|_1\leq 1.
\end{aligned}
\end{equation}
It follows that
\begin{equation}
    (\sigma_0^{T_B}+\omega_0)\ox(\sigma_1^{T_B}+\omega_1) \geq 0,~ (\omega_0-\sigma_0^{T_B})\ox(\omega_1-\sigma_1^{T_B}) \geq 0,
\end{equation}
which gives
\begin{equation}
\begin{aligned}
    &\sigma_0^{T_B}\ox\sigma_1^{T_B} + \omega_0\ox\sigma_1^{T_B} + \sigma_0^{T_B}\ox \omega_1 + \omega_0\ox \omega_1\geq 0,\\
    &\omega_0\ox \omega_1 - \sigma_0^{T_B}\ox\omega_1 - \omega_0\ox\sigma_1^{T_B} + \sigma_0^{T_B}\ox \sigma_1^{T_B} \geq 0.
\end{aligned}
\end{equation}
Summing these two inequalities will yield $\sigma_0\ox\sigma_1 \geq -\omega_0\ox\omega_1$. Meanwhile, consider that
\begin{equation}
    (\sigma_0^{T_B}+\omega_0)\ox(\omega_1-\sigma_1^{T_B}) \geq 0,~(\omega_0-\sigma_0^{T_B}) \ox (\sigma_1^{T_B}+\omega_1) \geq 0,
\end{equation}
which gives
\begin{equation}
\begin{aligned}
    &\sigma_0^{T_B}\ox\omega_1 + \omega_0\ox\omega_1 - \sigma_0^{T_B}\ox \sigma_1^{T_B} - \omega_0\ox \sigma_1^{T_B} \geq 0,\\
    &\omega_0 \ox \sigma_1^{T_B}- \sigma_0^{T_B}\ox \sigma_1^{T_B} +\omega_0\ox\omega_1 - \sigma_0^{T_B} \ox\omega_1\geq 0.
\end{aligned}
\end{equation}
Summing these two inequalities will yield $\sigma_0\ox\sigma_1 \leq \omega_0\ox\omega_1$. Note that $\|\omega_0\ox\omega_1\|_1 \leq \|\omega_0\|_1\|\omega_1\|_1\leq 1$. Therefore, we have $-\omega_0\ox\omega_1\leq \sigma_0\ox\sigma_1 \leq \omega_0\ox\omega_1$ which indicates that $\sigma_0 \ox \sigma_1$ is a feasible solution for $\rho_0 \ox \rho_1$. Therefore, we have
\begin{equation}\label{Eq:F_sub_add}
    \widehat{F}_{\NB,2}(\rho_0\ox\rho_1) \geq \widehat{F}_{\NB,2}(\rho_0)  \widehat{F}_{\NB,2}(\rho_1).
\end{equation}
Combining Eq.~\eqref{Eq:F_sup_add}, Eq.~\eqref{Eq:F_sub_add}, we conclude that 
\begin{equation*}
    \widehat{F}_{\NB,2}(\rho_0\ox\rho_1) = \widehat{F}_{\NB,2}(\rho_0)  \widehat{F}_{\NB,2}(\rho_1),
\end{equation*}
which implies
\begin{equation}
    E_{\NB,2}^{1/2}(\rho_0\ox \rho_1) = E_{\NB,2}^{1/2}(\rho_0) + E_{\NB,2}^{1/2}(\rho_1).
\end{equation}
\end{proof}

\begin{figure}[t]
    \centering
    \includegraphics[width=.75\linewidth]{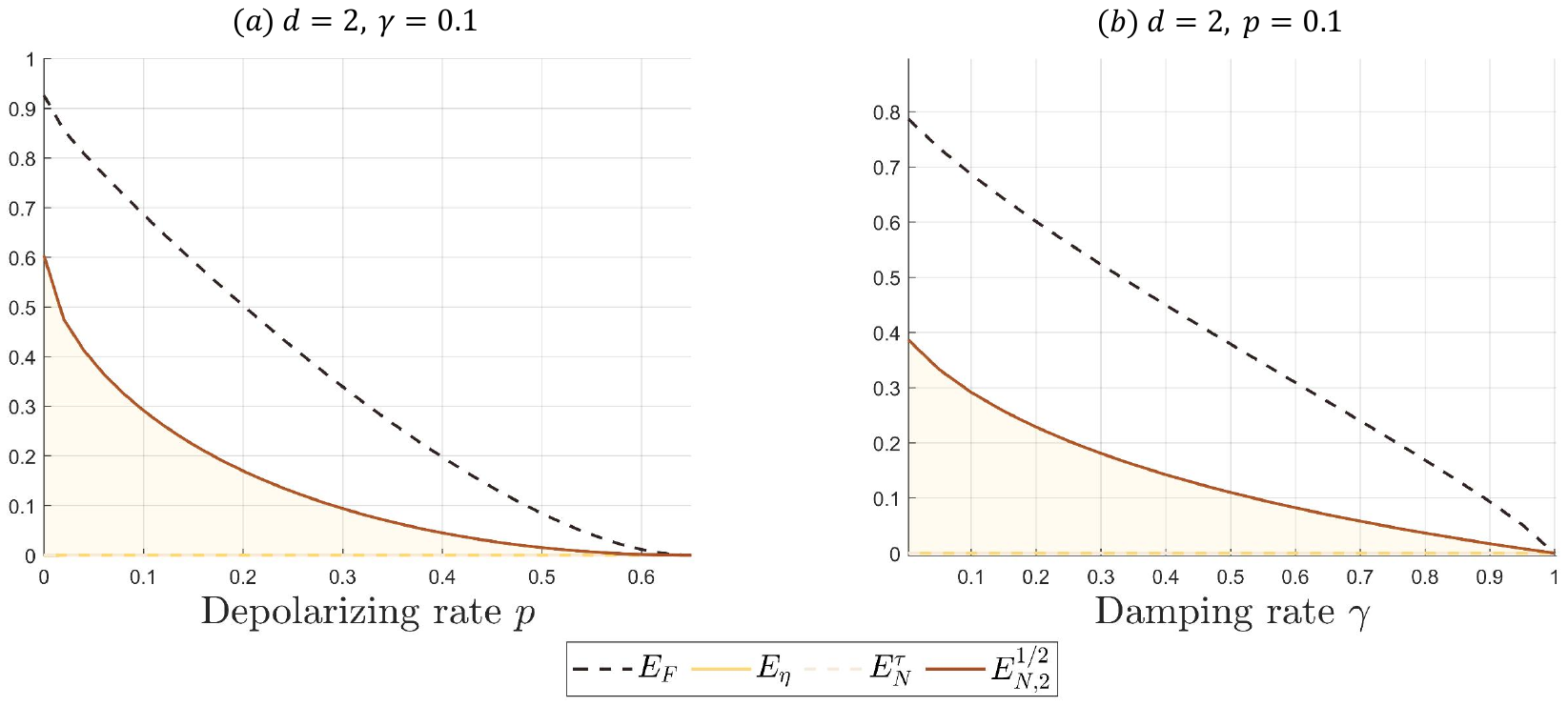}
    \caption{\textbf{Different bounds on the entanglement cost of noisy Bell states.} Panel (a-b) depict different bounds on the entanglement cost of $\rho_{AB} = \cA_{A\rightarrow A'}\ox \cD_{B\rightarrow B'}(\Phi_{AB}^{+})$. In Panel (a), we set $\gamma=0.1$. The $x$-axis represents the change of the depolarizing noise $p$. In Panel (b), we set $p=0.1$. The $x$-axis represents the change of the amplitude damping noise $\gamma$. The entanglement of formation $E_F(\cdot)$ provides an upper bound on $E_C(\cdot)$. $E_\eta(\cdot)$ and $E_{N}^\tau(\cdot)$ are computable lower bounds given in Ref.~\cite{Wang2016d} and Ref.~\cite{Lami2023a}, respectively. It shows that $E_{\NB,2}^{1/2}(\cdot)$ outperforms the other two lower bounds, both of which vanish for noisy Bell states having full rank in the case $\gamma >0$.}\label{fig:bound_comparison_qubit_qutrit_noisy_mes}
\end{figure}

\section{Comparison to other computable bounds}\label{appendix:comp_to_other_computable_bounds}
In fact, due to the faithfulness of the logarithmic fidelity of binegativity, i.e., $E_{\NB,2}^{1/2}(\rho_{AB})>0$ for any $\NPT$ state $\rho_{AB}$, it could be more practical and valuable compared with previous efficiently computable lower bounds $E_{\eta}(\cdot)$ and the \textit{tempered negativity bound} in Ref.~\cite{Lami2023a}.
We show the effectiveness of $E_{\NB,2}^{1/2}(\cdot)$ for lower bounding the entanglement cost of some general noisy states. Consider a two-qubit quantum state
\begin{equation}
    \rho_{AB} = \cA_{A\rightarrow A'}\ox \cD_{B\rightarrow B'}(\Phi_{AB}^{+}),
\end{equation}
where $\Phi_{AB}^+$ is the Bell state, $\cA_{A\rightarrow A'}$ is an amplitude damping channel with Kraus operators $K_0 = \ketbra{0}{0} + \sqrt{1-\gamma}\ketbra{1}{1}, K_1 = \sqrt{\gamma}\ketbra{0}{1}$, and $\cD_{B\rightarrow B'}$ is a qubit depolarizing channel such that $\cD(\rho) = p \rho + (1-p)I/2$.

In Fig.~\ref{fig:bound_comparison_qubit_qutrit_noisy_mes} (a), we can see that when $\gamma=0.1$ and the depolarizing rate $p$ varies from $0$ to $0.66$, $E_{\NB,2}^{1/2}(\rho_{AB})$ is strictly larger than previous bound $E_{\eta}(\rho_{AB})$ and $E_{N}^\tau(\rho_{AB})$, both of which are trivially vanishing. When $p$ is larger than 0.66, the state becomes a PPT state, thus having a zero entanglement cost. Our bound exhibits a smaller gap between the entanglement of formation, which serves as an upper bound for $E_C(\rho_{AB})$, implying a more accurate estimation of the entanglement cost. 
In Fig.~\ref{fig:bound_comparison_qubit_qutrit_noisy_mes} (b), we see that when $p=0.1$ and $\gamma$ varies from $0$ to $1$, our bound outperforms the previous bounds and serves as a better estimation on the entanglement cost of $\rho_{AB}$.

\section{Entanglement cost of quantum channels}\label{appendix:dynamical}

We provide a variational method to approximate $\widehat{E}_{\NB, 2}^{1/2}(\cN_{A\rightarrow B})$, thus can potentially serve as an improved bound of $E_{\NB,2}^{1/2}(J_{\cN})$ by cases.
To showcase the potential of variational assisted lower bound. We construct parameterized quantum circuits (PQCs) to generate general pure states
that live in the composite system $AA'$. Note that such PQCs are useful in developing near-term quantum algorithms~\cite{Benedetti2019a,Cerezo2020,Chen2020a,Higgott2018,Wang2020d,Bharti2021,Endo2020a} and quantum information processing protocols~\cite{Zhao2021,Khatri2018,Meyer2021}. 

Specifically, given a $n$-qubit channel $\cN_{A\rightarrow B}$, we utilize a $2n$-qubit PQC to parameterize a state $\ket{\psi_{AA'}(\bm \theta)}$. We use the following PQC structure illustrating an $n=2$ example for single-qubit channels with in total $D=10$ repeating blocks.

\begin{figure}[t]
    \centering
    \includegraphics[width=.5\linewidth]{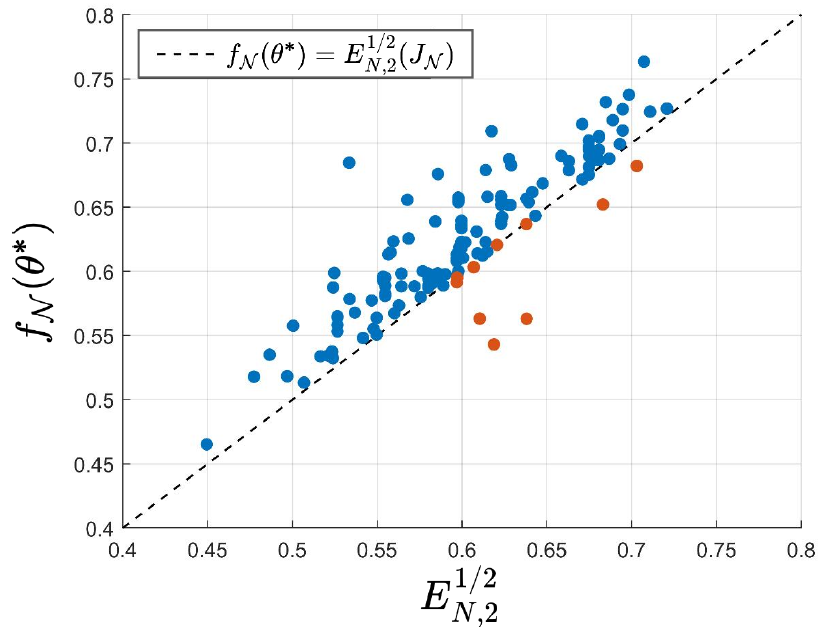}
    \caption{\textbf{Comparison between lower bounds for entanglement cost of two-qubit random mixed unitary channels.} Each point corresponds to a mixed unitary channel with four randomly generated unitaries according to the Haar measure. The $x$ coordinate of this point corresponds to $E_{N,2}^{1/2}(J_{\cN})$ and the $y$ coordinate of this point corresponds to $f_{\cN}({\bm \theta^*})$. The dashed line indicates the channels that hold $E_{N,2}^{1/2}(J_{\cN}) = f_{\cN}({\bm \theta^*)}$.}
    \label{fig:random_channel_bound_comparison}
\end{figure}

\begin{equation*}\label{Cir:hardware_efficient}
    \begin{array}{c}
        \centering
        \Qcircuit @C=.2em @R=.02em{
        & & & & & & & & & & & & & & \quad\quad\times D \\
        &\qw &\gate{R_y(\theta_{1,1})} &\ctrl{1} &\qw &\qw &\targ &\qw &\gate{R_y(\theta_{2,1})} &\ctrl{1} &\qw &\qw &\targ &\qw &\qw &\qw\\ 
        &\qw &\gate{R_y(\theta_{1,2})} &\targ &\ctrl{1} &\qw &\qw &\qw &\gate{R_y(\theta_{2,2})} &\targ &\ctrl{1} &\qw &\qw &\qw &\qw&\qw\\ 
        &\qw &\gate{R_y(\theta_{1,3})} &\qw &\targ &\ctrl{1} &\qw &\qw &\gate{R_y(\theta_{2,3})} &\qw &\targ &\ctrl{1} &\qw &\qw &\qw&\qw\\ 
        &\qw &\gate{R_y(\theta_{1,4})} &\qw &\qw &\targ &\ctrl{-3} &\qw &\gate{R_y(\theta_{2,4})} &\qw &\qw &\targ &\ctrl{-3} &\qw &\qw &\qw \gategroup{2}{9}{5}{13}{.5em}{--}
        \\ 
        }
    \end{array}
\end{equation*}
Here, $\bm \theta = (\theta_{1,1}, \theta_{1,2}, ...,\theta_{D+1,4})$, and the initial state of the circuit is set to $\ket{0}^{\ox n}$. The circuit acting on the initial state to produce the ansatz state $\ket{\psi_{AA'}(\bm \theta)}$. Then we set loss function $f_{\cN}(\bm{\theta})$ as
\begin{equation}
    f_{\cN}(\bm{\theta}) \coloneqq E_{\NB,2}^{1/2}\big((\cN_{A \rightarrow B} \otimes \cI_{A^{\prime}})\psi_{A A^{\prime}}(\bm{\theta})\big)
\end{equation}
and use gradient-free methods to maximize $f_{\cN}(\bm{\theta})$. As a consequence, we will get an optimized value $f(\bm{\theta}^*)$ after finite-iteration optimization and we have $E_C(\cN_{A \rightarrow B}) \ge f_{\cN}(\bm{\theta}^*)$. 

To show the potential improvement over $E_{\NB,2}^{1/2}(J_{\cN})$ offered by the proposed variational method, we conduct numerical experiments by random sampling two-qubit mixed unitary channels $\cU_{A\rightarrow B}(\rho) = \sum_{i=0}^3 p_i U_i \rho U_i^\dagger$, where $\sum_{i=0}^3 p_i= 1$ and $U_i$ are unitary operators on a two-qubit system. We set the parameters to be $[0.4, 0.4, 0.1, 0.1]$ and randomly generate unitaries according to the Haar measure. For each sampled mixed unitary channel, we compute the two lower bounds in Theorem 2. For the variational lower bound, we take $100$ optimization steps to reach the bound value at $\bm{\theta^*}$ for each sampled channel with a learning rate of $0.07$. In Fig.~\ref{fig:random_channel_bound_comparison}, we compare relative magnitudes between $f_{\cN}(\bm{\theta^*})$ and $E_{\NB,2}^{1/2}(J_\cN)$ where the black dashed line represents the cases $f_{\cN}(\bm{\theta^*})=E_{\NB,2}^{1/2}(J_\cN)$. 
Notably, for most of these mixed unitary channels, we have $f_{\cN}(\bm{\theta^*})>E_{\NB,2}^{1/2}(J_\cN)$.

\section{Lower bound on entanglement cost of bipartite channels}\label{appendix:bound_bi_channel}
We now generalize our theory to the bipartite quantum channels. 
\begin{definition}
Let $\varepsilon \geq 0$ and $\cN_{AB\rightarrow A'B'}$ be a bipartite channel. The one-shot entanglement cost of $\cN_{AB\rightarrow A'B'}$ with error $\varepsilon$ is defined as
\begin{equation}
    E_{C,\varepsilon}^{(1)}(\cN) \coloneqq \min\left\{\log k \, :\, \left\|\cN_{AB\rightarrow A'B'} - \cL_{A B \Bar{A}\Bar{B} \rightarrow A' B'}\big(\cdot \ox \Phi_{\Bar{A}\Bar{B}}^+(2^k)\big)\right\|_{\diamond}\leq \varepsilon, \, k\in \mathbb{N} \right\},
\end{equation}
where the minimization ranges over all LOCC operations $\cL_{A B \Bar{A}\Bar{B}}$ between Alice and Bob.
\end{definition}
The asymptotic entanglement cost of $\cN_{AB\rightarrow A'B'}$ is then defined as~\cite{Gour2020}, 
\begin{equation}
    E_C(\cN) \coloneqq \lim_{\varepsilon \rightarrow 0} \liminf_{n\rightarrow \infty} \frac{1}{n}E_{C, \varepsilon}^{(1)}(\cN^{\ox n}).
\end{equation}
We first establish the following relationship between the entanglement cost of a bipartite quantum channel and its Choi state. 
\begin{lemma}\label{lem:bchannel_cost_lb_choi_state}
    For a bipartite quantum channel $\cN_{AB\rightarrow A'B'}$,
    \begin{equation}
        E_{C}(\cN_{AB\rightarrow A'B'}) \geq E_C(J_{AA'BB'}^{\cN}),
    \end{equation}
    where $J^{\cN}_{AA'BB'}$ is the Choi state of $\cN_{AB\rightarrow A'B'}$ and the bipartite cut is between $AA'$ and $BB'$ for $E_C(J_{AA'BB'}^{\cN})$.
\end{lemma}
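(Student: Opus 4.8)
The plan is to show that any protocol simulating $\cN_{AB\to A'B'}$ up to diamond-norm error $\varepsilon$ using $k$ ebits automatically yields a state-preparation protocol for the Choi state $J^{\cN}_{AA'BB'}$ (with respect to the cut $AA':BB'$) using the same $k$ ebits and achieving trace-norm error controlled by $\varepsilon$; passing to the regularized/asymptotic limit then gives the claimed inequality. First, I would recall that the Choi state is obtained by sending half of a (local) maximally entangled state through the channel: writing $J^{\cN}_{AA'BB'} = (\cI_{A'B'}\otimes \cN_{AB\to \widetilde A\widetilde B})(\Phi^+_{A'A}\otimes \Phi^+_{B'B})$ with the appropriate relabeling, the key point is that the reference systems $A'$ and $B'$ sit on Alice's and Bob's sides respectively, so preparing $\Phi^+_{A'A}$ and $\Phi^+_{B'B}$ is a \emph{local} operation (no entanglement across the $AA':BB'$ cut is consumed to create the inputs). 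Thus any LOCC simulation $\cL_{AB\bar A\bar B\to A'B'}(\cdot\otimes\Phi^+_{\bar A\bar B}(2^k))$ of the channel, when fed the locally prepared input $\Phi^+_{A'A}\otimes\Phi^+_{B'B}$, produces a state that is $\varepsilon$-close (in trace norm, since the diamond norm dominates the trace-norm distance on any fixed input) to $J^{\cN}_{AA'BB'}$, and this whole procedure is an LOCC operation across $AA':BB'$ consuming exactly $k$ ebits.

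The main steps, in order: (1) fix the identification of the Choi state as a channel output on a locally-prepared maximally entangled input, being careful that the reference subsystems land on the correct party; (2) argue that local preparation of $\Phi^+_{A'A}$ and $\Phi^+_{B'B}$ plus the LOCC channel-simulation map is itself an LOCC state-preparation protocol for $J^{\cN}$ across the cut $AA':BB'$, so that $E^{(1)}_{C,\varepsilon}(\cN) \geq$ (one-shot state cost of $J^{\cN}$ at error $\varepsilon$) in the sense that feasibility transfers; (3) tensorize: a protocol for $\cN^{\otimes n}$ feeds into $n$ copies of the maximally entangled inputs and outputs something close to $(J^{\cN})^{\otimes n}$, again using only local input preparation, so $E^{(1)}_{C,\varepsilon}(\cN^{\otimes n}) \geq$ (one-shot state cost of $(J^{\cN})^{\otimes n}$ at error $\varepsilon$); (4) take $n\to\infty$ and then $\varepsilon\to 0$, using the definition $E_C(\cN)=\lim_{\varepsilon\to 0}\liminf_n \tfrac1n E^{(1)}_{C,\varepsilon}(\cN^{\otimes n})$ and the corresponding asymptotic definition of $E_C(J^{\cN}_{AA'BB'})$, to conclude $E_C(\cN_{AB\to A'B'})\geq E_C(J^{\cN}_{AA'BB'})$.

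The step I expect to require the most care is (2)--(3): one must check that ``feed a fixed input and read off the output'' genuinely converts a diamond-norm approximation of the channel into a trace-norm approximation of the Choi state that survives tensorization, i.e.\ that $\big\|\cN^{\otimes n} - \cL_n(\cdot\otimes\Phi^+(2^{k_n}))\big\|_\diamond\leq\varepsilon$ implies $\big\|(J^{\cN})^{\otimes n} - (\text{LOCC output})\big\|_1\leq\varepsilon$, which follows from the definition of the diamond norm as a stabilized trace-norm distance (the maximally entangled input is a valid — indeed the extremal — test state). One must also confirm that the composed map ``prepare $\Phi^+_{A'A}\otimes\Phi^+_{B'B}$ locally, then apply $\cL$'' lies in the LOCC class across $AA':BB'$ and consumes only the $k$ ebits already counted, so no hidden entanglement is smuggled in through the input preparation; this is where the placement of the reference systems on the two parties' sides is essential. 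With these points established the limiting argument in (4) is routine. Details are deferred, but this is the entire content of the lemma.
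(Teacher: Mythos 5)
Your proposal is correct and follows essentially the same route as the paper's proof: feed the locally prepared maximally entangled inputs $\Phi^+_{\hat A A}\ox\Phi^+_{\hat B B}$ into the LOCC simulation, use the fact that the diamond norm dominates the trace norm on this fixed input to get an $\varepsilon$-accurate LOCC dilution protocol for $J^{\cN}$ across the $AA':BB'$ cut with the same ebit count, then tensorize and pass to the asymptotic limit. The points you flag as delicate (reference systems landing on the correct parties, no hidden entanglement in the input preparation) are exactly the ones the paper's argument relies on.
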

\begin{proof}
    Let us start with the one-shot scenario for simulating one use of $\cN_{AB\rightarrow A'B'}$. Suppose $\cL_{A B\Bar{A}\Bar{B}\rightarrow A' B'}$ is the optimal LOCC operation in Eq.~\eqref{Eq:channel_entcost_def} for $\cN_{AB\rightarrow A'B'}$ with a one-shot entanglement cost $k$. It follows that
    \begin{equation}
        \max_{\rho_{\hat{A}\hat{B}AB}}\bigg\| \cI_{\hat{A}\hat{B}}\ox \cL_{AB\Bar{A}\Bar{B} \rightarrow A' B'}\big(\rho_{\hat{A}\hat{B}AB} \ox \Phi_{\Bar{A}\Bar{B}}^{+}(2^k)\big) - \cI_{\hat{A}\hat{B}}\ox \cN_{AB\rightarrow A'B'}(\rho_{\hat{A}\hat{B}AB}) \bigg\|_1 \leq \epsilon.
    \end{equation}
    Then considering the Choi state of $\cN_{AB\rightarrow A'B'}$ with a form $J^{\cN}_{\hat{A}A'\hat{B}B'} = (\cI_{\hat{A}\hat{B}}\ox \cN_{AB\rightarrow A'B'}) (\Phi^+_{\hat{A}A}\ox\Phi^+_{\hat{B}B})$, we have a LOCC protocol $\Lambda$ from $A$ to $B$ such that
    \begin{equation}
    \begin{aligned}
        & \bigg\| \Lambda\big(\Phi_{\Bar{A}\Bar{B}}^+(2^k)\big) - J^{\cN}_{\hat{A}A'\hat{B}B'}\bigg\|_1 \\
        = & \bigg\| (\cI_{\hat{A}\hat{B}}\ox \cL_{\Bar{A}\Bar{B} A B \rightarrow A' B'})\big(\Phi^+_{\hat{A}\hat{B}AB} \ox \Phi^{+}_{\Bar{A}\Bar{B}}(2^k)\big) - (\cI_{\hat{A}\hat{B}}\ox \cN_{AB\rightarrow A'B'})(\Phi^+_{\hat{A}\hat{B}AB}) \bigg\|_1 \\
        \leq &\; \epsilon,
    \end{aligned}
    \end{equation}
    where we abbreviate $\Phi^+_{\hat{A}\hat{B}AB} = \Phi^+_{\hat{A}A}\ox\Phi^+_{\hat{B}B}$. Hence, by the definition of the one-shot entanglement cost of quantum states, we have $\Lambda$ is a feasible protocol for $E_{C,\epsilon}(J^{\cN}_{\hat{A}A'\hat{B}B'})$ with consumption of $k$ ebits. Considering all possible protocols for converting the maximally entangled state to $J^{\cN}_{\hat{A}A'\hat{B}B'}$, we have $E_{C,\epsilon}^{(1)}(\cN) \geq E_{C,\epsilon}^{(1)}(J^{\cN}_{\hat{A}A'\hat{B}B'})$. Similarly, we can derive that for any $n \in \ZZ_+$ and $\varepsilon \in (0,1)$, 
    \begin{equation}
        E_{C,\epsilon}^{(1)}(\cN^{\ox n}) \geq E_{C,\epsilon}^{(1)}\big(J^{\cN}_{\hat{A}A'\hat{B}B'}\big),
    \end{equation}
    As a consequence, taking $n$ to arbitrarily large, the $\varepsilon$ to arbitrarily small, we have
    \begin{equation}\label{eq:one_shot_bipartite_cost_lb}
        E_{C}(\cN) = \lim_{\varepsilon \rightarrow 0} \liminf_{n\rightarrow \infty} \frac{1}{n} E_{C, \varepsilon}^{(1)}(\cN^{\ox n}) \geq \lim_{\varepsilon \rightarrow 0} \liminf_{n\rightarrow \infty} \frac{1}{n} E_{C, \varepsilon}^{(1)}\left((J^{\cN}_{\hat{A}A'\hat{B}B'})^{\ox n}\right)= E_C(J^{\cN}_{\hat{A}A'\hat{B}B'}),
    \end{equation}
    as desired.
\end{proof}

\begin{figure}[t]
    \centering
    \includegraphics[width = 0.45\linewidth]{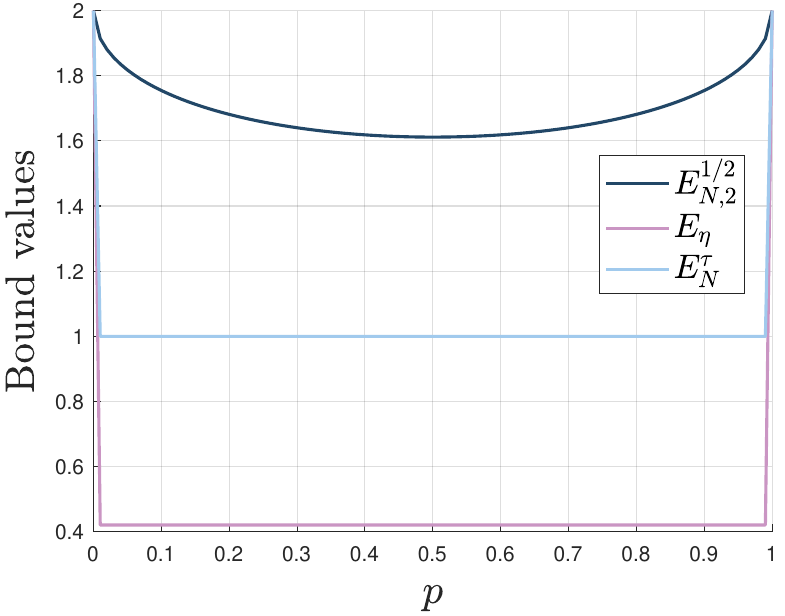}
    \includegraphics[width = 0.54\linewidth]{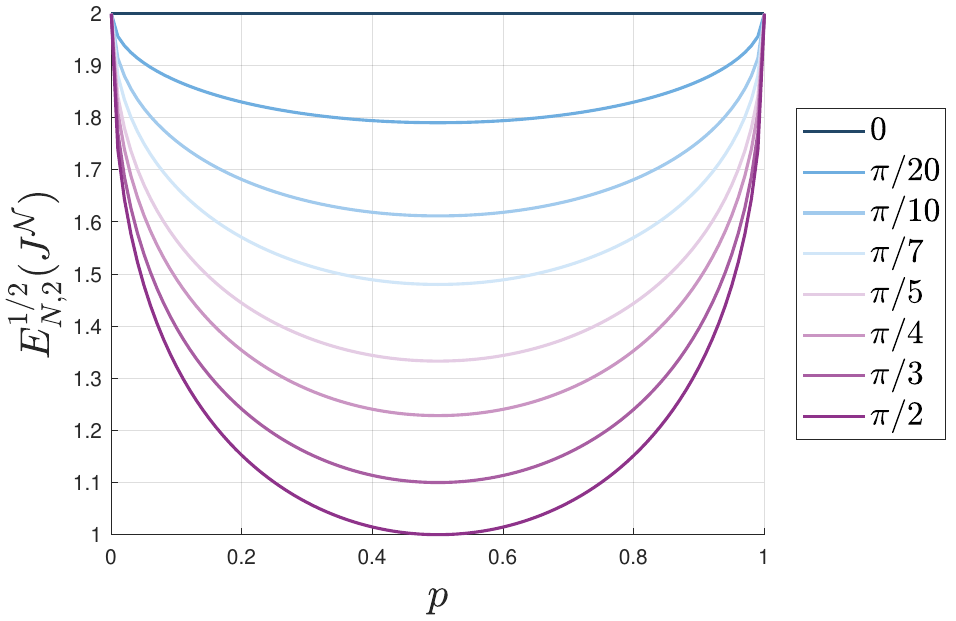}
    \caption{\textbf{Illustration on Choi state lower bounds for entanglement cost of the collective dephased SWAP operation $\cN_p$ versus the channel parameter $p$.} (Left) Comparison of the entanglement cost lower bounds $E_{\eta}(J^{\cN_p})$~\cite{Wang2016d}, $E^{\tau}_{N}(J^{\cN_p})$~\cite{Lami2023a} and $E^{1/2}_{N,2}(J^{\cN_p})$ based on the Choi state lemma, where the collective phase $\phi = \pi/10$; (Right) The variations of the logarithmic fidelity of binegativity lower bound with respect to different collective phases $\phi$.}
    \label{fig:noisy_bipartite_cost}
\end{figure}

\begin{remark}[~\cite{Das2020}]\label{rem:bicovariant_channel_cost}
    For any bipartite channel $\cN_{AB\rightarrow A'B'}$ that is bicovariant, then the (sequential) entanglement cost is given by
    $$E_{C}(\cN) = E_{C}(J^{\cN}_{AA'BB'})$$
\end{remark}

In the following, we also perform numerical calculations to illustrate the behaviors of our lower bound on the entanglement cost of bipartite channels relying on the logarithmic fidelity of binegativity of their Choi states. Exceptionally, we consider a qubit SWAP operation undertaken a collective dephasing noise by a phase $\phi$, which was first studied in the dissipation of quantum computation~\cite{Palma1996quantum}. The composite channel $\cN_p$ is defined as,
\begin{equation}
    \cN_p(\rho_{AB}) \coloneqq p \rho^S_{AB} + (1-p)U_{\phi} \rho^S_{AB}U^{\dagger}_{\phi}, \; U_{\phi} = 
    \begin{pmatrix}
        1 & 0 & 0 & 0\\
        0 & e^{i\phi} & 0 & 0\\
        0 & 0 & e^{i\phi} & 0\\
        0 & 0 & 0 & e^{2i\phi}
    \end{pmatrix},
\end{equation}
where $\rho^{S}_{AB} = \Op{SWAP} \rho_{AB} \Op{SWAP}$ for any $\rho_{AB}\in \mathscr{D}(\cH_{A}\ox\cH_{B})$. In terms of the Choi state lower bound, we have applied the state entanglement cost lower bounds discussed in previous sections~\cite{Lami2023a,Wang2016d} to the Choi state of bipartite channels regarding the partition $AA':BB'$, which also form entanglement cost lower bounds for bipartite channels by Lemma~\ref{lem:bchannel_cost_lb_choi_state}.

Our results are shown in Fig.~\ref{fig:noisy_bipartite_cost}. Compared to the other two bounds at $\phi = \pi/10$,  $E_{\NB,2}^{1/2}$ outperforms $E_{\eta}$ and $E_{N}^{\tau}$ ranging over almost all $p$ values in $(0,1)$, where the tempered negativity bound keeps a value around one and $E_{\eta}$ becomes even less than one. On the other hand, we observe a similar trend recorded from Ref.~\cite{Bauml2018} for the collective dephasing SWAP operation where the cost reaches the lowest at $p = 1/2$ due to the most uncertainty about whether the dephasing occurs. Interestingly, when $\phi = \pi/2$\footnote{Equivalent to a collective phase rotation by $\pi$ in Ref.~\cite{Bauml2018}}, our $E^{1/2}_{\NB,2}$ also achieves a reduction of a factor of $1/2$ at $p=1/2$, compared to the cost for an ideal SWAP operation.

\section{Alternative lower bound of Entanglement cost}\label{sec:alternative_lb}
We have also developed an alternative lower bound of entanglement cost using a similar methodology as for the logarithmic fidelity of binegativity bound.  
In particular, we can define another sub-state set which we denote $\PPT^\#_{k}(A:B)$ in the expression,
\begin{equation}
    \PPT^\#_{k}(A:B) := \Big\{\omega_1\in\mathscr{P}(\cH_{AB}): \exists\{\omega_i\}_{i=2}^k,\ \text{s.t.}\ \big|\omega_i^{T_B}\big| \preceq \omega_{i+1}, \forall i\in[1:k-1], \big\|\omega_{k}^{T_B}\big\|_1\le 1\Big\}.
\end{equation}
The set resembles $\Freek$ by replacing the $|\cdot|_*$ operation with the usual matrix absolute operation where $|A| = \sqrt{A^{\dagger} A}$ for any matrix $A$. One can also prove a similar inclusion relationship within the hash hierarchy. Note that this is the original $\PPT_k$ in the v1 and v2 of this arxiv manuscript. 
We sincerely thank the TQC2025 Program Committee for highlighting that our SDP representation of this set in the previous version was not tight. In this revision, we clarify the useful properties of $\PPT^{\#}_k(A:B)$ and demonstrate how these properties can be effectively leveraged to establish a semidefinite programming (SDP) lower bound on the entanglement cost.

\begin{proposition}
For a bipartite system $\cH_A\ox \cH_B$ and a positive integer $k \geq 2$, it holds that  
\begin{equation}
    \PPT(A:B) \subsetneq \PPT^{\#}_k(A:B) \subsetneq \cdots \subsetneq \PPT'(A:B).
\end{equation}
\end{proposition}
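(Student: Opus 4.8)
The plan is to follow the template of the proof of Proposition~\ref{prop:hirearchy}, with the estimate coming from Lemma~\ref{lem:trnormAtotrB} replaced by the sharper identity $\tr|X| = \|X\|_1$, valid for every Hermitian $X$. Two elementary facts will be used repeatedly: (i) for $\omega \succeq 0$ the operator $\omega^{T_B}$ is Hermitian and $\tr\omega = \tr\omega^{T_B} \le \|\omega^{T_B}\|_1$; and (ii) the defining relation $|\omega_i^{T_B}| \preceq \omega_{i+1}$ (with $|\cdot|$ the usual matrix absolute value) forces $\omega_{i+1}\succeq |\omega_i^{T_B}|\succeq 0$, so all the $\omega_i$ are automatically positive semidefinite, and taking traces gives $\|\omega_i^{T_B}\|_1 = \tr|\omega_i^{T_B}| \le \tr\omega_{i+1}$.

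With these in hand I would first prove the non-strict chain $\PPT(A:B)\subseteq \PPT^{\#}_k(A:B)\subseteq \cdots \subseteq \PPT^{\#}_2(A:B)\subseteq \PPT'(A:B)$, setting $\PPT^{\#}_1(A:B):=\PPT'(A:B)$ for uniformity. For the step $\PPT^{\#}_j \subseteq \PPT^{\#}_{j-1}$, take $\sigma_{AB}\in\PPT^{\#}_j$ with witnesses $\omega_2,\dots,\omega_j$, keep $\omega_2,\dots,\omega_{j-1}$ (the first $j-2$ relations are untouched) and replace the last condition by the bound $\|\omega_{j-1}^{T_B}\|_1 \le \tr\omega_j = \tr\omega_j^{T_B} \le \|\omega_j^{T_B}\|_1 \le 1$ obtained by chaining (ii) and (i); this yields $\sigma_{AB}\in\PPT^{\#}_{j-1}$ (for $j=2$ it reads $\|\sigma_{AB}^{T_B}\|_1 \le \tr\omega_2 \le \|\omega_2^{T_B}\|_1 \le 1$, i.e., membership in the Rains set). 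For $\PPT(A:B)\subseteq\PPT^{\#}_k(A:B)$, a PPT state $\rho_{AB}$ has $\rho_{AB}^{T_B}\succeq 0$, so the alternating tower $\omega_i=\rho_{AB}$ for odd $i$ and $\omega_i=\rho_{AB}^{T_B}$ for even $i$ makes every relation $|\omega_i^{T_B}|\preceq\omega_{i+1}$ an equality and satisfies $\|\omega_k^{T_B}\|_1 = \tr\rho_{AB}=1$.

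It remains to upgrade the inclusions to strict ones. The separation $\PPT\subsetneq\PPT^{\#}_k$ (and $\PPT\subsetneq\PPT'$) is witnessed by $\Phi^+_{AB}(d)/d$: its partial transpose is $S_d/d^2$ with $S_d$ the swap, which is not positive semidefinite, so it lies outside $\PPT(A:B)$, whereas the constant tower $\omega_i = I_{AB}/d^2$ ($i\ge 2$) places it in every $\PPT^{\#}_k(A:B)$. For the consecutive separations $\PPT^{\#}_j\subsetneq\PPT^{\#}_{j-1}$ I would introduce the homogeneous gauge $f_j(\sigma):=\inf\{\|\omega_j^{T_B}\|_1 : |\sigma^{T_B}|\preceq\omega_2,\; |\omega_i^{T_B}|\preceq\omega_{i+1}\ \text{for}\ 2\le i\le j-1\}$, which is monotone in $j$ and satisfies $\{f_j<1\}\subseteq \PPT^{\#}_j \subseteq \{f_j\le 1\}$; by homogeneity it then suffices to exhibit, for each $j$, an operator with $f_{j-1}(\sigma)<f_j(\sigma)$. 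For $j=2$ a short computation shows $f_2(\sigma)=f_1(\sigma)$ exactly when the binegativity $|\sigma^{T_B}|^{T_B}$ is positive semidefinite, and the higher steps are governed analogously by the positivity of iterated binegativities. The main obstacle is therefore to produce states whose (iterated) binegativity fails to be positive semidefinite; I expect to handle this with explicit examples in sufficiently large local dimension, the case $j=2$ being the well-known fact that $|\sigma^{T_B}|^{T_B}$ need not be positive semidefinite.
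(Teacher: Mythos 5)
Your treatment of the inclusions coincides with the paper's. The step $\PPT^{\#}_j\subseteq\PPT^{\#}_{j-1}$ is obtained by exactly the chaining $\|\omega_{j-1}^{T_B}\|_1\le\tr\omega_j=\tr\omega_j^{T_B}\le\|\omega_j^{T_B}\|_1\le 1$ that constitutes the paper's entire proof, and your alternating tower for $\PPT\subseteq\PPT^{\#}_k$ is the construction used in the `if' direction of Lemma~\ref{lem:LB_leq0_PPT_alter}. Up to that point everything is correct, and your witness $\Phi^+_{AB}(d)/d$ does separate $\PPT$ from $\PPT^{\#}_k$.

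The gap is in the consecutive strict inclusions $\PPT^{\#}_j\subsetneq\PPT^{\#}_{j-1}$, which you correctly identify as needing an argument (the paper's own proof silently ignores strictness and only establishes $\subseteq$, exactly as in Proposition~\ref{prop:hirearchy}) but then do not supply. Your gauge computation is sound: it shows that $\PPT^{\#}_2\subsetneq\PPT'$ holds if and only if there exists $\sigma\succeq 0$ whose binegativity $|\sigma^{T_B}|^{T_B}$ fails to be positive semidefinite. But you then defer this to ``explicit examples in sufficiently large local dimension'' and call the failure a well-known fact. It is not something you can take off the shelf: universal positivity of the binegativity is precisely the conjecture of Audenaert, Plenio and Eisert~\cite{Audenaert2003}, verified there in low dimensions, and you cite no counterexample. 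Your own equivalence cuts both ways: if the binegativity were always positive semidefinite, then $f_2=f_1$ identically, $\PPT^{\#}_2=\PPT'$, and the claimed strict inclusion would be false. So the single step you leave open is the entire mathematical content of the strictness assertion (and the higher iterated separations inherit the same difficulty). As written, your argument proves the non-strict chain --- which is all the paper proves as well --- but not the proposition with $\subsetneq$ as stated.
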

\begin{proof}
For any fixed $j$ and any sub-state $\sigma_{AB}\in \PPT^{\#}_{j}$, we have $|\sigma_{AB}^{T_B}| \leq \omega_2, |\omega_2^{T_B}| \leq \omega_3, \cdots,|\omega_{j-1}^{T_B}| \leq \omega_{j},\|\omega_{j}^{T_B}\|_1\le 1$. It follows that $\|\omega_{j-1}^{T_B}\|_1 \leq \tr\omega_j \leq \|\omega_{j}^{T_B}\|_1 \leq 1$ which indicates that $\sigma_{AB}\in \PPT^{\#}_{j-1}$.
\end{proof}
\begin{lemma}\label{lem:LB_leq0_PPT_alter}
For any non-negative integer $k$, a bipartite quantum state $\rho_{AB}$ is in $\PPT^{\#}_k$ if and only if $\rho_{AB}\in \PPT$.
\end{lemma}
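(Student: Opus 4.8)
The plan is to follow the proof of Lemma~\ref{lem:LB_leq0_PPT} almost verbatim, with the matrix absolute value $|\cdot|$ playing the role of the operation $|\cdot|_{*}$; the two implications are treated separately. For the ``if'' implication, assume $\rho_{AB}\in\PPT(A:B)$, so $\rho_{AB}^{T_B}\succeq 0$. I would exhibit an explicit feasible tuple: set $\omega_j:=\rho_{AB}$ for odd $j$ and $\omega_j:=\rho_{AB}^{T_B}$ for even $j$, $j=1,\dots,k$. Each $\omega_j$ is positive semidefinite (the even ones because $\rho_{AB}^{T_B}\succeq 0$), and since $X\succeq 0$ implies $|X|=X$, for every $i\in[1:k-1]$ we get $|\omega_i^{T_B}|=\omega_{i+1}$, so all defining inequalities hold (with equality). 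Finally $\omega_k^{T_B}$ is either $\rho_{AB}^{T_B}$ or $\rho_{AB}$, both positive semidefinite with unit trace, so $\|\omega_k^{T_B}\|_1=1$ and hence $\rho_{AB}\in\PPT^{\#}_k(A:B)$. (This inclusion is also asserted by the preceding Proposition.)

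For the ``only if'' implication, let $\rho_{AB}$ be a density operator lying in $\PPT^{\#}_k(A:B)$. Using the inclusion $\PPT^{\#}_k(A:B)\subseteq\PPT^{\#}_2(A:B)$ from the preceding Proposition, fix $\omega\in\mathscr{P}(\cH_{AB})$ with $|\rho_{AB}^{T_B}|\preceq\omega$ and $\|\omega^{T_B}\|_1\le 1$. Taking the trace of $\omega-|\rho_{AB}^{T_B}|\succeq 0$, and using that the partial transpose is trace-preserving together with $\tr Y\le\|Y\|_1$ for Hermitian $Y$, yields
\begin{equation*}
\|\rho_{AB}^{T_B}\|_1=\tr|\rho_{AB}^{T_B}|\le\tr\omega=\tr\omega^{T_B}\le\|\omega^{T_B}\|_1\le 1 .
\end{equation*}
On the other hand $\tr\rho_{AB}^{T_B}=\tr\rho_{AB}=1$. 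Since for Hermitian $Y$ one has $\|Y\|_1\ge\tr Y$, with equality if and only if $Y\succeq 0$, the chain $1=\tr\rho_{AB}^{T_B}\le\|\rho_{AB}^{T_B}\|_1\le 1$ forces equality throughout, so $\rho_{AB}^{T_B}\succeq 0$, i.e., $\rho_{AB}\in\PPT(A:B)$.

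The argument is entirely routine; the only step worth isolating — and the closest thing to an obstacle — is the equality case ``a Hermitian operator whose trace norm equals its trace is positive semidefinite'', which is precisely what upgrades the norm bound into positivity of the partial transpose. For the small values of $k$ where the inclusion into $\PPT^{\#}_2(A:B)$ is not available (e.g.\ $k\le 1$), the set is, under the natural conventions, contained in the Rains set $\PPT'(A:B)$, and the same trace-versus-trace-norm comparison applies verbatim, so no separate treatment is required.
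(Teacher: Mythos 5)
Your proposal is correct and follows essentially the same route as the paper: the same alternating construction $\omega_j\in\{\rho_{AB},\rho_{AB}^{T_B}\}$ for the ``if'' direction, and for the ``only if'' direction the same chain $\tr|\rho_{AB}^{T_B}|\le\tr\omega\le\|\omega^{T_B}\|_1\le 1$, which the paper phrases as a contradiction with $\|\rho_{AB}^{T_B}\|_1>1$ for NPT states while you phrase it directly via the equality case of $\tr Y\le\|Y\|_1$ — the same underlying fact. Your reduction to $\PPT^{\#}_2$ via the hierarchy and your remark on small $k$ are harmless cosmetic variations.
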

\begin{proof}
For the `if' part, if $\rho_{AB}\in \PPT$, we have $\rho_{AB}^{T_B}=\omega_1\geq 0$. Then we can construct a sequence of $\omega_{j}$ as $\omega_k = \rho_{AB}$ if $k$ is odd; $\omega_{k} = \omega_1$ if $k$ is even. Therefore, it follows that $\|\omega_{k}^{T_B}\|_1 \leq 1$ and $\rho_{AB}\in\PPT^{\#}_k$. For the `only if' part, supposing there is a $\rho_{AB}\in\PPT^{\#}_k$ and $\rho_{AB}\notin \PPT$, it follows that $\tr \omega_{2}\geq \tr|\rho_{AB}^{T_B}| > 1$. Notice that $\|\omega_k^{T_B}\|_1 \geq \tr\omega_2 > 1$, a contradiction. Hence the proof.
\end{proof}

{
\begin{lemma}
    For a positive integer $k\geq 2$, it holds that $\PPT^{\#}_k(A:B) \subseteq \PPT_k(A:B)$, where 
    \begin{align}
    \Freek \coloneqq \Big\{\omega_1 \in\mathscr{L}(\cH_{AB}): \omega_1\succeq 0,~\exists \{\omega_i\}_{i=2}^{k},~\mathrm{s.t.}
    |\omega_i^{T_B}|_* \preceq \omega_{i+1}, \forall i\in [1:k-1],~\big\|\omega_{k}^{T_B}\big\|_1\le 1\Big\}
    \end{align}
    and $|X|_{*} \preceq Y$ denotes $-Y \preceq X\preceq Y$.
\end{lemma}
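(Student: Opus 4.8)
The plan is to observe that a certifying family for membership in $\PPT^{\#}_k(A:B)$ is automatically a certifying family for membership in $\PPT_k(A:B)$, so the inclusion follows directly from unwinding the two definitions. The single analytic ingredient is the elementary fact that $-|X| \preceq X \preceq |X|$ for every Hermitian operator $X$.

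Concretely, I would start from an arbitrary $\omega_1 \in \PPT^{\#}_k(A:B)$ together with witnesses $\{\omega_i\}_{i=2}^k$ satisfying $|\omega_i^{T_B}| \preceq \omega_{i+1}$ for all $i \in [1:k-1]$ and $\|\omega_k^{T_B}\|_1 \le 1$. By definition $\omega_1 \succeq 0$, and since $0 \preceq |\omega_i^{T_B}| \preceq \omega_{i+1}$ we also get $\omega_2, \dots, \omega_k \succeq 0$; in particular each $\omega_i$ is Hermitian, hence so is its partial transpose $\omega_i^{T_B}$, which makes the matrix absolute value $|\omega_i^{T_B}| = \sqrt{(\omega_i^{T_B})^2}$ well-behaved. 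Using the spectral decomposition of $\omega_i^{T_B}$ one has $-|\omega_i^{T_B}| \preceq \omega_i^{T_B} \preceq |\omega_i^{T_B}|$, and combining with the hypothesis $|\omega_i^{T_B}| \preceq \omega_{i+1}$ gives
\begin{equation*}
-\omega_{i+1} \preceq -|\omega_i^{T_B}| \preceq \omega_i^{T_B} \preceq |\omega_i^{T_B}| \preceq \omega_{i+1},
\end{equation*}
that is, $|\omega_i^{T_B}|_* \preceq \omega_{i+1}$ for every $i \in [1:k-1]$. The remaining defining conditions, $\omega_1 \succeq 0$ and $\|\omega_k^{T_B}\|_1 \le 1$, are identical in the two definitions, so the same family $\{\omega_i\}_{i=2}^k$ witnesses $\omega_1 \in \PPT_k(A:B)$, which establishes the claimed inclusion.

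I do not expect a genuine obstacle in this argument; the only point that needs care is verifying that the operators $\omega_i^{T_B}$ to which $|\cdot|$ is applied are Hermitian, since the sandwiching $-|X| \preceq X \preceq |X|$ fails (and is not even meaningful) for non-Hermitian $X$. Once that is settled, everything reduces to the scalar inequality $-|\lambda| \le \lambda \le |\lambda|$ applied eigenspace by eigenspace. If desired, one could additionally note that the inclusion can be strict and that $\PPT^{\#}_k(A:B)$, being contained in $\PPT_k(A:B)$ and hence in the Rains set, still yields a legitimate (possibly looser) relaxation suitable for the SDP lower bound developed in this section.
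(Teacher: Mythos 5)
Your proposal is correct and follows essentially the same route as the paper's proof: both combine the hypothesis $|\omega_i^{T_B}| \preceq \omega_{i+1}$ with the elementary sandwich $-|X| \preceq X \preceq |X|$ for Hermitian $X$ to conclude $-\omega_{i+1} \preceq \omega_i^{T_B} \preceq \omega_{i+1}$, so the same witnesses certify membership in $\PPT_k(A:B)$. Your additional remark that one should first check Hermiticity of the $\omega_i^{T_B}$ (which follows from $\omega_i \succeq 0$) is a fair point of care that the paper leaves implicit.
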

\begin{proof}
    Let $\omega_1$ be an element of $\PPT^{\#}_k(A:B)$ such that $\exists \{\omega_i\}_{i=2}^k$ satisfying $|\omega_i^{T_B}| \preceq \omega_{i+1}$. This then gives, $|\omega_i^{T_B}| - \omega_{i+1} \preceq 0 \preceq \omega_{i+1} - |\omega_i^{T_B}|$. Notice that $-|\omega_i^{T_B}| \preceq \omega_i^{T_B} \preceq |\omega_i^{T_B}|$. Add this inequality to the previous one to derive $-\omega_{i+1} \preceq \omega_i^{T_B} \preceq \omega_{i+1}$, and hence, $\omega_1\in \PPT_k(A:B)$ as required.
\end{proof}
}

{We can also define an entanglement quantifier called the \textit{generalized divergence of hash $k$-negativity} of a given bipartite quantum state $\rho_{AB}$ acting on the composite Hilbert space $\cH_{AB}$ as,
\begin{equation}
    E_{\#, k}(\rho_{AB}) = \min_{\sigma_{AB}\in \PPT^\#_{k}(A:B)} \mathbf{D}(\rho_{AB}||\sigma_{AB}).
\end{equation}
The quantifier can be shown to be faithful regarding the $\PPT$ states since it is never negative and equals zero if and only if $\rho_{AB}$ is $\PPT$. 
\begin{lemma}[Faithfulness of $E_{\#,k}$]\label{lem:faithfulness_alter}
    For a bipartite quantum state $\rho_{AB}\in\mathscr{D}(\cH_A\ox\cH_B)$, $E_{\#,k}(\rho_{AB})\geq 0$ and $E_{\#,k}(\rho_{AB}) = 0$ if and only if $\rho_{AB}\in \PPT(A:B)$.
\end{lemma}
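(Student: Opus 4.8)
The plan is to transcribe the argument already used for Lemma~\ref{lem:faithfulness}, replacing the original hierarchy by the hash hierarchy. First I would prove non-negativity. Fix any $\sigma_{AB}\in\PPT^{\#}_k(A:B)$ with witnesses $\{\omega_i\}_{i=2}^k$ (set $\omega_1=\sigma_{AB}$). The defining inequalities $|\omega_i^{T_B}|\preceq\omega_{i+1}$ give $\|\omega_i^{T_B}\|_1=\tr|\omega_i^{T_B}|\le\tr\omega_{i+1}=\tr\omega_{i+1}^{T_B}\le\|\omega_{i+1}^{T_B}\|_1$ for each $i\in[1:k-1]$, using $\omega_{i+1}\succeq 0$; chaining these with $\|\omega_k^{T_B}\|_1\le 1$ yields $\tr\sigma_{AB}=\tr\omega_1^{T_B}\le\|\omega_1^{T_B}\|_1\le 1$. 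Since any generalized divergence satisfies $\mathbf{D}(\rho\|\sigma)\ge 0$ whenever $\tr\sigma\le 1$, it follows that $E_{\#,k}(\rho_{AB})=\min_{\sigma_{AB}\in\PPT^{\#}_k(A:B)}\mathbf{D}(\rho_{AB}\|\sigma_{AB})\ge 0$.

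Next I would handle the two implications. For the ``if'' direction, I would use Lemma~\ref{lem:LB_leq0_PPT_alter} (equivalently the inclusion $\PPT(A:B)\subseteq\PPT^{\#}_k(A:B)$): if $\rho_{AB}\in\PPT(A:B)$ then $\rho_{AB}$ itself is a feasible point of $\PPT^{\#}_k(A:B)$, so $E_{\#,k}(\rho_{AB})\le\mathbf{D}(\rho_{AB}\|\rho_{AB})=0$, and combined with non-negativity this forces $E_{\#,k}(\rho_{AB})=0$. For the ``only if'' direction, suppose $E_{\#,k}(\rho_{AB})=0$; the minimum is attained since the feasible set is compact, so there is $\sigma_{AB}\in\PPT^{\#}_k(A:B)$ with $\mathbf{D}(\rho_{AB}\|\sigma_{AB})=0$, hence $\rho_{AB}=\sigma_{AB}$ by faithfulness of $\mathbf{D}$, so $\rho_{AB}\in\PPT^{\#}_k(A:B)$, and Lemma~\ref{lem:LB_leq0_PPT_alter} then gives $\rho_{AB}\in\PPT(A:B)$.

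The main obstacle is not the combinatorics of the hierarchy (that part is a routine copy of the $E_{\NB,k}$ proof), but rather making explicit the two abstract properties of $\mathbf{D}$ that are being used: that $\mathbf{D}(\rho\|\sigma)\ge 0$ for sub-normalized $\sigma$, and that $\mathbf{D}(\rho\|\sigma)=0$ implies $\rho=\sigma$. The former follows from the axioms by writing $\sigma=(\tr\sigma)\widehat\sigma$ with $\widehat\sigma$ a state, applying the logarithmic scaling relation to get $\mathbf{D}(\rho\|\sigma)=\mathbf{D}(\rho\|\widehat\sigma)-\log\tr\sigma\ge\mathbf{D}(\rho\|\widehat\sigma)\ge 0$, where non-negativity on states comes from the data-processing inequality under the trace-out channel. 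The faithfulness property is the same hypothesis already invoked in Lemma~\ref{lem:faithfulness} and holds for the sandwiched Rényi divergences of interest; I would state it as an inherited assumption on $\mathbf{D}$, so that the proof of this lemma is literally the proof of Lemma~\ref{lem:faithfulness} with $\PPT_k$, Lemma~\ref{lem:LB_leq0_PPT} replaced by $\PPT^{\#}_k$, Lemma~\ref{lem:LB_leq0_PPT_alter}.
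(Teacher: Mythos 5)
Your proposal is correct and follows essentially the same route as the paper's proof: non-negativity from $\tr\sigma\le 1$ on the hash hierarchy, the ``if'' direction from the inclusion $\PPT\subseteq\PPT^{\#}_k$, and the ``only if'' direction from faithfulness of $\mathbf{D}$ together with Lemma~\ref{lem:LB_leq0_PPT_alter}. You merely spell out a few steps the paper leaves implicit (the trace chain, attainment of the minimum, and the derivation of $\mathbf{D}(\rho\|\sigma)\ge 0$ for sub-normalized $\sigma$), which is fine.
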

\begin{proof}
    $E_{\#,k}(\rho_{AB})\geq 0$ can be directly obtained by $\mathbf{D}(\rho\|\sigma) \geq 0$ when $\tr \sigma \leq 1$. For the ‘if’ part, since $\PPT\subsetneq\PPT^{\#}_k$, we have $E_{\#,k}(\rho_{AB}) \le \mathbf{D}(\rho_{AB}||\rho_{AB}) = 0$ which gives $E_{\#,k}(\rho_{AB}) = 0$. For the ‘only if’ part, since $\mathbf{D}(\rho_{AB}\|\sigma_{AB}) = 0$ if and only if $\rho_{AB} = \sigma_{AB}$, we conclude $\rho_{AB}\in \PPT$ by Lemma~\ref{lem:LB_leq0_PPT_alter}.
\end{proof}}

{
With the generalized divergence is the \textit{sandwiched R{\'{e}}nyi relative entropy}~\cite{Muller_Lennert2013,Wilde2014a}. We can obtain a family of \textit{R{\'{e}}nyi-$\alpha$ divergence of hash $k$-negativity} as 
\begin{equation}
    E_{\#, k}^{\alpha}(\rho_{AB}) = \min_{\sigma_{AB}\in\PPT^\#_{k}(A:B)} \widetilde{D}_{\alpha}(\rho_{AB}||\sigma_{AB}).
\end{equation}
In particular, taking $k=2$, $\alpha=1/2$, we define the \textit{hash logarithmic fidelity of binegativity} as 
\begin{equation}\label{eq:hash_binegativity}
   E_{\#,2}^{1/2}(\rho_{AB}) = - \log \max_{\sigma_{AB}\in\PPT^{\#}_2(A:B) }F(\rho_{AB},\sigma_{AB}),
\end{equation}
where $F(\rho, \sigma)=\left(\tr\left[(\sqrt{\sigma}\rho\sqrt{\sigma})^\frac{1}{2}\right]\right)^2$ is the Uhlmann’s fidelity between $\rho$ and $\sigma$. In contrast with the no-hash scenario, the quantifier $E^{1/2}_{\#,2}$ cannot be directly estimated via SDP. Despite that, we can modify the optimization problem in Eq.~\eqref{eq:hash_binegativity} and make a proper lower bound of $E^{1/2}_{\#,2}$ that can be estimated via SDP.}

\begin{proposition}
For a bipartite quantum state $\rho_{AB}\in\mathscr{D}(\cH_A\ox\cH_B)$, its entanglement cost is lower bounded by
\begin{subequations}
\begin{align*}
    E_{C}(\rho_{AB})\geq \ECPPT(\rho_{AB})\geq E^{1/2}_{\#,2}(\rho_{AB}) \geq -2\log \widehat{f}_{\NB,2}(\rho_{AB}) \coloneqq -2\log \max & \;\; \frac{1}{2}\tr(X_{AB} + X_{AB}^\dagger)\\
     {\rm s.t.}   &\;\; C_{AB},D_{AB},M_{AB},N_{AB}\succeq 0,\\
     &\;\; C_{AB} + D_{AB} = M_{AB}^{T_B} - N_{AB}^{T_B},\\
     &\; \tr(M_{AB} + N_{AB}) \leq 1,\\
     &\; \left(\begin{array}{cc}
         \rho_{AB} & X_{AB} \\
         X_{AB}^\dagger & C^{T_B}_{AB} - D_{AB}^{T_B}
     \end{array}\right) \succeq 0.
\end{align*}
\end{subequations}
\end{proposition}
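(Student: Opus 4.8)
The plan is to establish the three inequalities of the displayed chain one at a time, following the structure of the sketch of Theorem~\ref{thm:min_thm}. The leftmost inequality $E_C(\rho_{AB})\geq\ECPPT(\rho_{AB})$ is immediate: since $\LOCC\subseteq\PPT$, every LOCC dilution protocol is also a PPT protocol, so the infimum defining $\ECPPT$ runs over a superset of the one defining $E_C=E_{C,\LOCC}$; this is already recorded in the preliminaries.

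For the middle inequality $\ECPPT(\rho_{AB})\geq E^{1/2}_{\#,2}(\rho_{AB})$, first recall that $\alpha\mapsto\widetilde D_\alpha$ is non-decreasing, so $D(\omega\|\tau)\geq\widetilde D_{1/2}(\omega\|\tau)=-\log F(\omega,\tau)$ for all $\omega,\tau$. By the hash hierarchy (Lemma~\ref{lem:LB_leq0_PPT_alter} and its accompanying proposition) one has $\PPT(A:B)\subseteq\PPT^{\#}_2(A:B)$, hence for every bipartite state $\omega$, $\ERPPT(\omega)=\min_{\tau\in\PPT}D(\omega\|\tau)\geq-\log\max_{\tau\in\PPT^{\#}_2}F(\omega,\tau)=E^{1/2}_{\#,2}(\omega)$. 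Feeding $\omega=\rho_{AB}^{\ox n}$ into the standard bound $\ECPPT(\rho)\geq\ERPPTinf(\rho)=\inf_{n}\frac1n\ERPPT(\rho^{\ox n})$ gives $\ECPPT(\rho)\geq\inf_n\frac1n E^{1/2}_{\#,2}(\rho^{\ox n})$. It remains to pass from this regularization back to the single-letter quantity. One half of additivity of $E^{1/2}_{\#,2}$ is free, since $\PPT^{\#}_2$ is closed under tensor products (using $|A\ox B|=|A|\ox|B|$ and multiplicativity of $\|\cdot\|_1$ on products), so the primal tensor-product construction in the proof of Lemma~\ref{lem:additiv} transfers directly. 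For the matching direction the cleanest route is to avoid $E^{1/2}_{\#,2}$ on tensor powers altogether: by $\PPT^{\#}_2\subseteq\Freesec$ (the lemma comparing $\PPT^{\#}_k$ with $\PPT_k$) we have $E^{1/2}_{\#,2}(\rho^{\ox n})\geq E^{1/2}_{\NB,2}(\rho^{\ox n})$, and $E^{1/2}_{\NB,2}$ is additive by Lemma~\ref{lem:additiv}, so $\inf_n\frac1n E^{1/2}_{\#,2}(\rho^{\ox n})\geq\inf_n\frac1n E^{1/2}_{\NB,2}(\rho^{\ox n})=E^{1/2}_{\NB,2}(\rho)$, which coincides with the last term of the chain once the third step is in place; the literal statement with $E^{1/2}_{\#,2}(\rho)$ in the middle then follows if $E^{1/2}_{\#,2}$ is actually additive, which I expect but which is the one genuinely non-routine point.

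For the rightmost inequality $E^{1/2}_{\#,2}(\rho_{AB})\geq-2\log\widehat f_{\NB,2}(\rho_{AB})$, I would show the $\widehat f_{\NB,2}$-program is an SDP outer relaxation of the fidelity maximization over $\PPT^{\#}_2$. Take any $\sigma\in\PPT^{\#}_2$ with witness $\omega_2$, i.e.\ $|\sigma^{T_B}|\preceq\omega_2$ and $\|\omega_2^{T_B}\|_1\leq1$. Put $C=\frac12(\omega_2+\sigma^{T_B})$ and $D=\frac12(\omega_2-\sigma^{T_B})$: both are positive semidefinite since $|\sigma^{T_B}|\preceq\omega_2$ forces $\pm\sigma^{T_B}\preceq\omega_2$, and by construction $C-D=\sigma^{T_B}$, $C+D=\omega_2$, and $C^{T_B}-D^{T_B}=\sigma$. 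Let $M,N$ be the positive and negative parts of $\omega_2^{T_B}$; then $M,N\succeq0$, $M^{T_B}-N^{T_B}=\omega_2=C+D$, and $\tr(M+N)=\|\omega_2^{T_B}\|_1\leq1$. Finally let $X$ be the Uhlmann-optimal off-diagonal block so that the $2\times2$ block operator with diagonal blocks $\rho_{AB},\sigma$ and off-diagonal block $X$ is positive semidefinite and $\frac12\tr(X+X^\dagger)=\sqrt{F(\rho_{AB},\sigma)}$. The tuple $(C,D,M,N,X)$ is feasible for the $\widehat f_{\NB,2}$-program, so $\widehat f_{\NB,2}(\rho_{AB})\geq\sqrt{F(\rho_{AB},\sigma)}$; maximizing over $\sigma\in\PPT^{\#}_2$ yields $\widehat f_{\NB,2}(\rho_{AB})\geq\max_{\sigma\in\PPT^{\#}_2}\sqrt{F(\rho_{AB},\sigma)}$, which is precisely $E^{1/2}_{\#,2}(\rho_{AB})\geq-2\log\widehat f_{\NB,2}(\rho_{AB})$. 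The same bookkeeping shows the reverse inclusion, so the program's substate variable $\sigma=C^{T_B}-D^{T_B}$ actually ranges over exactly the Rains-type set $\Freesec$, giving $-2\log\widehat f_{\NB,2}(\rho_{AB})=E^{1/2}_{\NB,2}(\rho_{AB})$, consistent with the previous paragraph.

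The main obstacle is the regularization inside the middle inequality: because the $\PPT^{\#}_k$ sets were introduced precisely without a tight SDP form, the dual-SDP tensor trick used for the hard half of Lemma~\ref{lem:additiv} does not obviously prove superadditivity of $E^{1/2}_{\#,2}$, so I would fall back on descending to $E^{1/2}_{\NB,2}$ via $\PPT^{\#}_2\subseteq\Freesec$, which already produces the stated SDP-computable bound. The remaining pieces — the two inclusions $\PPT\subseteq\PPT^{\#}_2\subseteq\Freesec$, monotonicity of $\widetilde D_\alpha$ in $\alpha$, the $\ECPPT\geq\ERPPTinf$ bound, and the Uhlmann SDP for fidelity — are routine.
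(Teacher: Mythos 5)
Your argument is correct in substance and reproduces the paper's single-copy reductions, but it departs from the paper at the two load-bearing steps, in one case for the better. For the rightmost inequality the paper also maps a feasible pair $(\sigma,\omega)$ of the $\PPT^{\#}_2$ fidelity program to a feasible point of the $\widehat f_{\NB,2}$ SDP, but it takes $C,D$ to be the Jordan (positive/negative) parts of $\sigma^{T_B}$, so that $C+D=|\sigma^{T_B}|$, which meets the equality constraint $C+D=M^{T_B}-N^{T_B}=\omega$ only in the special case $\omega=|\sigma^{T_B}|$; your choice $C=\tfrac12(\omega+\sigma^{T_B})$, $D=\tfrac12(\omega-\sigma^{T_B})$, with $M,N$ the Jordan parts of $\omega^{T_B}$, is the bookkeeping that actually works for a general witness $\omega\succeq|\sigma^{T_B}|$. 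For the regularization, the paper does not invoke additivity of $E^{1/2}_{\NB,2}$ at all: it derives the dual SDP of $\widehat f_{\NB,2}$ and tensors dual-feasible solutions to prove $\widehat f_{\NB,2}(\rho_0\ox\rho_1)\le\widehat f_{\NB,2}(\rho_0)\,\widehat f_{\NB,2}(\rho_1)$, then concludes via $\ECPPT(\rho)\ge\lim_n\frac1n E_R(\rho^{\ox n})$. Your route---descending from $\PPT^{\#}_2$ to $\Freesec$ and reusing Lemma~\ref{lem:additiv}---is equally valid and shorter given that lemma, and your observation that the substate variable $C^{T_B}-D^{T_B}$ of the $\widehat f_{\NB,2}$ program ranges exactly over $\Freesec$ (so that $-2\log\widehat f_{\NB,2}=E^{1/2}_{\NB,2}$) checks out and is not recorded in the paper; the paper's dual-SDP argument buys self-containedness and would survive even if the two programs were not equal in value. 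Finally, you are right that the literal middle inequality $\ECPPT(\rho_{AB})\ge E^{1/2}_{\#,2}(\rho_{AB})$ hinges on superadditivity of $E^{1/2}_{\#,2}$ over tensor powers, which neither you nor the paper establishes: the paper's own chain only delivers $\ECPPT(\rho_{AB})\ge-\log\widehat f_{\NB,2}(\rho_{AB})$ together with the single-copy relations $\ERPPT(\rho_{AB})\ge E^{1/2}_{\#,2}(\rho_{AB})\ge-2\log\widehat f_{\NB,2}(\rho_{AB})$, so your explicit flagging of that point reflects a genuine gap in the stated chain rather than a deficiency of your proof relative to the paper's.
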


\begin{proof}
Followed from the same idea in Theorem~\ref{thm:min_thm}, we have
\begin{equation*}
    E_{R,\PPT}(\rho_{AB}) \geq - \max_{\sigma_{AB}\in\PPT(A:B)} \log F(\rho_{AB},\sigma_{AB}),
\end{equation*}
where $E_{R,\PPT}(\rho_{AB})$ is the PPT relative entropy of entanglement. 
Based on  Lemma~\ref{lem:faithfulness_alter}, we can have that,
\begin{subequations}
\begin{align*}
    -\max_{\sigma_{AB}\in\PPT(A:B)} \log F(\rho_{AB},\sigma_{AB}) \geq -\max_{\sigma_{AB}\in\PPT^{\#}_2(A:B)} \log F(\rho_{AB},\sigma_{AB}) = -2\log \max & \;\; \frac{1}{2}\tr(X_{AB} + X_{AB}^\dagger)\\
     {\rm s.t.}   &\;\; \sigma_{AB}\succeq 0,~|\sigma_{AB}^{T_B}|\preceq \omega_{AB},\\ &\;\; ~\tr|\omega_{AB}^{T_B}| \leq 1,\\
     &\; \left(\begin{array}{cc}
         \rho_{AB} & X_{AB} \\
         X_{AB}^\dagger & \sigma_{AB}
     \end{array}\right) \succeq 0.
\end{align*}
\end{subequations}

For any feasible solution $\{X_{AB},\sigma_{AB}, \omega_{AB}\}$, {let $\sigma^{T_B}_{AB} = P_{AB} - Q_{AB}, \omega_{AB}^{T_B} = R_{AB} - S_{AB}$ where $P_{AB},Q_{AB}, R_{AB}, S_{AB} \succeq 0$ are the eigen-decompositions of $\sigma^{T_B}_{AB}$ and $\omega^{T_B}_{AB}$, respectively. We observe that $\{C_{AB} = P_{AB}, D_{AB} = Q_{AB}, M_{AB} = R_{AB},N_{AB}=S_{AB}\}$} is a feasible solution of the following SDP.
\begin{subequations}
\begin{align*}
\widehat{f}_{\NB,2}(\rho_{AB}) = \max & \;\; \frac{1}{2}\tr(X_{AB} + X_{AB}^\dagger)\\
{\rm s.t.}   &\;\; C_{AB},D_{AB},M_{AB},N_{AB}\succeq 0,\\
&\;\; C_{AB} + D_{AB} = M_{AB}^{T_B} - N_{AB}^{T_B},\\
&\; \tr(M_{AB} + N_{AB}) \leq 1,\\
&\; \left(\begin{array}{cc}
\rho_{AB} & X_{AB} \\
X_{AB}^\dagger & C^{T_B}_{AB} - D_{AB}^{T_B}
\end{array}\right) \succeq 0.
\end{align*}
\end{subequations}
{Therefore, we have $E^{1/2}_{\#,2}(\rho_{AB}) \geq -2\log\widehat{f}_{\NB,2}(\rho_{AB})$.} Now, we will show $\widehat{f}_{\NB,2}(\rho_{AB})$ is submultiplicative under tensor product, i.e.,
\begin{equation}
    \widehat{f}_{\NB,2}(\rho_0\ox\rho_1) \leq \widehat{f}_{\NB,2}(\rho_0)  \widehat{f}_{\NB,2}(\rho_1),
\end{equation}
by first writing out its SDP dual programming. By the Lagrangian method, we have the following dual problem.
\begin{equation}\label{SDP:append_dual}
\begin{aligned}
    \min & \;\; t\\
    {\rm s.t.} &\;\; \tr[Q_{AB}\rho_{AB}] = t,\\
    &\;\; -S_{AB} \preceq R_{AB}^{T_B} \preceq S_{AB},\\
    &\;\; -t I_{AB} \preceq S_{AB}^{T_B} \preceq t I_{AB},\\
    &\;\; \left(\begin{array}{cc}
         Q_{AB} & -I_{AB} \\
         -I_{AB} & R_{AB}
     \end{array}\right) \succeq 0,
\end{aligned}
\end{equation}
By Slater's condition, we note that the above two optimization programs satisfy strong duality, and both evaluate to $\widehat{f}_{\NB,2}(\rho_{AB})$.

If we assume the optimal solution to the SDP~\eqref{SDP:append_dual} for $\widehat{f}_{\NB,2}(\rho_0)$ and $\widehat{f}_{\NB,2}(\rho_1)$ are $\{Q_0,R_0,S_0\}$ and $\{Q_1,R_1,S_1\}$, respectively, then it follows
\begin{equation*}
    (Q_0\ox Q_1)(R_0\ox R_1) \succeq I\ox I,
\end{equation*}
where we used the fact that $Q_0R_0 \succeq I, Q_1R_1 \succeq I\implies Q_0R_0\ox Q_1R_1 \succeq I\ox Q_1R_1 \succeq I\ox I$. Next, it is easy to check that
\begin{equation*}
\begin{aligned}
& S_0 \ox S_1 + R_0^{T_B} \ox R_1^{T_{B'}} = \frac{1}{2}[(S_0 + R_0^{T_B})\ox (S_1 + R_1^{T_B'}) + (S_0 - R_0^{T_B})\ox (S_1 - R_1^{T_{B'}})] \succeq 0\\
& S_0 \ox S_1 - R_0^{T_B} \ox R_1^{T_{B'}} = \frac{1}{2}[(S_0 + R_0^{T_B})\ox (S_1 - R_1^{T_B'}) + (S_0 - R_0^{T_B})\ox (S_1 + R_1^{T_{B'}})] \succeq 0.
\end{aligned}
\end{equation*}
Then we have $-S_0 \ox S_1\preceq R_0^{T_B} \ox R_1^{T_{B'}} \preceq S_0 \ox S_1$. Also, we note that
\begin{equation*}
\begin{aligned}
    \|S_0^{T_B} \ox S_1^{T_B}\|_{\infty} \leq \|S_0^{T_B}\|_{\infty}\|S_1^{T_B}\|_{\infty}\leq \tr(Q_0\rho_0)\tr(Q_1\rho_1),
\end{aligned}
\end{equation*}
which gives 
\begin{equation*}
\begin{aligned}
& -\tr[(Q_0\ox Q_1)(\rho_0\ox \rho_1)]\cdot I\ox I \preceq S_0^{T_B} \ox S_1^{T_B} \preceq \tr[(Q_0\ox Q_1)(\rho_0\ox \rho_1)]\cdot I\ox I.
\end{aligned}
\end{equation*}
Therefore, $\{Q_0\ox Q_1, R_0\ox R_1, S_0\ox S_1\}$ is a feasible solution to the SDP of $\widehat{f}_{\NB,2}(\rho_0\ox\rho_1)$ which implies
\begin{equation}\label{Eq:F_sup_add_alt}
    \widehat{f}_{\NB,2}(\rho_0\ox\rho_1) \leq \tr(Q_0\rho_0)\tr(Q_1\rho_1) = \widehat{f}_{\NB,2}(\rho_0)  \widehat{f}_{\NB,2}(\rho_1).
\end{equation}
Consequently, we have
\begin{equation}
\begin{aligned}
    E_{C,\PPT}(\rho_{AB}) &\geq \lim_{n\rightarrow \infty} \frac{1}{n}E_{R}(\rho_{AB}^{\ox n})\\
    &\geq \lim_{n\rightarrow \infty} \frac{1}{n} -\log\widehat{f}_{\NB,2}(\rho_{AB}^{\ox n}) \geq -\log\widehat{f}_{\NB,2}(\rho_{AB}),
\end{aligned}
\end{equation}
where the first inequality follows from Ref.~\cite[p. 421]{Hayashi2006a} and the equality is a consequence of the submultiplicativity of $\widehat{f}_{\NB,2}(\cdot)$.
\end{proof}

\end{document}